\newtheorem{proposition}{Proposition}
\shorttitle{Zonal Jets and Shearless Tori}
\title{Detecting Shearless Phase-Space Transport Barriers in Global Gyrokinetic Turbulence Simulations with Test Particle Map Models}
\author{Norman M. Cao\aff{1}
  \corresp{\email{norman.cao@austin.utexas.edu}},
  Hongxuan Zhu\aff{2,3},
  Gabriel C. Grime\aff{1,4}
  \and Timothy Stoltzfus-Dueck\aff{5}}
\affiliation{\aff{1}Institute for Fusion Studies, The University of Texas at Austin, Austin, TX 78712, USA
\aff{2}{School of Physics, Zhejiang University, Hangzhou 310027, China}
\aff{3}Department of Astrophysical Sciences, Princeton University, Princeton, NJ 08544, USA
\aff{4}Institute of Physics, University of S\~{a}o Paulo, S\~{a}o Paulo, SP, 05508-220, Brazil
\aff{5}Princeton Plasma Physics Laboratory, Princeton, NJ 08540, USA}
\begin{document}

\maketitle

\begin{abstract}
In magnetically confined fusion plasmas, the role played by zonal \(E \times B\) flow shear layers in the suppression of turbulent transport is relatively well-understood.
However, less is understood about the role played by the weak shear regions that arise in the non-monotonic radial electric field profiles often associated with these shear layers.
In electrostatic simulations from the global total-\(f\) gyrokinetic particle-in-cell code XGC, we demonstrate how shearless regions with non-zero flow curvature form zonal ``jets'' that, in conjunction with neighboring regions of shear, can act as robust barriers to particle transport and turbulence spreading.
By isolating quasi-coherent fluctuations radially localized to the zonal jets, we construct a map model for the Lagrangian dynamics of gyrokinetic test particles in the presence of drift waves.
We identify the presence of shearless invariant tori in this model and verify that these tori act as partial phase-space transport barriers in the simulations.
We also demonstrate how avalanches impinging on these shearless tori cause eddy detachment events that form ``cold/warm core ring'' structures analogous to those found in oceanic jets, facilitating transport across the barriers without destroying them completely.
We discuss how shearless tori may generically arise from tertiary instabilities or other types of discrete eigenmodes, suggesting their potential relevance to broader classes of turbulent fluctuations.
\end{abstract}

\section{Introduction}

Sheared \(E \times B\) flows arising from the toroidally symmetric component of the radial electric field \(E_r\), also known as zonal flows, are central to turbulence regulation in tokamak plasmas \citep{diamond_zonal_2005,burrell_role_2020,staebler_quasilinear_2024}.
A key thrust in magnetic confinement fusion is thus to understand the mechanisms which determine the coupled intensities of zonal flows and turbulent fluctuations, which in turn predict the overall level of cross-field turbulent transport.
In the core, zonal flows are known to play a key role the nonlinear upshift of the critical gradient for the onset of turbulent transport, known as the ``Dimits shift'' \citep{dimits_simulation_2000}.
In the H-mode edge and other edge transport barrier regimes, the sheared flows are strong enough to completely quench or otherwise severely limit the amount of transport due to turbulence, leading to enhanced plasma confinement.
In the L-mode edge, the growth of sheared zonal flows likely plays a key role in triggering the L-H transition \citep{kim_zonal_2003,schmitz_role_2012}, while the collapse of sheared flows in the plasma edge has been proposed as a mechanism for setting the L-mode density limit \citep{singh_bounds_2021,diamond_how_2023}.

A key observation, however, is that shear layers rarely occur in isolation.
Near the last closed flux surface, \(E_r\) profiles in both L-mode and H-mode are typically observed with a well- or hill-like structure \citep{viezzer_high-accuracy_2013,grenfell_h-mode_2018}.
These correspond to zonal \(E \times B\) \textit{jets}, consisting of extrema of the zonal flow flanked by oppositely-signed shear layers, leading to a region of \textit{zero shear} in-between.
Morover, since \(E_r \propto \nabla p_i\) for ion-diamagnetic-dominated flows, the steepest gradient regions are typically found at local extrema of \(E_r\), hence near the shearless regions.
A naive application of local shear suppression criteria might suggest that these steepest gradient regions should correspond to regions of maximum transport.
While this is indeed sometimes observed, for example in ``staircase pedestals'' \citep{ashourvan_formation_2019}, this situation does not seem universal.
Similar behavior is relevant to the core, where shear layers in flux-driven global gyrokinetic simulations are often observed to organize into long-lived layered structures known as ``staircases'' which regulate the radial correlation length of turbulence \citep{dif-pradalier_validity_2010}.
Observe in Fig. 1 of \citet{dif-pradalier_e_2017} how the steepest temperature gradient and lowest turbulent heat flux region nearly aligns with a zero crossing of the \(E \times B\) shear.

In this work, we examine turbulence in non-degenerate shearless regions where the zonally-averaged shear passes through zero with a non-zero slope, corresponding to a local extremum in the zonal flow profile.
This non-zero slope is equivalent to a non-zero flow curvature in the shearless region.
`Shearless' is somewhat of a misnomer as the shear is only zero at an isolated radial location; whenever we use the term `shearless' in this work, we will implicitly assume non-degeneracy, and hence non-zero flow curvature.

There is a well-developed understanding of how local \(E \times B\) shear can lead to the saturation or suppression of turbulence \citep{biglari_influence_1990,hahm_flow_1995,waltz_theory_1998}.
Work in eikonal theory and wave kinetics has developed a detailed picture of how the radial variation of the zonal shear can affect drift wave turbulence in the scale-separated geometric optics level \citep{smolyakov_coherent_2000,ruiz_zonal-flow_2016}.
However, recent work has pointed out the role that effects beyond the geometric optics level play in drift waves-zonal flow interaction, in particular the theory of the tertiary instability \citep{zhu_theory_2020,zhu_theory_jpp_2020}.
Quantization of the tertiary instability modes can lead to a radial localization of the drift-wave envelopes which can stabilize instabilities at gradients above the critical gradient of the primary linear instability, which has been indicated as a potential mechanism for the nonlinear critical gradient associated with the Dimits shift \citep{kobayashi_quench_2012}.
Shearless regions play a key role in tertiary instablity theory, as they are associated with the X- and O-points in the drifton phase space \citep{sasaki_enhancement_2017,zhu_structure_2018} where tertiary instabilities can localize around.

The central theme of this work is to show that the localization of fluctuations to shearless regions can counterintuitively assist in suppression of turbulent transport through the formation of \textit{shearless transport barriers}.
In the dynamical systems literature, shearless transport barriers refer to a particular class of invariant tori present in Hamiltonian flows and maps \citep{del-castillo-negrete_chaotic_2000,morrison_magnetic_2000,caldas_shearless_2012}, and in this context are also known as shearless or nontwist tori.
Here, we use the terminology `shearless tori' to refer to invariant tori in collisionless test particle dynamics, `shearless transport barriers' to refer to their macroscopic manifestation in turbulent transport, and `shearless phase-space transport barriers' to refer to their manifestation in particle phase space.
In fixed-parameter planar maps or periodic planar flows invariant tori act as complete barriers to transport, but such conclusions do not immediately apply when considering transport due to non-autonomous perturbations, higher-dimensional phase space, and collisional effects experienced by particles in gyrokinetic turbulence.

The theory of shearless transport barriers originally arose to explain observations of robust barriers to the transport of dye across zonal jets in rotating tank experiments \citep{behringer_chaos_1991,del-castillo-negrete_chaotic_1993}.
Shearless tori have been extensively studied in a number of different systems, where they have been observed phenomenologically to be robust barriers to transport.
These systems include 2-dimensional models \citep{del-castillo-negrete_area_1996,del-castillo-negrete_renormalization_1997,balescu_hamiltonian_1998,del-castillo-negrete_chaotic_2000,marcus2008,da_fonseca_area-preserving_2014}, 3-dimensional models including magnetic shear and \(E \times B\) shear \citep{horton_drift_1998,marcus_influence_2019,osorio-quiroga_shaping_2023,grime_shearless_2023}, and in gyrokinetic models with prescribed perturbations \citep{anastassiou_role_2024}.
There also exist generalizations of the notion of shearless transport barriers to finite-time systems \citep{rypina_lagrangian_2007,beron-vera_invariant-tori-like_2010,farazmand_shearless_2014,falessi_lagrangian_2015}.

In this work, we show that the theory of shearless tori can be used to identify the presence of shearless transport barriers in high-fidelity global gyrokinetic simulations.
We demonstrate, for the first time in a global gyrokinetic simulation of electrostatic turbulence with realistic geometry and profiles, a concrete example of a shearless transport barrier that occurs in the core of a zonal \(E \times B\) jet using the global total-\(f\) gyrokinetic particle-in-cell code XGC.
We identify two channels, particle transport and turbulence spreading, in which the shearless transport barrier plays a role.
To show that this transport barrier corresponds to a shearless transport barrier, we extract the dominant global drift wave mode localized to the zonal jet, and use it to construct a model for the collisionless dynamics of gyrokinetic test particles experiencing perturbations from the wave.
We identify shearless tori in a Poincar\'{e} map constructed for the model and show that the tori lead to a significant reduction in particle transport across the shearless region.
Furthermore, we identify corresponding shearless phase-space transport barriers in the self-consistent collisional gyrokinetic Vlasov dynamics simulated by XGC to show that the shearless tori play a dynamically relevant role in the turbulence.
Compared to past works on shearless tori in plasma contexts, we emphasize the role played by trapped particle orbits.

We also consider the dynamics of the shearless phase-space transport barriers in fully developed turbulence beyond the Poincar\'{e} map model.
Using tools from topological data analysis (TDA), we demonstrate a correlation between heat flux avalanche events observed in the turbulence along with the radial propagation of phase space hole/blob features in the XGC simulations.
Upon reaching the shearless region, we show that these blobs cause eddy detachment events associated with the zonal jet, facilitating transport across the barrier without entirely destroying it.
We point out an analogy between this process and the formation of ``warm core ring'' and ``cold core ring'' structures in oceanic jets \citep{the_ring_group_gulf_1981,olson_rings_1991}.

The paper is organized as follows:
In \S\ref{sec:phenomenology}, we identify several phenomenological aspects of turbulence in the XGC simulations correlated with regions where the \(E \times B\) shear crosses through zero, suggesting the presence of a shearless transport barrier.
Emphasis is placed on quantities which can also be observed experimentally.
Next, in \S\ref{sec:shearless_theory}, we describe the construction of a single-mode test particle map model from the simulation data.
We analyze this map model using tools from dynamical systems, giving conditions necessary for the existence of shearless tori.
In \S\ref{sec:shearless_application} we apply this theory to fluctuations extracted from the XGC simulations, and provide direct evidence that structures associated with shearless tori are present in the XGC simulations.
In \S\ref{sec:discussion} we discuss the applicability of test particle map models more broadly, and suggest other turbulence regimes where similar test particle map models might reveal the existence of shearless transport barriers.
Finally, in \S\ref{sec:summary} we summarize the paper and outline possible future work to characterize the impact of shearless transport barriers on experiments and reactor design.

\section{Properties of Zonal Jets Observed in Gyrokinetic Simulations}\label{sec:phenomenology}

We begin in this section by introducing the high-fidelity flux-driven global gyrokinetic simulations which are the main subject of this work.
In \S\ref{subsec:xgc} we describe the parameters of the simulation and lay out the basic coordinate conventions used throughout the work.
Then in \S\ref{subsec:shearless_simulation} we discuss in detail the macroscopic properties of the transport and turbulence associated with a persistent shearless region observed in the simulations.

\subsection{XGC Simulations} \label{subsec:xgc}

This work considers turbulence simulations carried out using the global total-$f$ gyrokinetic particle-in-cell code XGC1 \citep{ku_xgc_2018,ku_fast_2018,hager_electromagnetic_2022}.
The results have been originally reported in \citet{zhu_intrinsic_2024}, where the simulation data is also provided \citep{zhu_intrinsic_2024_dataset}.
The simulations primarily model the interactions between ion temperature gradient (ITG) turbulence and zonal flows during an early ELM-free H-mode using realistic DIII-D tokamak geometry.
The equilibrium profiles are adapted from DIII-D shot number 141451 \citep{muller_experimental_2011,muller_intrinsic_2011}.

The simulations are electrostatic, and we denote the electrostatic potential by \(\phi\).
Deuterium ions and drift-kinetic electrons are simulated.
Their equilibrium density and temperature profiles are shown in figure~\ref{fig:poloidal_jet}(a,b), and their distribution functions $F_s$ evolve via the gyrokinetic Vlasov equation
\begin{equation}
\label{eq:XGC_vlasov}
d_t F_s=\partial_t F_s+\dot{\boldsymbol{R}}\cdot\nabla F_s+\dot{p}_\parallel \partial_{p_\parallel}F_s=C_s+S_s+N_s,
\end{equation}
where $s=i,e$.
The phase space advection operator \(d_t\) depends on the gyrokinetic particle equations of motion $(\dot{\boldsymbol{R}},\dot{p}_\parallel)$, given later in \eqref{eq:gk_char}.

On the right-hand side, $C_s$ is a multi-species Fokker-Planck-Landau collision operator \citep{yoon_fokker-planck-landau_2014,hager_fully_2016}, $S_s$ describes heating, and $N_s$ describes neutral ionization and charge exchange \citep{ku_fast_2018}.
In this simulation, turbulence is dominated by ion dynamics driven by the ion temperature gradient, and a 1MW heating is applied to ions in the core to sustain the temperature gradient.
Neutral dynamics are also included in the edge and scrape-off layer, providing the only particle source in the simulations, as there was no significant beam fueling.
The simulation consists of 16 poloidal planes that span \(1/3\) of the torus, with approximately 132k mesh nodes per poloidal plane.
We refer the reader to \cite{zhu_intrinsic_2024} for more details of the simulation.

\begin{figure}
	\centering
	\begin{tabular}{c}
		\includegraphics{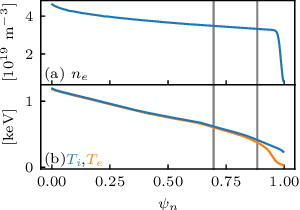} \\
		\includegraphics{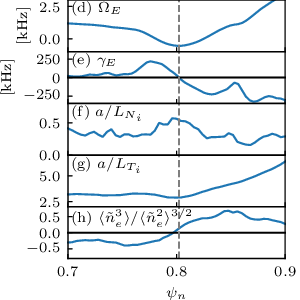}
	\end{tabular}%
	\begin{tabular}{c}
		\includegraphics{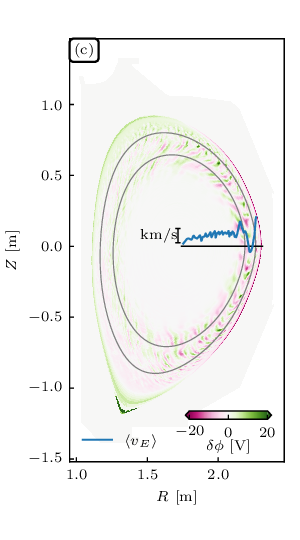}	
	\end{tabular}
	\caption{
		(a-b) Equilibrium profiles used to initialize the simulations.
		The region of interest is marked between the vertical gray lines.
		(c) A poloidal cross-section showing the non-zonal component of the electrostatic potential \(\delta \phi\), as well as the \(E \times B\) velocity \(v_{E}\) evaluated at the outboard midplane. The region of interest is between the flux surfaces indicated by gray contours.
		(d-h) Temporally-averaged quantities after several turbulence times as a function of radial coordinate, including the zonal \(E \times B\) rotation rate \(\Omega_E\), the Waltz-Miller shearing rate \(\gamma_E\), ion gyrocenter density gradient scale length \(a/L_{N_i}\), ion temperature gradient \(a/L_{T_i}\), and density skewness \(\langle \tilde{n}_e^3 \rangle / \langle \tilde{n}_e^2 \rangle^{3/2}\).
		The shearless region \(\psi=\psi^*\) is marked with a dashed vertical line.
		}
	\label{fig:poloidal_jet}
\end{figure}

For coordinate conventions, we use cylindrical coordinates consisting of the major radius \(R\), vertical coordinate \(Z\), and toroidal angle \(\varphi\).
The sign convention for \(\varphi\) is chosen such that \((R,\varphi,Z)\) is a right-handed triple.
We primarily focus on closed field line regions, and use \(\langle \cdot \rangle\) to denote the flux surface average.
The normalized poloidal flux \(\psi_n := \psi/\psi_{lcfs}\), defined in terms of the poloidal flux \(\psi\) and its value \(\psi_{lcfs}\) at the last closed flux surface (LCFS), can be used as a radial coordinate in this case.

We will also occasionally use the radial coordinate \(\rho = \rho_{tor} := \sqrt{\Phi_n}\), where \(\Phi_n = \Phi / \Phi_{lcfs}\) is the normalized toroidal flux.
These two radial coordinates are related by
\begin{equation*}
	\frac{\mathrm{d}\rho_{tor}}{\mathrm{d}\psi_n} = \frac{q}{2\rho_{tor}} \frac{\psi_{lcfs}}{\Phi_{lcfs}}
\end{equation*}
where \(q\) is the safety factor.
In the given coordinate conventions, the toroidal components the magnetic field and plasma current are both aligned with the \(-\varphi\) direction, so \(q\) and \(\Phi_{lcfs}\) are negative.

Intuitively, a zonal jet can be understood as a distinct maximum or minimum of the zonal flow.
One measure of this is to consider the poloidal component of the zonally-averaged \(E \times B\) flow \(\langle v_E \rangle := \partial_R \langle \phi \rangle / B\) evaluated on the outboard midplane.
Positive radial electric field \(E_r > 0\) will correspond to clockwise flows around the magnetic axis in the \((R,Z)\) plane, which are \(-Z\) directed on the outboard midplane.
Additionally, we also consider the toroidal rotation rate \(\Omega_E(\psi)\) associated with the radial electric field
\begin{equation*}
	\Omega_E(\psi) = \partial_\psi \langle \phi \rangle
\end{equation*}
In this case \(E_r > 0\) corresponds to \(\Omega_E < 0\), which is toroidal flow directed in the \(-\varphi\) direction.
Note that the ion \(\nabla B\) drift is pointed downward towards the X-point, and the ion diamagnetic drift is clockwise.

In Figure~\ref{fig:poloidal_jet}(c), we show a poloidal cross-section of the non-zonal component of the electrostatic potential as well as \(\langle v_E \rangle\) overplotted.
Notice that in the region inside the first gray contour, which we refer to as the `inner core', the turbulence is weak, and there is not a distinct radial structure to the sheared zonal flows.
Meanwhile, in the region between the first and second gray contours, which we refer to as the `outer core', the sheared zonal flows become significantly stronger.
This region is also known as ``no man's land'' (NML) in the literature, as it connects the pedestal to the core.
Moreover, there is a distinct banding of electrostatic potential fluctuations associated with the width of the zonal jet.
Outside of the second gray contour but inside the last closed flux surface, in the region which we will refer to as the `edge', there is a strong shear layer corresponding to the strong sheared flows present in the pedestal of the H-mode plasma.
The pedestal is contained within the edge.
In the following analyses, we focus on the outer core region.

\subsection{Properties of the Shearless Region} \label{subsec:shearless_simulation}

We now proceed to describe macroscopically observable properties of the shearless region associated with the zonal jet.
In figure~\ref{fig:poloidal_jet}(d-h), we show temporally-averaged quantities of various quantities as a function of normalized poloidal flux \(\psi_n\).
Panel (d) shows the zonal rotation rate \(\Omega_E\), showing the robust jet structure around \(\psi_n \approx 0.8\).
In panel (e) we plot the Waltz-Miller \(E \times B\) shear parameter \citep{waltz_theory_1998}
\begin{equation*}
    \gamma_E := (\rho/q) \partial_\rho \Omega_E \propto \partial_\psi^2 \langle \phi \rangle
\end{equation*}
and we define the non-degenerate shearless region as the region where \(\gamma_E\) passes through zero.
Note that this shear parameter is a flux surface averaged quantity, in contrast to the Hahm-Burrell \(E \times B\) shearing rate, see for example the discussion in \citet{burrell_role_2020}.
We will discuss how this choice of shear parameter arises from dynamical systems considerations in section \ref{sec:shearless_theory}.

The shearless region appears to be correlated with transport behavior of \(a/L_{N_i}\) and \(a/L_{T_i}\), which are the ion gyrocenter density and ion temperature gradient scale lengths respectively.
Panel (f) shows there is a strong enhancement of the ion gyrocenter (GC) density gradient at the shearless region.
Noting that the ion GC density can be identified with the potential vorticity \citep{mcdevitt_poloidal_2010}, this suggests the jet is supported by a so-called ``potential vorticity front''.
The ion GC density can be related to the usual electron density \(n_e\) and the electrostatic potential via the long-wavelength limit of the gyrokinetic Poisson equation,
\begin{equation*}
	N_i = n_e - \nabla_\perp \cdot \left(\frac{n_{i0} m_i}{Z_i e B^2} \nabla_\perp \phi\right)
\end{equation*}
where \(\nabla_\perp\) is the component of the gradient perpendicular to \(\mathbf{B}\), \(Z_i e\) is the ion charge, and \(n_{i0}\) is the equilibrium ion density.

From panel (g), the shearless region also appears to separate the region of flat \(a/L_{T_i}\) in the inner core from the region of steeper \(a/L_{T_i}\) in the edge.
This is reminiscent of the argument in \citet{singh_when_2020} which argues that turbulence spreading from the edge contributes to the weakening of temperature stiffness in NML.
To quantify the interaction between the shearless region and turbulence propagation, we consider the (normalized) density skewness \(\langle \tilde{n}_e^3\rangle / \langle \tilde{n}_e^2 \rangle^{3/2}\).
Skewness measures the asymmetry of a probability distribution about its mean, with positive values indicating an excess of positive fluctuations and a negative value indicating an excess of negative fluctuations.
When these fluctuations are carried by isolated structures, they are often referred to as density ``blobs'' or ``holes'' respectively.
Density skewness is frequently used as an indicator of blob/hole and avalanche dynamics in both experimental and theoretical studies \citep{dippolito_convective_2011}.
We will show later in \S\ref{subsec:eddy_detachment} that these statistical fluctuations are associated with filamentary phase-space structures, so we will also refer to them as blobs/holes.
Panel (h) shows that the zero of the density skewness, associated with a transition from hole- to blob-dominated fluctuations, occurs at the same radial location as the shearless region.
This suggests that the shearless region acts as the boundary between the inner core and the edge, preventing turbulence spreading from one region to the other.

\begin{figure}
	\centering
	\includegraphics{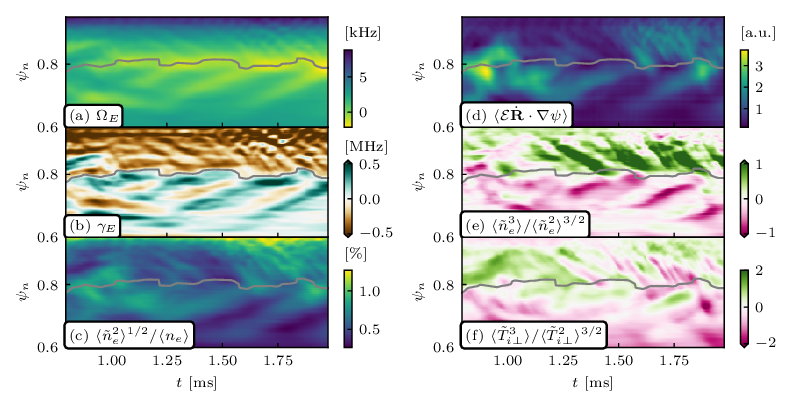}
	\caption{Sequence of color plots showing the evolution of various flux-surface-averaged quantities over time and radial coordinate. The shearless region is overplotted with a gray line on all of the plots.}
	\label{fig:hovmoller}
\end{figure}

Moving to the time-dependent behavior of the zonal jet, in Figure~\ref{fig:hovmoller}, we show a sequence of color plots showing the evolution of various flux-surface-averaged quantities over the time and radial coordinates.
Panels (a,b) illustrate the zonal jet and the zonal shear parameter \(\gamma_E\).
There is a persistent region of zero shear, equivalently a local minimum in \(\Omega_E\), in the vicinity of the zonal jet near \(\psi_n \approx 0.8\).
Outside of the zonal jet, the zonal flows display more fluctuating behavior.
The space-time plots suggest intermittent increases in zonal flow strength that begin in the inner core and edge which then converge radially to the zonal jet.

These radially converging zonal flow bursts can be correlated with the radial propagation of turbulence avalanches.
In panel (c) we show the normalized flux-surface-averaged density fluctuations \(\langle\tilde{n}_e^2\rangle^{1/2}/\langle n_0 \rangle\).
The density fluctuations also display this radial convergence behavior, with bursts of fluctuation amplitude that form in the inner core and the edge and converge towards the zonal jet.
This behavior is also reflected in the radial energy flux \(\langle \mathcal{E} \dot{\mathbf{R}} \cdot \nabla \psi \rangle\), shown in panel (d), where \(\mathcal{E}\) is the particle kinetic energy and \(\dot{\mathbf{R}}\) is the particle drift.
Panel (e) shows the density skewness, which again shows this radial converging behavior indicative of blobs/holes forming then propagating towards the zonal jet.
Similar behavior is observed in the ion temperature skewness \(\langle \tilde{T}_i^3\rangle / \langle \tilde{T}_i^2 \rangle^{3/2}\), which is shown in panel (f).

Note the anti-correlation between the density and temperature skewness near the shearless region seen in panels (e,f) show that the density blobs (resp. holes) correspond to temperature holes (resp. blobs).
The temperature holes (resp. blobs) propagate inward (resp. outward), which matches the expectation for ion temperature gradient (ITG) turbulence, where the ion temperature serves as the primary source of free energy, whereas density gradients are stabilizing.
This differs from resistive drift wave and interchange-like turbulence, where density gradients are destabilizing and thus act a source of free energy for the turbulence, leading to the usual picture of inward propagating holes and outward propagating blobs in edge turbulence.
Since electron density fluctuations are more easily measured than ion temperature fluctuations, we will use the electron density fluctuations as a proxy for the ion temperature fluctuations in the following analyses, and use the terminology for blobs and holes accordingly.

In summary, both the time-averaged and time-dependent analyses suggest the presence of a robust shearless region associated with a zonal jet near \(\psi_n \approx 0.8\).
The zonal jet is linked to a significant increase in the ion gyrocenter density gradient and appears to act as a barrier to turbulence propagation, potentially suggestive of a transport barrier associated with the shearless region.
In the following sections, we will demonstrate how these features could be explained as a result of a shearless transport barrier associated with the zonal jet.

\section{Theory of Shearless Tori in Gyrokinetic Turbulence}\label{sec:shearless_theory}

In this section, we develop the test particle map model which we will use to study shearless transport barriers in the gyrokinetic simulations.
We begin in \S\ref{subsec:model_construction} by constructing a model for the electrostatic perturbations observed in the gyrokinetic simulations in the vicinity of the shearless region.
We follow this in \S\ref{subsec:planar_map} where we show that the gyrokinetic test particle dynamics in the presence of the model perturbation have an exact reduction to a planar map, which allows for the identification of shearless invariant tori associated with non-degenerate maxima and minima of the rotation number.
In \S\ref{subsec:invariant_tori} we show that these shearless invariant tori persist in the presence of the model perturbation and act to reduce the level of particle transport across the shearless region.

\subsection{Construction of the Model Dynamical System} \label{subsec:model_construction}

As reviewed in the introduction, there has been extensive work studying the role of shearless tori in magnetically confined plasmas.
In this section, we will summarize the key parts of the theory of shearless tori relevant to this work, as well as discuss the requirements on the drift wave fluctuations necessary for the theory to apply.
The key idea is that for fluctuations which locally ``look like'' a rigidly toroidally rotating perturbation, the presence of an additional gyrokinetic invariant which allows for the exact reduction of the gyrokinetic test particle dynamics to a planar map.
We argue that for marginally unstable drift waves in the presence of a zonal jet, the phenomenon of wave trapping will tend to lead to just ``a few'' modes active at any given instant of time, leading to fluctuations which approximately satisfy this condition.

We consider the gyrokinetic characteristic equations used in XGC,
\begin{subequations} \label{eq:gk_char}
	\begin{gather}
		B_\parallel^* \dot{\mathbf{R}} = \frac{1}{Z_s e} \hat{\mathbf{b}} \times \nabla H + v_\parallel \mathbf{B}^* \\
		B_\parallel^* \dot{p}_\parallel = -\mathbf{B}^* \cdot \nabla H
	\end{gather}
\end{subequations}
where the overdot is the time derivative, giving the evolution of the gyrocenter \(\mathbf{R}\) and parallel momentum \(p_\parallel\) a given particle.
\(m_s\) and \(Z_s e\) is the species mass and charge respectively.
We focus on the electrostatic gyrokinetic Hamiltonian,
\begin{equation}
	H = \frac{p_\parallel^2}{2m_s} + \mu B + Z_s e \mathcal{J}[\phi]
\end{equation}
although the theory can also be developed for electromagnetic perturbations, see for example \citet{anastassiou_role_2024}.
Here \(v_\parallel = \partial p_\parallel H\) is the parallel velocity, \(\mu\) is the magnetic moment, \(\mathcal{J}\) is the gyro-average operator, \(\phi\) is the electrostatic potential, \(m\) and \(Z_s e\) are the species mass and charge, \(\hat{\mathbf{b}} = \mathbf{B}/B\), \(\mathbf{B}^* = \mathbf{B} + \nabla \times (p_\parallel \hat{\mathbf{b}} / Z_s e)\), and \(B_\parallel^* = \hat{\mathbf{b}}\cdot \mathbf{B}^*\).

For an axisymmetric system \(\partial_\varphi H = 0\), the canonical toroidal angular momentum
\begin{equation}
	P_\varphi = Z_s e \psi + p_\parallel \hat{\mathbf{b}} \cdot R^2 \nabla \varphi
\end{equation}
is conserved along gyrokinetic characteristics.
In addition, for time-independent systems \(\partial_t H = 0\), the Hamiltonian \(H\) will be conserved along characteristics.
These two facts lead to the complete integrability of gyrokinetic particle trajectories in axisymmetric time-independent systems, where particles enjoy three invariants of motion \((\mu, H, P_\varphi)\).

When considering perturbations of the fields with an axisymmetric time-independent background, the symmetries imply that eigenmodes of the system will have electrostatic potentials with the form
\begin{equation*}
	\delta \phi \sim e^{i (n \varphi - \omega t)} \delta \phi(R,Z).
\end{equation*}
Such modes correspond to rigidly toroidally rotating fluctuations, with a toroidal angular phase velocity of \(\Omega = \omega/n\).
Hamiltonians consisting of rigidly toroidally rotating modes with a common angular phase velocity \(\Omega\) will satisfy \(\partial_t H + \Omega \partial_\varphi H = 0\).
Particles undergoing motion in such fields will have two invariants of motion, \((\mu, K)\), where \(K=H - \Omega P_\varphi\) is the Hamiltonian in the rotating frame.
This fact is often utilized in studies of energetic particle transport \citep{hsu_alpha-particle_1992,todo_introduction_2019}.
Note that a regime where most fluctuations share a nearly common angular phase velocity of \(\Omega\) would correspond to weakly dispersive turbulence.

The key hypothesis we take is that in the presence of a zonal jet, the phenomenon of wave trapping will produce fluctuations which lead to this weakly dispersive regime.
In \citet{zhu_theory_2020,zhu_theory_jpp_2020}, it was demonstrated how wave trapping and anti-trapping play a key role determining the Dimits shift regime in fluid models of drift-wave turbulence.
The key physics of wave trapping is that waves can constructively interfere.
Physically, one can imagine that the \(E \times B\) jet acts as a ``cavity'' which modifies the structure of the primary ITG instability.
In the presence of shear of opposing signs, discrete quantum harmonic oscillator-like eigenmodes can appear which have radially-localized fluctuation envelopes.
Intuitively, this radial localization increases the effective radial wavenumber, typically leading to a stabilizing effect on radial-gradient-driven instabilities.
In such a regime turbulent mixing can become strongly spatially inhomogeneous; see for example \citet{cao_nearly_2023,cao_maintenance_2024} for detailed studies of the nonlinear dynamics of large-amplitude tertiary instabilities in a variant of the Hasegawa-Wakatani equations.

\begin{figure}
	\centering
	\includegraphics{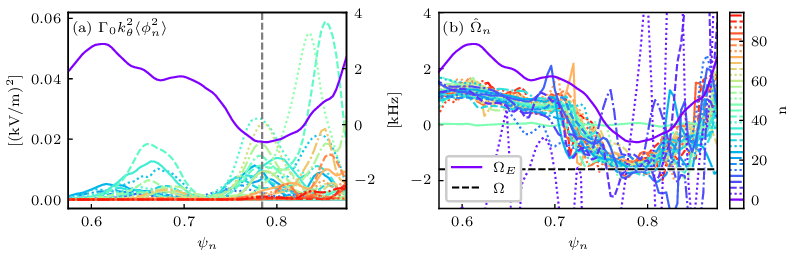}
	\caption{(a) Toroidal mode spectra and (b) angular phase velocities, both measured at one instant in time.
    \(\langle \phi_n^2 \rangle\) is the flux surface averaged squared electrostatic potential, and \(\hat{\Omega}_n\) is the toroidally-directed angular phase velocity.
    The quantity \(\Gamma_0 k_\theta^2 \langle \phi_n^2 \rangle\) mimics the gyroaveraged \(\langle\mathcal{J}[E_\perp]^2\rangle\) spectrum.
    Different toroidal mode numbers are shown with different colors and linestyles.
    \(\Omega_E\) is shown on both plots as a solid purple line for reference, and the location of the shearless region in the zonal jet is demarcated in panel (a) by a dashed vertical gray line.
    A dashed horizontal black line is also plotted in in panel (b) over the angular phase velocities to show the rotation rate \(\Omega\) used for the model fluctuations.}
	\label{fig:dpot_spectra_fit}
\end{figure}

To show that this hypothesis is reasonable in practice, we turn to the electrostatic fluctuations observed in the gyrokinetic simulation data.
In figure~\ref{fig:dpot_spectra_fit}, the toroidal mode spectra and angular phase velocities are plotted for one instant in time.
The toroidal mode spectra are computed by taking the Fourier representation
\begin{equation*}
	\phi(R,\varphi,Z) = \sum_{n=-\infty}^{\infty} e^{in\varphi} \phi_n(R,Z)
\end{equation*}
Note the data is upsampled from the original 16 toroidal planes to 48 toroidal planes using field-line following interpolation, which allows the usual fast Fourier transform algorithm to be used without aliasing issues due to the differing scales of \(k_\parallel\) and \(k_\perp\).
As a proxy for the role of gyroaveraging on the fluctuation spectrum, which we will explore in more detail later, we apply an effective gyroaverage factor \(\Gamma_0\).
Thus, we compute a proxy for the perpendicular electric field spectrum by multiplying the zonally averaged potential \(\langle \phi_n^2 \rangle\) by \(\Gamma_0(b) k_\theta^2\), where \(k_\theta := n q / \rho\) is the effective poloidal wavenumber, \(b = (k_\theta \rho_i)^2\) depends on the gyroradius \(\rho_i\) of a thermal ion at the outer midplane, and \(\Gamma_0(b) = I_0(b) e^{-b}\) is the effective gyroaverage factor defined in terms of the modified Bessel function of the first kind \(I_0\).
This mimics the gyro-averaged perpendicular electric field \(\langle (\nabla_\perp \mathcal{J}[\phi])^2 \rangle\).
To compute the toroidal (real) angular phase velocity, for each flux surface \(\psi\) we compute the complex scalar \(c_n(\psi)\) which minimizes
\begin{equation*}
    \langle (\phi_n(t=t) - c_n(\psi) \phi_n(t=t-\Delta t))^2 \rangle
\end{equation*}
for each flux surface \(\psi\).
We focus on a single time \(t \approx 1.584\) ms, when the fluctuations are strong.
If the electrostatic potential were to undergo purely rigid toroidal rotation with angular frequency \(\Omega\), then the amplitudes of the Fourier modes would evolve in time as \(\phi_n \sim e^{-i n \Omega t}\).
Taking \(c_n(\psi) = e^{-i\varpi_n(\psi) \Delta t}\) allows us to extract a dominant toroidally-directed angular phase velocity \(\hat{\Omega}_n(\psi) := \operatorname{Re}[\varpi_n(\psi)] / n\) on each flux surface \(\psi\) for each toroidal mode \(n\).

One key observation from Figure~\ref{fig:dpot_spectra_fit}(a) is that the fluctuations at a given radial location tend to be rather sparse in toroidal mode number.
Furthermore, rather than extending across the entirety of the plasma, fluctuations at a given toroidal mode number tend to have a radially localized envelope.
Several toroidal modes appear to have envelopes localized within the zonal jet, whose location is demarcated by a dashed vertical line.
Another key observation from Figure~\ref{fig:dpot_spectra_fit}(b) is that the phase velocities \(\hat{\Omega}_n\) have much less spread near the zonal jet as well.
These observations support the hypothesis of wave trapping leading to weak dispersion.
We compute an averaged rotation rate \(\Omega\), also shown in panel (b), over all the fluctuations to use in the following.

\begin{figure}
	\centering
	\includegraphics{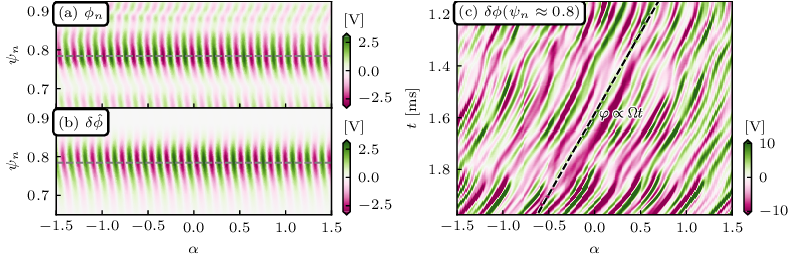}
	\caption{Poloidal slices of (a) The electrostatic potential \(\phi_n\) for the \(n=39\) Fourier mode, and (b) the single-mode model electrostatic potential \(\delta \hat{\phi}\) at a single instant in time. The fields are plotted against the radial coordinate \(\psi_n\) and the perpendicular field line label \(\alpha = \varphi - q \theta\). The shearless region \(\psi=\psi^*\) is demarcated with a dashed gray line. (c) Space-time plot showing the value of the electrostatic potential fluctuations \(\delta \phi\), evaluated at the shearless region \(\psi = \psi^*\) plotted against \(\alpha\). A line with \(\varphi \propto \Omega t\) is shown demonstrating the fixed phase velocity of the fluctuations.}
	\label{fig:dpot_hov_model}
\end{figure}

In order to take advantage of this wave trapping to create a model of the dynamics, we extract out a single mode with the largest amplitude in the zonal jet and model the electrostatic potential with a rigidly toroidally rotating version of this mode.
We remark that the key feature is the toroidally rigid rotation of the mode; additional modes could be added to the model with the assumption that they rotate at the same toroidal frequency, but in practice it was found that one mode was enough to reproduce key qualitative features of the test particle dynamics in the shearless region.

Since the mode is entirely localized within the last closed flux surface (LCFS), we can give an expression for the mode using straight field line coordinates \((\psi, \varphi, \theta)\) where \(\psi\) is the radial coordinate, \(\varphi\) is the usual toroidal angle, and \(\theta\) is the poloidal angle with \(\theta=0\) taken at the outboard midplane.
We also define the perpendicular field line label \(\alpha = \varphi - q(\psi) \theta\).
Note that the particle dynamics are evolved using the original cylindrical coordinates \((R,\varphi,Z)\), and interpolating functions \(\psi(R,Z)\) and \(\theta(R,Z) = \operatorname{arctan}(\tfrac{Z-Z_{axis}}{R-R_{axis}}) + \delta \theta (R,Z)\) are used to relate the cylindrical coordinates with the straight field line coordinates.

We take a model electrostatic potential \(\hat{\phi}\) with zonal and non-zonal parts \(\hat{\phi} = \langle\phi\rangle(\psi) + \delta\hat{\phi}(\psi,\varphi,\theta,t)\).
For the zonal part, we directly take the zonally-averaged electrostatic potential from the simulation data.
For the non-zonal part, we use the ballooning representation to model a ballooning mode with radial envelope
\begin{equation} \label{eq:ballooning_mode}
	\delta \hat{\phi}(\psi, \varphi, \theta, t) = \operatorname{Re} [e^{in\varphi'} \hat{\phi}_n(\psi, \theta)] = \operatorname{Re}\sum_{\ell=-1}^{1} e^{in(\varphi' - q(\psi) (\theta - \theta_0 + 2 \pi \ell))} g_n(q(\psi), \theta + 2 \pi \ell)
\end{equation}
where \(\varphi' = \varphi - \Omega t\) is the toroidal angle in a rotating frame using the averaged angular phase velocity \(\Omega\) computed earlier.
In principle ballooning modes include an infinite sum over \(\ell\) from \(-\infty\) to \(\infty\), but in practice the summation bounds \(\pm 1\) were enough to model the mode.
Ballooning modes are characteristic for drift wave turbulence, since they can localize to the bad curvature region in the outboard midplane while minimizing \(k_\parallel\) in the presence of magnetic shear.

The envelope function \(g_n(q,\eta)\) is taken as a sum of Gauss-Hermite functions in \(q(\psi)\) and the field line coordinate \(\eta\),
\begin{equation*}
	g_n(q,\eta) = \sum_{j+k\le 2} a_{j,k} He_j(z_q) He_k(z_\eta) e^{-(z_q^2 + z_\eta^2)/2}
\end{equation*}
where \(z_q = (q-q_0)/\sigma_q, z_\eta = (\eta-\theta_0)/\sigma_\eta\) are the normalized radial and field-line coordinates.
The parameters \(q_0, \theta_0, \sigma_q, \sigma_\eta\) are all taken as real, while \(a_{j,k}\) are complex.
The parameter values were found using the \texttt{scipy.optimize.minimize} routine to minimize the objective function
\begin{equation*}
    \int_{\psi_0}^{\psi_1} \langle |\phi_n - \hat{\phi}_n|^2 \rangle \operatorname{d}\psi
\end{equation*}
for \(n=39\), which is mean squared deviation of the model \(\hat{\phi}_n\) from the the actual electrostatic potential \(\phi_n\) for the dominant toroidal mode number at this instant of time in the zonal jet.

We compare the model electrostatic potential against the electrostatic fluctuations from the simulations in figure~\ref{fig:dpot_hov_model}.
In panels (a,b) we show a comparison between the electrostatic potential \(\phi_n\) of the \(n=39\) Fourier mode from the simulations against the model electrostatic potential \(\delta \hat{\phi}\).
The dominant radial band in \(\phi_n\), localized near \(\psi=\psi^*\), is well-approximated by the model fluctuation \(\delta\hat{\phi}\).
Furthermore in panel (c), we compare the hypothesis of a constant-speed rotating frame against the phase speed of the electrostatic fluctuations at the shearless region \(\psi = \psi^*\).
The slope of the line \(\varphi' = \varphi - \Omega t = const.\) matches the slope of the electrostatic fluctuations, showing the applicability of this hypothesis.
Finally, we remark that while the conservation of \(K\) arising from the weak dispersion hypothesis significantly simplifies the analysis of the test particle dynamics, recent advances in non-twist KAM theory \citep{gonzalez-enriquez_singularity_2014} suggest this assumption may be much stronger than necessary, which we will discuss in the next section and later in \S\ref{subsec:assumptions}.

\subsection{Planar Map Dynamics and Conditions for Shearless Tori} \label{subsec:planar_map}

In this section, we describe how the model dynamical system, which is constructed in the full gyrokinetic phase space, can be reduced to a planar map through an appropriately chosen surface of section, i.e. a Poincar\'{e} section.
This process is exactly analogous to the usage of surfaces of section to study energetic particle transport under the influence of a single toroidal Alfv\'{e}n eigenmode \citep{hsu_alpha-particle_1992,todo_introduction_2019}.
Similar to how the safety factor \(q\) captures topological information about magnetic field lines through their average winding numbers around the torus, we use the kinetic safety factor \(q_{kin}\) \citep{gobbin_resonance_2008} to infer topological information about the test-particle orbits.
The presence of non-degenerate minima and maxima in \(q_{kin}\) provides a necessary condition for the potential existence of phase-space transport barriers in the form of shearless invariant tori.

We begin by reviewing how gyrokinetic dynamics for the single-mode model can be reduced to a planar map through an appropriately chosen surface of section.
This analysis is most easily performed using cylindrical coordinates \((R,\varphi,Z)\) for the spatial variables.
Since the model Hamiltonian is time-independent in a frame rotating with angular velocity \(\Omega\), we can use the moving coordinate \(\varphi' = \varphi - \Omega t\) where the fluctuations will not have any explicit time dependence.
Thus, \(t\) is an ignorable coordinate, and \((\mu, K)\) will be constants of motion.
For a given set of invariants \(\mu,K\) in the rotating frame, we define the manifolds
\begin{gather*}
	E_{\mu,K} := \{(R,\varphi',Z,p_\parallel,\mu) : \mu=\mu, H-\Omega P_\varphi = K\} \\
	\Gamma^{\pm}_{\mu,K} := \{z \in E_{\mu,K} : Z = Z_{axis}, R>R_{axis}, \pm\dot{Z} > 0\}
\end{gather*}
\(E_{\mu,K}\) is the set of all particle configurations with a given \(\mu,K\).
It is defined by two constraints in the 5-dimensional particle phase space, so \(E_{\mu,K}\) is a 3-dimensional manifold.
\(\Gamma^{\pm}_{\mu,K}\) is the intersection of \(E_{\mu,K}\) with the outboard midplane, selecting only particle configurations which pass through with either positive or negative velocity.
\(\Gamma^{\pm}_{\mu,K}\) will be a 2-dimensional manifold, and it can be naturally parameterized by the two coordinates \(P_{\varphi}, \varphi'\).
Since \(P_\varphi = Ze\psi + O(\rho_{pol}/a)\), \(P_\varphi\) can be thought of as a radial coordinate.

Notice that \(E_{\mu,K}\) is an invariant manifold of the gyrokinetic characteristic equations \eqref{eq:gk_char} for the model dynamical system, meaning that any particle which starts in \(E_{\mu,K}\) will remain in \(E_{\mu,K}\) for all time.
Furthermore, notice that passing or trapped trajectories within \(E_{\mu,K}\) will repeatedly pass transversally through at least one of the 2-dimensional sections \(\Gamma^{+}_{\mu,K}\) or \(\Gamma^{-}_{\mu,K}\).
This allows the definition of a Poincar\'{e} map \(f^{+}_{\mu,K}: \Gamma^{+}_{\mu,K} \to \Gamma^{+}_{\mu,K}\) (and similar for \(-\)) such that \(f^{+}_{\mu,K}(z)\) is the first return of the trajectory of a particle starting at \(z \in \Gamma^{+}_{\mu,K}\) back to \(\Gamma^{+}_{\mu,K}\).
In other words, we follow particles with a given \(\mu,K\) initialized on the outboard midplane, and compute their repeated intersections with the outboard midplane.
Exactly analogous to the techniques used in energetic particle transport, in the weakly dispersive regime we can use the planar maps \(f^{\pm}_{\mu,K}\) to study the exact dynamics of the gyrokinetic characteristic equations \eqref{eq:gk_char}.

Important topological information about particle orbits can be deduced from these planar maps using the kinetic safety factor \(q_{kin}\) \citep{gobbin_resonance_2008,anastassiou_role_2024}.
Similar to how the \(q\) profile gives the ratio of the average number of toroidal to poloidal transits made by a magnetic field line, \(q_{kin}\) gives the ratio of the average number of toroidal to poloidal transits made by a particle orbit.
In the integrable case, the particles undergo quasiperiodic motion, and \(q_{kin}\) can be computed
\begin{equation} \label{eq:qkin_integrable}
    q_{kin} = \frac{\Omega_{\varphi'}}{\Omega_{\theta}}
\end{equation}
which is the ratio of the average toroidal transit frequency \(\Omega_{\varphi'}\) in the moving frame, i.e. the frequency by which the particle completes a \(2\pi\) orbit toroidally in the frame rotating with frequency \(\Omega\), to average poloidal return frequency \(\Omega_{\theta}\), i.e. the frequency by which the particle takes to return to the midplane.
More generally, \(q_{kin}\) is equal to the rotation number of an orbit under the action of the Poincar\'{e} map \(f^{\pm}_{\mu,K}\), relevant to dynamical systems theory.

\(q_{kin}\) has a different physical origin for passing and trapped particles due to the topological difference between passing and trapped orbits.
For passing particles, the toroidal and poloidal transit times are determined by the parallel transit times of the particle along magnetic field lines as they wind around the torus.
Meanwhile for trapped particles, the toroidal transit time is determined by the toroidal precession frequency of the bounce orbits, and the poloidal transit time is given by the bounce time.
In both the integrable and non-integrable cases, \(q_{kin}\) can be computed by integrating the characteristic equations \eqref{eq:gk_char} using an RK4 method for ODEs.
Whenever possible for the numerical integration, quantities involving derivatives of fields are expressed analytically in terms of derivatives of interpolating functions of the equilibrium poloidal flux function \(\psi(R,Z)\) and flux-surface functions \(f(\psi)\).
In the integrable case, the toroidal and poloidal frequencies associated with the orbit can be directly calculated from the trajectories to evaluate \eqref{eq:qkin_integrable}.
In the non-integrable case, \(q_{kin}\) can be computed using the weighted Birkhoff average \citep{sander_birkhoff_2020} to approximate the average rotation number of a particle trajectory under the Poincar\'{e} map.

To analyze the model dynamical system using \(q_{kin}\), we first consider the case where only the time-independent zonal component of the electrostatic potential \(\hat{\phi} = \langle \phi \rangle(\psi)\) is included.
In this case, the test-particle Hamiltonian is also axisymmetric, leading to conservation of the canonical toroidal angular momentum \(P_\varphi\).
This implies integrability for the gyrokinetic test-particle orbits, meaning that almost all particle orbits will be confined to a set of nested invariant tori which foliate the phase space.
These invariant tori will also be invariant tori of the planar maps \(f^{\pm}_{\mu,K}\), in which case they can be labeled by the radial coordinate \(P_{\varphi}\).

Similar to how all magnetic field lines in a given flux surface labeled by \(\psi\) will have the same safety factor \(q=q(\psi)\), all particles in a given invariant torus of the map \(f^{\pm}_{\mu,K}\) labeled by \(P_{\varphi}\) will have the same kinetic safety factor \(q_{kin}=q_{kin}(P_{\varphi})\).
In figure~\ref{fig:q_kinetic}, we show plots of \(q_{kin}(P_\varphi)\) for two different values of \(\mu, K\) corresponding to thermal passing and trapped particles.
Specifically, we initialize particles on the outboard midplane with kinetic energy equal to the equilibrium ion temperature at the shearless region \(\mathcal{E} = T_{i0}(\psi_n = 0.8)\).
For trapped particles, we use a ratio of perpendicular to total kinetic energy \(\mathcal{E}_\perp/\mathcal{E}=2/3\) at the outboard midplane, while for passing particles we take \(\mathcal{E}_\perp/\mathcal{E}=1/3\).
Passing particles complete orbits approximately follow the field lines, and have \(q_{kin}\) profiles similar to the \(q\) profile of the magnetic field lines.
Meanwhile, trapped particle orbits differ topologically from magnetic field lines as they do not wind around the torus.
The toroidal precession from \(E \times B\) rotation \(\Omega_E\) is the dominant factor determining \(q_{kin}\) in this case.
Note there is a nearly uniform offset of \(q_{kin}\) from \(\Omega_E\) in the trapped particle case, which likely results from the bounce-averaged ion \(\nabla B\)-drift.

\begin{figure}
    \centering
    \includegraphics{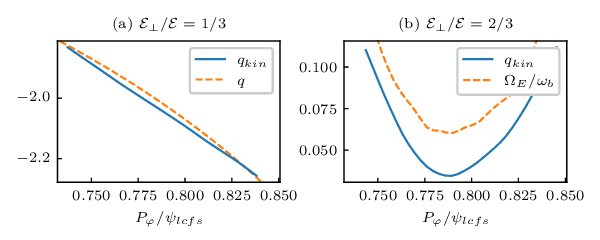}
    \caption{Plots of \(q_{kin}(P_\varphi)\) for (a) passing \(\mathcal{E}_\perp/\mathcal{E} = 1/3\) and (b) trapped particles \(\mathcal{E}_\perp/\mathcal{E} = 2/3\) in the vicinity of the zonal jet, with the kinetic energy \(\mathcal{E}\) equal to the ion temperature in the shearless region. The ratio of perpendicular to total kinetic energy \(\mathcal{E}_\perp/\mathcal{E}\) is taken at the outboard midplane. The magnetic safety factor \(q(\psi)\) and \(E \times B\) rotation \(\Omega_E(\psi)\) divided by the bounce frequency \(\omega_b\) are shown for comparison.}
    \label{fig:q_kinetic}
\end{figure}

For perturbations of the form given by \eqref{eq:ballooning_mode}, KAM theory can be invoked to argue that many of these invariant tori will survive perturbation.
Depending on \(q_{kin}'(P_{\varphi})\), which determines the \textit{twist} or \textit{shear} of the map \(f^{\pm}_{\mu,K}\), two variants of KAM theory are potentially applicable.
If \(q_{kin}\) is smooth and \(q_{kin}'(P_{\varphi}) \neq 0\) across the region of interest, the map \(f^{\pm}_{\mu,K}\) is known as a \textit{twist map}.
In this case, classical KAM theory is applicable, and it is typically expected that invariant tori with the most irrational rotation numbers \(q_{kin}\) will be the most robust under perturbation \cite{escande1985}.
Referring to figure~\ref{fig:q_kinetic}(a), this version of KAM would be relevant to the passing particles as the magnetic shear, which strongly dominates over the \(E \times B\) shear in determining the behavior of \(q_{kin}\), does not change sign.

A second case occurs when \(q_{kin}\) has a non-degenerate minimum or maximum at an isolated radial coordinate \(P_\varphi^*\), equivalently \(q_{kin}'(P_{\varphi}^*) = 0\) and \(q_{kin}''(P_{\varphi}^*) \neq 0\).
In this case \(f^{\pm}_{\mu,K}\) is known as a \textit{nontwist map}, and the invariant torus labeled by \(P_{\varphi}^*\) is known as a \textit{nontwist} or \textit{shearless torus}.
In Appendix~\ref{app:nontwist_tori}, we note that the definition of shearless torus is independent of the choice of Poincar\'{e} section, implying that the nontwist condition is a topological condition applicable to the continuous-time dynamics as well.
Despite classical KAM theory being inapplicable, shearless tori are phenomenologically observed to be extremely robust to perturbation, often times being the last remaining invariant torus in otherwise chaotic systems.
Physics-focused reviews of shearless tori can be found in \citet{del-castillo-negrete_chaotic_2000,morrison_magnetic_2000,caldas_shearless_2012}, and rigorous mathematical justification of their existence has been established with non-twist KAM theorems \citep{delshams_kam_2000}.

Referring to figure~\ref{fig:q_kinetic}(b), nontwist KAM theory would be relevant to the trapped particles.
The bounce motion averages out the effect of the magnetic shear, and \(E \times B\) shear from \(\Omega_E\) becomes the dominant factor influencing \(q_{kin}\).
Thus, the shearless region given by \(\partial_\psi \Omega_E(\psi) \propto \gamma_E = 0\) is potentially associated with shearless invariant tori in the trapped regions of phase space.
We remark that while we discussed these conditions for electrostatic transport, the nontwist condition is a condition only on the rotation numbers of particle orbits, and is applicable to any collisionless particle dynamics that can be described using a Hamiltonian formalism.
For example, in \citet{anastassiou_role_2024} the role of shearless transport barriers was discussed for magnetic perturbations in the edge.

We end this section by remarking that recent work has used singularity theory to classify shearless tori in higher dimension and prove corresponding nontwist KAM theorems about their persistence under perturbation \citep{gonzalez-enriquez_singularity_2014}.
For this work, we primarily focus on planar maps, where much more is known about the phenomenology of shearless tori.
However, as pointed out in earlier works \citet{del-castillo-negrete_area_1996,rypina_robust_2007}, one way to define shearless tori in integrable systems is through degeneracies in the quadratic part of the Hamiltonian.
In higher dimension, these degeneracies can be detected in action-angle coordinates by looking for values of \(J\) where the local shear \(D\omega(J)\) is singular.
Here \(D\omega(J)\) is the matrix of partial derivatives \([D\omega(J)]_{ij} = \partial_{J_j} \omega_i\) of the torus frequencies \(\omega_i = \partial_{J_i} H\) with respect to the action coordinates \(J_j\).
Observe that \(D\omega(J)\) is equal to the matrix of second derivatives of the Hamiltonian with respect to \(J\), so a shearless torus with action \(J^*\) can be associated with a degenerate critical point of an associated potential function \(V(J) = H(J) - \omega(J^*) \cdot J\), furnishing the link with singularity theory.
We will discuss some conjectures relating to these higher-dimensional shearless tori in \S\ref{subsec:assumptions}.

\subsection{Persistence of Shearless Tori in Model Dynamical System} \label{subsec:invariant_tori}

\begin{figure}
    \centering
    \includegraphics{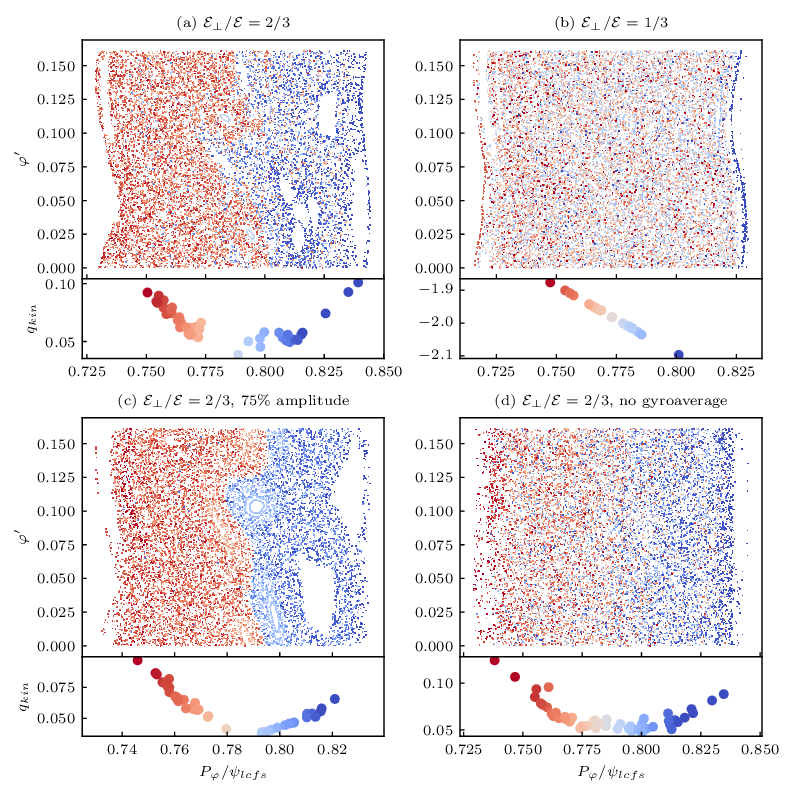}
    \caption{Poincar\'{e} sections of model gyrokinetic system with corresponding rotation numbers in the associated sub-panels below. Particle trajectories are colored by their average radial location, measured by \(P_{\varphi}\). The first row shows (a) trapped particles and (b) passing particles experiencing the drift wave at observed amplitude. The second row shows trapped particles experiencing (c) the drift wave at 75\% amplitude and (d) the drift wave at observed amplitude but without gyro-averaging applied}
    \label{fig:poincare_section}
\end{figure}

Now, we consider the full model dynamical system with electrostatic potential \(\hat{\phi} = \langle \phi \rangle + \delta \hat{\phi}\) including the ballooning mode \eqref{eq:ballooning_mode} extracted from the gyrokinetic simulation.
We consider the motion of the deuterium ions, as the modes rotate in the ion diamagnetic direction and interact more strongly with ion trajectories.
We take \(\mu, K\) to be the same as the passing \(\mathcal{E}_\perp/\mathcal{E}=1/3\) and trapped \(\mathcal{E}_\perp/\mathcal{E}=2/3\) cases considered in figure~\ref{fig:q_kinetic}.
We solve the characteristic equations \eqref{eq:gk_char} using an RK4 method for ODEs.
Note that previous studies have shown that the gyro-averaging can reduce the level of chaos present in a system \citep{del-castillo-negrete_gyroaverage_2012,martinell_gyroaverage_2013,da_fonseca_area-preserving_2014,kryukov_finite_2018}, and hence we focus on the case where the gyro-averaging is applied.
To implement the gyro-average, we take the eikonal approximation \(\mathbf{k}_\perp = n \nabla(\varphi - q \eta)\) and multiply the ballooning mode envelope \(g_n\) by the gyro-average factor \(J_0(k_\perp \rho)\).

To visualize the chaotic transport due to the perturbed electrostatic potential, we plot Poincar\'{e} sections in Figure~\ref{fig:poincare_section}.
Panels (a,b) shows the Poincar\'{e} section for the model gyrokinetic system with the drift wave at the amplitude observed in the simulations for trapped and passing particles corresponding to the \(\mathcal{E}_\perp/\mathcal{E}=2/3\) and \(\mathcal{E}_\perp/\mathcal{E}=1/3\) cases.
Focusing on panel (a), despite presence of chaos throughout the section, the trapped particles mostly remain in two distinct regions of phase space separated by a radial barrier around \(P_{\varphi}/\psi_{lcfs} \approx 0.8\).
The bottom sub-panel shows \(q_{kin}\) plotted against the average \(P_{\varphi}\) for the trajectory, both computed using the weighted Birkhoff average.
Although the rotation numbers are perturbed from the integrable case due to the influence of the drift-wave fluctuations, a non-degenerate minimum still appears in the rotation number plotted against the radial coordinate, showing that this barrier corresponds to a shearless torus.
In contrast, panel (b) shows no clear barrier to radial transport for passing particles.

The effect of the shearless torus can be made more pronounced by reducing the amplitude of the drift wave, which is done in panel (c).
An extremely sharp delineation between the two populations of trapped particles is now visible, suggesting the survival of the invariant shearless torus.
In contrast, in panel (d) we show the Poincar\'{e} section for the model gyrokinetic system with the drift wave at the full amplitude, but without applying the gyroaverage.
In this case, the level of chaos is higher, and the shearless torus has been destroyed, and the two populations of trapped particles are now intermixed.
Consistent with previous studies, we find that gyroaveraging reduces the level of chaos in the system.

To quantify the partial phase space transport barrier effect as well as the dependence on the particle velocity more systematically, we consider the \textit{transmissivity} \(\eta_t\), defined as follows.
We initialize gyrokinetic test particles on the outboard midplane with an initial gyrocenter radial location of \(\psi_n = 0.78\) and a given \(v_{\parallel}\) and \(v_{\perp}\), uniformly distributed in toroidal angle.
Note that the initial \(P_{\varphi}\) of these particles will vary depending on \(v_{\parallel}\).
We take \(\eta_t(v_{\perp},v_\parallel)\) to be the fraction of particles which experience a change in \(\Delta P_{\varphi} / \psi_{LCFS} = 0.04\) over a time window of 4 ms, approximately twice the time window over which the XGC simulations are run.
This mimics the radial transport of a particle across the shearless transport barrier to \(\psi_n = 0.82\) in physical space, accounting for the finite orbit width of the ions.
The behavior of the transmissivity metric in shearless regimes has been studied in a number of works, including \citet{szezech_transport_2009,viana_transport_2021,osorio-quiroga_larmor_2024,grime_effective_2025}.

\begin{figure}
    \centering
    \includegraphics{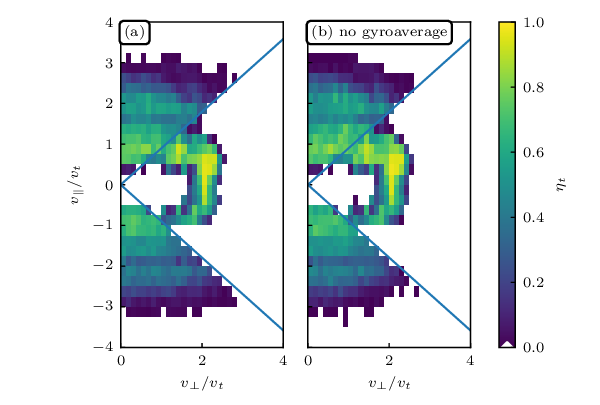}
    \caption{(a-b) Heatmaps showing the transmissivity \(\eta_t\) across the shearless transport barrier region varying \(v_{\parallel}\) and \(v_{\perp}\), with and without gyroaveraging respectively. Unshaded white regions correspond to \(\eta_t = 0\). The trapped/passing boundary in velocity space is shown with blue lines.}
    \label{fig:plot_transmissivity}
\end{figure}

In the presence of unbroken invariant tori, the transmissivity would be exactly zero \(\eta_t = 0\).
Transmissivity \(\eta_t \approx 1\) would indicate fast transport of particles across the shearless region due to chaos over the chosen time window.
Even without unbroken invariant tori, remnants of shearless tori can still exhibit ``stickiness'' with small but non-zero transmissivity.
For example \(\eta_t \lesssim 0.1\) would indicate an effective reduction of transport across the shearless region over the chosen time window of 4 ms.

Plots of the transmissivity \(\eta_t(v_{\perp}, v_\parallel)\) are shown in figure~\ref{fig:plot_transmissivity}, where panel (a) shows the nominal case with gyroaveraging applied to the fields, while panel (b) shows the case with no gyroaveraging applied to the fields.
Focusing first on the nominal case in panel (a), overplotting the trapped/passing boundary \((v_\perp / v_\parallel)^2 = B_{max}/B_{min} - 1\) reveals that the majority of trapped particle phase space experiences a complete suppression of transport across the shearless transport barrier region, strongly suggestive that shearless invariant tori as identified in figure~\ref{fig:poincare_section}(a,c) are generic across most of trapped particle phase space.
Comparing this to the case with no gyroaveraging in panel (b), the picture is extremely similar, although a close inspection in the trapped region of phase space reveals some regions where the transmissivity has become non-zero.

Both cases show a band of much larger transmissivity within the trapped portion of phase space near \(v_\perp/v_t \sim 2\).
Trapped particles in this region of phase space have nearly stationary toroidal precession in the frame rotating with \(\Omega\), indicating strong wave-particle resonance with the model drift wave.
This indicates that although gyroaveraging may change the critical amplitude for the survival of the shearless torus, proximity to resonance may play a more significant role in determining if the shearless torus survives or not.
More generally, one could expect many features of figure~\ref{fig:plot_transmissivity} to originate from the structure of particle phase-space resonances; see for example \citet{antonenas_analytical_2024} for a detailed study of how \(q_{kin}\) determines the structure of the resonances.
The reduction in the size of resonances near shearless regions has been proposed as a mechanism by which shearless tori or their remnants can suppress transport \citep{rypina_robust_2007}.
We defer detailed study of these resonances in response to drift-wave perturbations to future work.

\begin{table}
    \centering
    \begin{tabular}{lccc}
        ~ & passing & trapped & all \\
        \(\overline{\eta_t}\) & 0.5 & 0.27 & 0.33 \\
        \(\overline{\eta_t}\), no gyroaverage & 0.52 & 0.29 & 0.35 \\
    \end{tabular}
    \caption{Thermally-averaged transmissivity \(\eta_t\), over different portions of velocity space.}
    \label{tab:transmissivity}
\end{table}

To determine the net effect of this reduced phase space transport over all particles, we can consider the average \(\overline{\eta_t}\) over a thermal Maxwellian given by
\begin{equation*}
    \overline{\eta_t} = \frac{
        \int_{-\infty}^{\infty} \int_{0}^{\infty} \eta_t(v_\perp, v_\parallel) \operatorname{exp}(-(v_\parallel^2 + v_\perp^2)/(2 v_t^2)) v_\perp \mathrm{d}v_\perp \mathrm{d}v_\parallel
    }{
        \int_{-\infty}^{\infty} \int_{0}^{\infty} \operatorname{exp}(-(v_\parallel^2 + v_\perp^2)/(2 v_t^2)) v_\perp \mathrm{d}v_\perp \mathrm{d}v_\parallel
    }
\end{equation*}
We also consider averages of \(\eta_t\) over a thermal Maxwellian where the integrals are restricted to trapped and passing portions of velocity space respectively.
These averages are reported in Table~\ref{tab:transmissivity} for both the nominal case and the case with no gyroaveraging.
Despite the transmission band for trapped particles around \(v_\perp/v_t \sim 2\), the overall effect of the shearless tori is to reduce the transmissivity by a factor of \(2\) in the trapped portions of phase space compared to the passing portions of phase space.
Trapped particles make up about \(75\%\) of thermal particles at the outboard midplane, suggesting that the presence of shearless tori can lead to a significant drop in the overall radial particle transport due to the presence of shearless tori.
The transmissivity increases slightly without gyroaveraging, although the qualitative picture stays the same.

To connect these results with the observed profiles in the simulations, we posit that the reduction in transmissivity due to shearless tori would likely lead to a drop in the turbulent particle diffusivity \(D_n\).
Meanwhile, the transmission band around \(v_\perp/v_t \sim 2\) suggests that trapped particles with higher energies cross the shearless region more easily, leading to a more pronounced effect in the density transport channel compared to the thermal energy transport channel.
For steady-state density profiles with no core fueling, relevant to this scenario, the presence of peaked density profiles requires the existence of an inward flux of particles, also known as a particle pinch \citep{angioni_particle_2009}.
The density gradient would be set by the balance between the outward diffusive flux \(\sim D_n \nabla n\) and the inward pinch \(\Gamma_{pinch}\), \(\nabla n \sim \Gamma_{pinch} / D\).
Although we have not ruled out a localized increase in the inward particle pinch, the presence of shearless tori suggests a plausible mechanism by which a long-lived increase in ion gyrocenter density gradients can be sustained in the shearless region.

\section{Signatures of Shearless Tori in Gyrokinetic Simulations}\label{sec:shearless_application}

Having established the role of shearless tori in suppressing radial transport in the model test particle dynamics, we now turn to compare the result of the model with the fully self-consistent gyrokinetic dynamics simulated by XGC.
In this section, we begin in \S\ref{subsec:evidence} by presenting direct evidence of signatures of shearless tori in the gyrokinetic simulations visible in the ion gyrocenter distribution function.
In \S\ref{subsec:eddy_detachment} we then study the time-dependent behavior of these phase space signatures and demonstrate a link between the arrest of turbulence propagation at the shearless region with eddy detachment events involving the shearless tori in gyrokinetic phase space.
These signatures provide evidence of the active role that shearless transport barriers play in regulating the turbulence dynamics, and suggest phenomenological approaches for modeling the interaction between turbulence and shearless regions.

\subsection{Direct Evidence of Shearless Tori in Gyrokinetic Simulations} \label{subsec:evidence}

\begin{figure}
	\centering
	\includegraphics{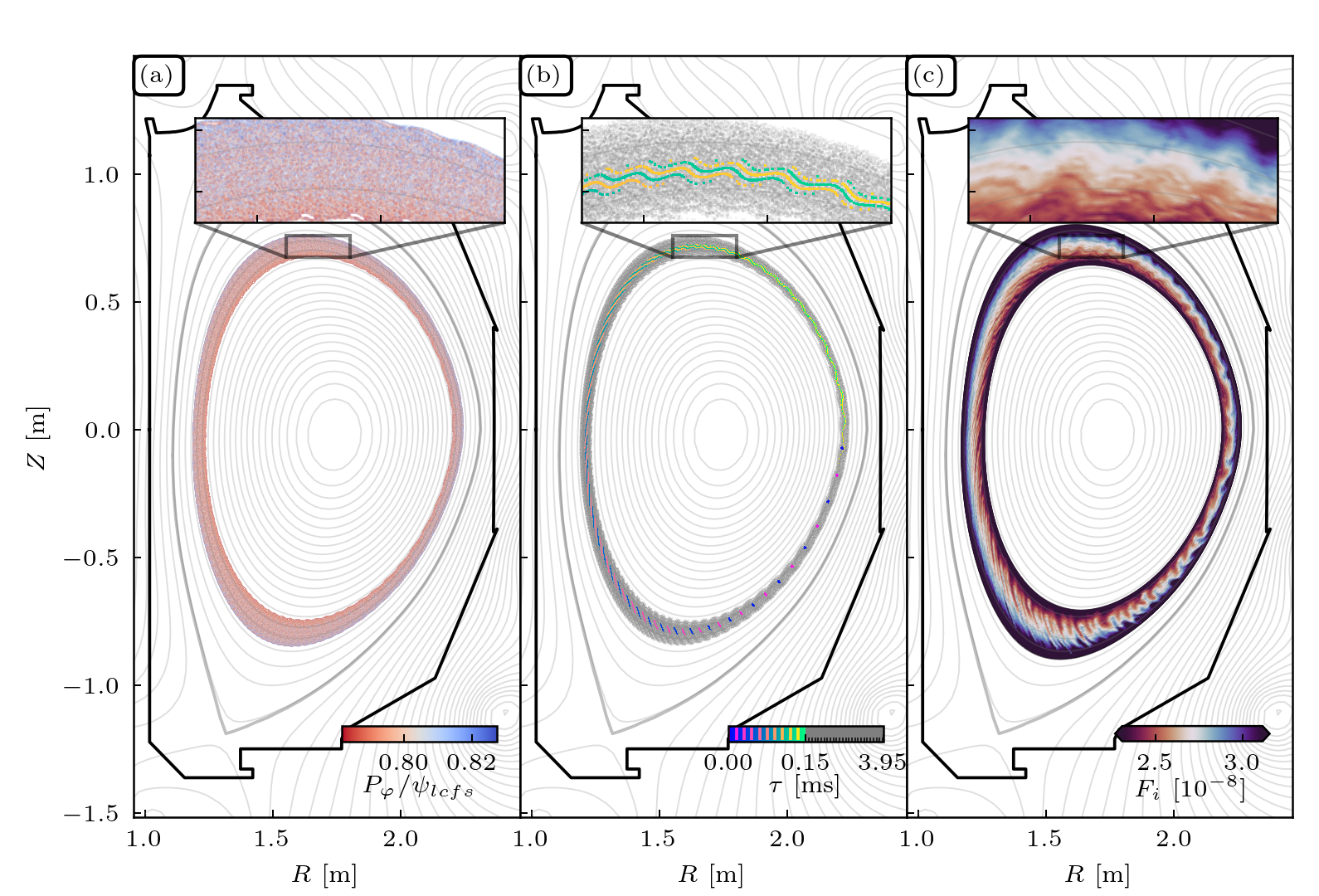}
	\caption{(a) Poloidal surface of section for passing particles in the \(\mathcal{E}_\perp/\mathcal{E}=1/3\) case, initialized uniformly through the domain. The particles are colored by the average radial location of the particles measured by \(P_{\varphi}\), which visualizes radial mixing. (b) Poloidal surface of section initialized with a small blob of particles near the outboard midplane. The particles are colored by their time of flight \(\tau\), which visualizes particle dispersion. (c) Poloidal cross-section of the ion gyrocenter distribution function \(F_i\) from the simulations, restricted to the manifold \(E_{\mu,K}\) corresponding to the Poincar\'{e} sections in (a,b).}
	\label{fig:poloidal_barrier_passing}
\end{figure}

\begin{figure}
	\centering
	\includegraphics{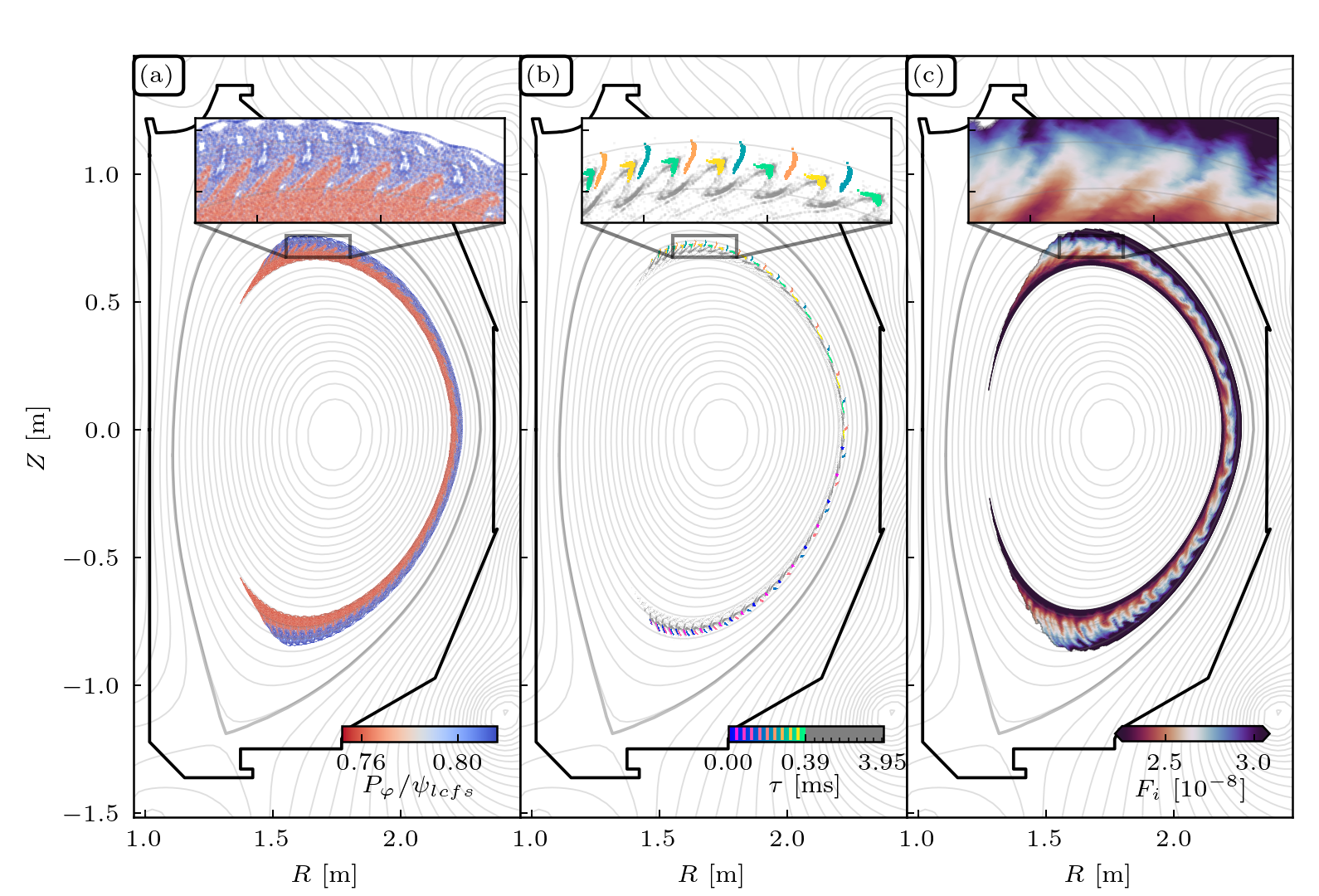}
	\caption{Labeling is same as in figure~\ref{fig:poloidal_barrier_passing}, except for trapped particles in the \(\mathcal{E}_\perp/\mathcal{E}=2/3\) case.}
	\label{fig:poloidal_barrier_trapped}
\end{figure}

Having constructed a model dynamical system and shown the existence of shearless tori, we now compare the model with the gyrokinetic simulation data.
This can be accomplished by considering the ion gyrocenter distribution function \(F_i\), which evolves by the Vlasov equation \eqref{eq:XGC_vlasov}.
In the absence of collisions and sources \(C_i = S_i = N_i = 0\), \(F_i\) will be a scalar Lagrangian invariant of the flow, i.e., it will be constant along the characteristics of the system.
In the trapped region of phase space, the model dynamical system predicts that mixing of characteristics should be suppressed across shearless tori, whereas mixing along tori is unimpeded.
If this effect persists in the gyrokinetic simulation, we should expect to see iso-contours of \(F_i\) trace out meandering paths similar to the shape of the shearless torus.
Meanwhile, in the passing region of phase space, we should expect iso-contours of \(F_i\) to cross the shearless region, as no shearless torus is predicted to exist in that region.
To visualize this comparison, we can examine the distribution of \(F_i\) projected onto \(E_{\mu,K}\) and look for evidence of the effect of shearless tori.
In XGC, \(F_i\) is computed as $F_a+F_g+F_p$, where $F_a$ is an analytic function (Maxwellian for ions and Maxwellian-Boltzmann for electrons), $F_g$ is stored on the grid, and $F_p$ is carried by the particles \citep{ku_new_2016}.

To perform this comparison, we focus on poloidal sections, i.e. the intersection of \(E_{\mu,K}\) with a poloidal plane.
We make a technical remark that the poloidal surface of section is not a true Poincar\'{e} section for trapped particles, as they do not always pass through poloidal planes transversally.
While relevant for computing topological information such as the rotation number, this is not relevant for visualization purposes.
In figures~\ref{fig:poloidal_barrier_passing} and \ref{fig:poloidal_barrier_trapped}, we plot a sequence of poloidal surfaces of section in the (a) and (b) panels, as well as a comparison against \(F_i\) in the (c) panels.
In the (a) panels, we initialize the surface of section with particles uniformly throughout the domain and color the particles by their average location, which visualizes radial mixing similar to the Poincar\'{e} sections in figure~\ref{fig:poincare_section}.
Meanwhile, in the (b) panels, we initialize the surface of section with a small blob of particles near the outboard midplane.
Here, we use color to show the time evolution of the blob, visualizing particle dispersion.
Note that in the trapped particle case, both the co-passing and counter-passing branch is shown for the first banana orbit, but subsequently only the counter-passing parts of the orbit are shown.
Finally, the (c) panels show \(F_i\), restricted to the intersection of \(E_{\mu,K}\) with a poloidal plane.

We start by considering passing particles in the \(\mathcal{E}_{\perp}/\mathcal{E} = 1/3\) case in figure~\ref{fig:poloidal_barrier_passing}.
Similar to figure~\ref{fig:poincare_section}(b), panel (a) shows no clear barrier to radial mixing.
As in figure~\ref{fig:poincare_section}(b), we see no barrier to radial mixing in the shearless region.
The reason for this can be understood through panel (b), which shows that nearby particles quickly disperse within a single poloidal transit due to the influence of magnetic shear.
In panel (c) we see that \(F_i\) has a finely filamented structure which mirrors the short-time particle dispersion of the passing particles in the model dynamical system, illustrating the lack of any radial transport barrier in the passing particle regions of phase space.

Now we consider trapped particles in the \(\mathcal{E}_{\perp}/\mathcal{E} = 2/3\) case in figure~\ref{fig:poloidal_barrier_trapped}.
Similar to figure~\ref{fig:poincare_section}(a), panel (a) shows two populations of trapped particles with suppressed radial mixing across the shearless torus.
This behavior is made even more evident in panel (b), where a small blob of trapped particles is initialized near the outboard midplane.
The gray band, which visualizes the dispersion of the particles after many bounce orbits, approximately follows the meandering curve which separates the two populations of trapped particles in panel (a).
This shows that particles easily disperse along the shearless torus, but radial dispersion across the torus is suppressed.
We remark that this particle dispersion behavior does not actually require the presence of an unbroken shearless torus, as it has been observed that the remnants of shearless invariant tori often continue to act as ``sticky'' sets which suppress transversal transport across the former tori \citep{szezech_transport_2009}.
Finally in panel (c), we observe that the level sets of \(F_i\) have a clear meandering structure similar in shape to the barriers observed in panels (a) and (b), reflecting the presence of a shearless phase-space transport barrier.
This provides strong evidence that shearless tori observed in the model dynamical system are able to survive as shearless phase-space transport barriers in the high-fidelity gyrokinetic simulations.

\subsection{Shearless Barrier Breakdown via Eddy Detachment}\label{subsec:eddy_detachment}

Having established the presence of shearless phase-space transport barriers, we now turn to the question of the mechanism which stops the radial propagation of avalanches across the shearless region.
In this section, we show how avalanches impinging on the shearless tori cause eddy detachment, reminiscent of the formation of warm and cold core ring structures in oceanic jets.

\begin{figure}
	\centering
	\begin{tabular}{c}
		\includegraphics[width=0.49\linewidth]{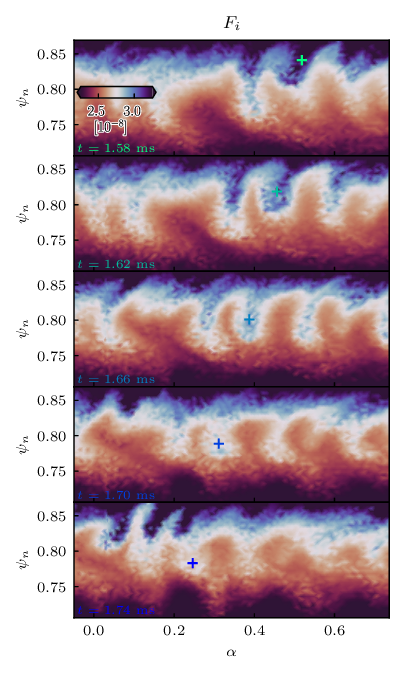}
	\end{tabular}%
	\begin{tabular}{c}
		\includegraphics[width=0.49\linewidth]{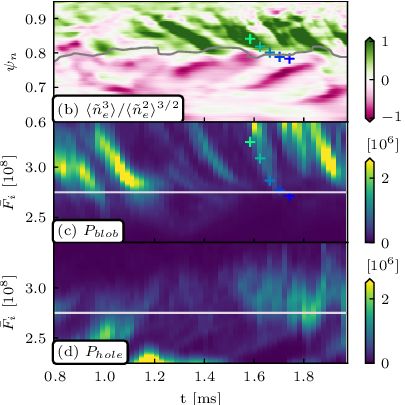} \\
		\includegraphics[width=0.35\linewidth]{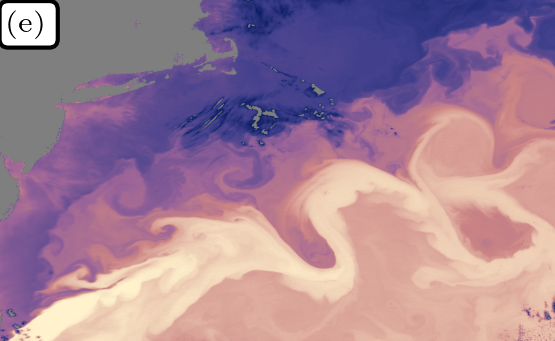}
	\end{tabular}
	\caption{(a) Temporal snapshot sequence of ion GC distribution function \(F_i\). The timestamp is shown in colored text, and a single eddy detachment event is tracked by the `+' symbols.
	(b-d) Sequence of panels showing the density skewness along with the results from the persistent homology analysis of blob and hole structures. The `+' symbols indicate the same eddy detachment event as in the left panel.
	(e) Image of the Gulf Stream sea surface temperature showing eddy detachment, leading to the formation warm and cold core rings (source: https://earthobservatory.nasa.gov/images/5432/the-gulf-stream)}
	\label{fig:blob_phase_space}
\end{figure}

In Figure~\ref{fig:blob_phase_space}(a), we show the time evolution of the ion gyrocenter distribution function \(F_i\) for trapped particles in the \(\mathcal{E}_{\perp}/\mathcal{E}=2/3\) case, restricted to \(E_{\mu,K}\) as before.
For ease of viewing, the data is projected from a poloidal section to flat coordinates \((\alpha, \psi_n)\), where \(\alpha = \varphi - q \theta\) is the perpendicular field line label.
Focusing on the propagation of structures from the edge towards the core, we observe several tongue-like structures that detach into blobs that then quickly dissipate.
When restricted to \(E_{\mu,K}\), \(F_i\) increases radially outward, so phase space structures propagating from the edge correspond to phase space blobs here.

Tracking the evolution of one of these perturbations in time and radius and overplotting onto the density skewness in Figure~\ref{fig:blob_phase_space}(b), we observe that these structures precisely track the radial propagation of the avalanches.
Moreover, notice that the blob detachment event corresponds with the termination of the density skewness avalanche at the shearless region.
We remark that this behavior is highly reminiscent of the eddy detachment observed in oceanic jets such as the Gulf Stream \citep{the_ring_group_gulf_1981,olson_rings_1991}.
An example of eddy detachment in the Gulf Stream is shown in Figure~\ref{fig:blob_phase_space}(e).
The meandering of the Gulf Stream, associated with north-south intrusions of warmer and colder water from the tropics and arctic respectively, pinches off and detaches to form eddy rings with cores of warm or cold water detectable by satellite imaging.

\begin{figure}
	\centering
	\includegraphics[width=\linewidth]{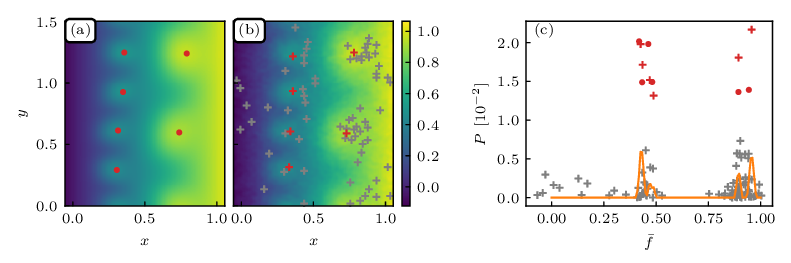}
	\caption{Example 2d function with blobs, both (a) without and (b) with noise applied. Local maxima are indicated with points in (a), and with crosses `+' in (b). (c) Persistence diagrams computed by sublevel set persistent homology for the original (red points) and noisy functions (red and gray crosses). The silhouette for the noisy case is overplotted (orange curve), indicating two bands of local maxima.}
	\label{fig:blob_persistence}
\end{figure}

In order to demonstrate that this phase space blob detachment phenomenon is a robust feature of the data, we proceed to study the statistical properties of local maxima and minima in the ion gyrocenter distribution function \(F_i\).
To illustrate the challenge of detecting `blobs' in noisy datasets, we turn to figure~\ref{fig:blob_persistence}.
Panel (a) shows a function \(f\) with two bands of blobs at different values of \(x\). 
In the absence of noise, these blobs can be cleanly identified by finding the local maxima of \(f\), marked with red points.
However, panel (b) shows the same function \(f\) with high-frequency noise added, which may represent noise from particle discreteness or fluctuations from smaller-scale physical processes.
Although the two bands of blobs appear to persist, many new local maxima marked with `+' symbols appear due to the presence of noise.

To deal with this challenge, we apply tools from topological data analysis (TDA), namely sublevel set persistent homology \citep{mirth_representations_2021}.
Sublevel set persistent homology provides a quantitative way to address the question of which local maxima and minima of a function are ``robust'', as well as providing algorithms for efficiently identifying these features.
The method is based on computing the homology of sublevel sets, i.e. the number of connected components and holes present in the set of points \((x,y)\) such that \(f(x,y) \le z\) for a given \(z\).

To explain how this method quantifies the robustness of local maxima and minima, we give a brief intuitive explanation.
If one imagines that \(f(x,y)\) measures the height of a landscape and \(z\) to be the level of water filling the landscape, then local maxima of \(f\) with height greater than \(z\) correspond to peaks that form landmasses breaking the surface of the water.
For each local maximum, there will exist a range of water heights \(f_{birth} \le z < f_{death}\) where the local maximum will be the tallest peak on its own landmass, where \(f_{birth}\) and \(f_{death}\) are the values of \(f\) for which the landmass is distinct.
These values define the \textit{persistence} \(P = (f_{death}-f_{birth})/2\), which measures (half) the range of \(z\) values for which the local maximum is the tallest peak on its own distinct landmass.
While new local maxima can easily be created by noise, exactly one local maximum will inherit the title of being the tallest peak on a given landmass at a time.
Thus, counting high-persistence features gives a way to quantify the number of significant local maxima.
A similar intuitive explanation can be constructed for local minima by considering the deepest trench in each distinct body of water.

To use this quantity, the persistence can be plotted against the average birth and death value \(\bar{f} = (f_{birth}+f_{death})/2\) of each feature to create a \textit{persistence diagram}.
In the case where \(f\) has an overall gradient in \(x\), \(\bar{f}\) is correlated with the \(x\) location of the feature.
Note that persistence diagrams also sometimes plot \(f_{birth}\) against \(f_{death}\), which is simply a 45-degree rotated version of the type of diagram used here.
The persistence diagrams for both the noisy and non-noisy example \(f\) are shown in figure~\ref{fig:blob_persistence}(c).
The two bands of blobs are visible as a cluster of high-persistence features detected around \(\bar{f} \approx 0.4\) and \(\bar{f} \approx 0.8\).
Although many more local maxima are present in the noisy function, both persistence diagrams show the same number of high-persistence local maxima corresponding to the two bands of blobs, marked in red.
This persistence behavior can be rigorously understood through the stability of persistence diagrams under perturbation \citep{cohen-steiner_stability_2007}.
Physically, \(P\) measures (half) the fluctuation amplitude \(\delta f\) which is organized into a topologically distinct `blob' of \(f(x,y)\).

In order to give a statistical summary of the topological information provided by the persistence diagram into a single curve, we use \textit{silhouettes} \citep{chazal_stochastic_2014}.
These are constructed by taking a weighted average of tent functions associated with each feature in the persistence diagram.
Intuitively, the silhouette is a function which peaks where the persistence diagram is dominated by high-persistence features, and is low where the persistence diagram is dominated by low-persistence features.
We use a power weighting of \(P^2\) to emphasize the high-persistence features.
With this weighting, the silhouette \(\propto |\delta f|^3\), where \(\delta f\) is again the fluctuation amplitude associated with topologically distinct blobs or holes.
An example silhouette is shown in figure~\ref{fig:blob_persistence}(c), showing how the two bands of blobs are captured by peaks in the silhouette curve at the values of \(\bar{f}\) corresponding to the radial location of the blobs.

Using these silhouettes, we can track how topological features of the ion gyrocenter distribution function \(F_i\) correlate with the avalanches observed in the physical fields.
We focus on the detection of local maxima and minima of the ion gyrocenter distribution function \(F_i\) in each of the 16 computational toroidal planes over a range of values of time \(t\).
When restricted to \(E_{\mu,K}\), this corresponds to finding local maxima and minima of planar functions.
We use the simplex tree and representations modules of the GUDHI library \citep{boissonnat_simplex_2014,gudhi:FilteredComplexes,gudhi:PersistenceRepresentations} to compute the persistence diagrams and silhouettes.
The results of this analysis is shown in panels (c) and (d) of Figure~\ref{fig:blob_phase_space}, where the silhouettes are plotted in a color panel.
\(P_{blob}\) and \(P_{hole}\) correspond to the persistence of local maxima and minima of \(F_i\), and stronger persistence is shown with brighter colors.
Following the earlier discussion, peaks in the persistence along the vertical axis of these plots correlate with radial location of bands of maxima/minima in \(F_i\), while evolution of the peaks in the horizontal axis give the time evolution of these radial bands.

Focusing on panel (c), the phase space blobs show clear correlation with incoming avalanches from the edge, showing that these avalanches are tied to local topological features of the ion gyrocenter distribution function.
These blobs consistently terminate at a particular value of \(F_i \approx 2.75\), which corresponds to the value of \(F_i\) aligned with the shearless torus.
This indicates that the shearless phase-space barrier eddy detachment event observed in panel (a) is a generic feature of inward-propagating avalanches.
Note that panel (d) shows faint bands of outward propagating phase space holes, although these are not as pronounced as the inward propagating phase space blobs.
This is consistent with the expectation that turbulence spreading would be more relevant to the larger amplitude fluctuations coming from the edge.

We conclude this section with a brief speculation about the potential mechanism by which shearless transport barriers could lead to the arrest of turbulence propagation.
In \citet{pratt_dynamics_1986,pratt_meandering_1988} it was proposed that eddy detachment in the Gulf Stream could be understood through the idealized model following the contour dynamics of a single potential vorticity (PV) front.
The key process that leads to eddy detachment in the model is vortex induction due to the north-south displacement of the PV front.
The strong PV jump across the jet correspondingly leads to a strong nonlinear vortex flow induced by displacements of the jet, leading to wave breaking and subsequent eddy detachment.
We remark that this can also be interpreted as a type of reconnection for meandering tori, potentially related to the breakup of shearless tori in dynamical systems models \citep{wurm_meanders_2005}.

In magnetized plasmas, the PV can be identified with the ion gyrocenter density \citep{mcdevitt_poloidal_2010,gurcan_zonal_2015}.
Thus, the situation observed in the XGC simulations, namely a sudden increase in the ion gyrocenter density gradient across the zonal \(E \times B\) jet, is closely analogous to the situation of the PV jump across the oceanic jet.
Physically, displacements of the plasma across the \(E \times B\) jet would correspond to a sudden intrusion of lower or higher ion gyrocenter density across the shearless region.
Since the adiabatic parallel electron response is relevant for the outer core, the expectation would be that this sudden density change would induce significant blob spin \citep{myra_convective_2004}, i.e. vortical \(E \times B\) `eddy' flow about the blob (or hole) center.
Spin mitigates curvature-induced charge polarization in blobs, which is precisely the process responsible for driving the ITG instability, suggesting a potential physical mechanism for the arrest of turbulence propagation at the shearless region.
For future work, it may be interesting to use blob seeding experiments in simulations \citep{cheng_transport_2023} to perform controlled numerical experiments to more precisely determine the ways in which shearless regions impact the propagation of turbulence.

\section{Discussion and Outlook} \label{sec:discussion}

To summarize up to this point, this work demonstrates, for the first time in a high-fidelity simulation with realistic geometry and profiles, the presence of a shearless transport barrier in gyrokinetic drift-wave turbulence.
The genericity of the conditions under which shearless tori appear, near non-degenerate minima and maxima of the zonal \(E \times B\) rotation rate, suggests that the structures observed here may be generic to a broad class of plasma turbulence regimes.
Establishing the exact parameter regimes in which shearless transport barriers could exist in realistic scenarios is an open question.

In this section, we begin in \S\ref{subsec:assumptions} by discussing the fundamental assumptions of the test particle map model, namely the usage of collisionless dynamics and the single-mode model perturbation, and how they might be relaxed.
Then, in \S\ref{subsec:extensions} we conjecture other regimes of turbulence where shearless transport barriers may play a role and discuss their potential impact on questions of relevance for fusion energy, suggesting pathways for future work.

\subsection{Applicability to Collisional and Broadband Turbulence} \label{subsec:assumptions}

We first remark on the applicability of the model dynamical system to the gyrokinetic simulation data, as the model involves a rather severe truncation of the electrostatic fluctuations present in the gyrokinetic simulation.
The timescale associated with one iteration of the test particle map model for the trapped particles is equal to the ion bounce time \(\omega_b \approx 15\) kHz.
The ion collision rate \(\nu_{i} \approx 1.2\) kHz is a modest fraction of \(\omega_b\), suggesting weakly collisional dynamics is applicable.
Meanwhile, from figure~\ref{fig:hovmoller} it can be seen that the drift wave envelope evolves on the timescale \(\sim 20\) kHz, which is also comparable to \(\omega_b\).
Despite the severe truncation of the electrostatic fluctuations and weak timescale separation between the turbulence envelope evolution timescale and the test particle map timescale, the model dynamical system is still able to capture essential qualitative features of the gyrokinetic simulation data.

We first discuss how the reliance of shearless transport barriers on collisionless test particle dynamics used in the model can be relaxed.
Shearless transport barriers were originally studied in the fluid limit of drift wave turbulence, whereas the collisionless limit studied here focused on trapped particle dynamics which are completely absent from the fluid limit.
Thus, this suggests that shearless transport barriers could play different roles throughout a range of collisionalities in plasma turbulence.
To identify shearless transport barriers in regimes with higher collisionality, we highlight two theoretical approaches.
The first is the relationship between the local Cauchy-Green tensor, which integrates the total local deformation of a Lagrangian parcel, and diffusion.
In fluid models, a certain minimum deformation principle can be used to define a variational notion of shearless material transport barrier \citep{farazmand_shearless_2014}, across which the combined effect of diffusion and stretching due to advection will be minimum.
Another approach is to use stochastic Lagrangian trajectories, in which the effect of both collisions and advective transport is accounted for through a stochastic perturbation to the test particle dynamics.
In studies including \citet{kwon_global_2000,cao_maintenance_2024}, it is noted that shearless tori can continue to act as effective transport barriers in the presence of collisions.

Now, we discuss the reliance of the test particle map on the model electrostatic single-mode perturbation.
As discussed earlier, the assumption of electrostatic transport is superfluous, as the existence of shearless tori depends only on the presence of a non-degenerate minimum or maximum in the kinetic safety factor \(q_{kin}\).
This is a topological quantity which for passing particles closely follows the magnetic safety factor \(q\), and for trapped particles more closely follows the \(E \times B\) rotation rate \(\Omega_E\).
The key feature of the single-mode approximation was the dimensionality reduction provided by the extra gyrokinetic invariant \(K = H - \Omega P_{\varphi}\).
This would correspond physically to a regime of weak dispersion, where particles approximately ``see'' perturbations that rotate rigidly with a toroidal rotation rate of \(\Omega\).
We will discuss first the validity of the single-mode approximation, and then discuss possible ways to relax the need for weak dispersion.

The first remark is on the relationship between the single-mode assumption and particle stochasticity.
Particle stochasticity is a major pillar of quasilinear theory \citep{diamond_modern_2010}, and typically a broad spectrum of modes is invoked in order to satisfy the resonance overlap criterion necessary for particle stochasticity.
For the toroidal mode number \(n=39\) of the ballooning mode studied here, rational surfaces are closely spaced.
The representation \eqref{eq:ballooning_mode} keeps a broad range of poloidal harmonics active, and evidently from the Poincar\'{e} sections in figure~\ref{fig:poincare_section} the island overlap criterion is satisfied over most of the region of interest.
Thus, even a single tertiary mode is able to drive significant transport by itself, potentially relaxing driving gradients and temporarily ``outcompeting'' other tertiary modes in the immediate vicinity.
Since wave-kinetic models would tend to predict the localization of drift-wave modes around minima and maxima of the \(E \times B\) rotation rate, this could provide a generic mechanism by which energy is accumulated into a small number of tertiary modes, enabling the usage of a single- or few-mode model.
For example, \citet{cao_maintenance_2024} demonstrates how chaotic mixing could reinforce the amplitude of vortical flows associated with tertiary modes in a fluid slab model of resistive drift-wave turbulence.

The next remark is on the relationship between the single-mode assumption and `fully developed turbulence', in the sense of a broad spectrum of dispersive modes being active and nonlinearly exchanging energy.
Geophysical observations such as in figure~\ref{fig:blob_phase_space}(e) show that large-scale meanders of jets are fully compatible with much smaller fine-scale structures being present.
While it is unlikely single-mode test particle dynamics would be able to predict quantitative levels of transport due to the combination of both large-scale meanders and small-scale turbulent dispersion, qualitative features such as the presence of a north-south barrier to mixing due to the jet can be reproduced using models that feature severe truncations of the active mode spectrum.
For example, even in the presence of fully-developed spectra of turbulence featuring energy and enstrophy cascades, \citet{cao_rossby_2023} shows how dynamical systems models can still capture key features of turbulent mixing and small-scale statistics in Rossby wave turbulence, analogous to drift-wave turbulence in plasmas.

Finally, we turn to possible relaxations of the weak dispersion hypothesis, which stemmed from the requirement for the perturbations to be approximated by a rigid toroidal rotation \(\Omega\).
There exist definitions of shearless transport barriers in planar flows for finite-time systems with arbitrary time dependence \citep{beron-vera_invariant-tori-like_2010,farazmand_shearless_2014,falessi_lagrangian_2015}, which can act as effective barriers even for waves with chaotically forced amplitudes.
However, arbitrary time dependence of the perturbations will also break the conservation of the invariant \(K\), meaning the reduction to planar dynamics will no longer be exact.

As discussed in \S\ref{subsec:planar_map}, \citet{gonzalez-enriquez_singularity_2014} develops a systematic classification for shearless invariant tori in higher dimension using singularity theory.
In their classification, non-degenerate shearless tori in planar maps correspond to the ``fold'' singularity, originating from the fact that the lowest-order Birkhoff normal form of the Hamiltonian is cubic in the action variable in the vicinity of a non-degenerate shearless torus.
In higher dimension, shearless tori occur when the local shear \(D\omega(J)\) is singular, or equivalently when \(\operatorname{det}(D\omega(J)) = 0\).
Intuitively, one can expect that the zero level set of the scalar function \(\operatorname{det}(D\omega(J))\) over the action variables \(J\) as well as any parameters \(\lambda\) in the system forms a co-dimension 1 manifold, which is indeed the case for the fold singularity.

Since invariant tori are usually \(N\)-dimensional for \(2N\)-dimensional Hamiltonian systems, there is normally no expectation that surviving KAM tori for the usual twist case will act as global barriers to transport.
Intuitively, it is easy for paths in \(2N\)-dimensional phase space to avoid \(N\)-dimensional barriers when \(N > 1\).
The importance of the fold singularity is that it implies that shearless tori in higher dimension are organized into co-dimension 1 families.
Conjecturing that higher-dimensional (\(N > 1\)) invariant tori in co-dimension 1 fold families are ``robust'' to perturbations like their planar map (\(N=1\)) cousins, then surviving shearless KAM tori could act as global barriers to transport.
In collisionless gyrokinetics, where the gyrophase is an ignorable coordinate and the magnetic moment \(\mu\) can be taken as a fixed parameter of the particle dynamics, the \(N=2\) case would be relevant.

We also conjecture that the persistence of fold tori may be related to the ability of the model electrostatic perturbations to capture qualitative features of \(F_i\) in figure~\ref{fig:poloidal_barrier_trapped}.
Although the single mode approximation represents a severe truncation of the fluctuation spectrum observed in the simulations, the addition of other modes with sufficiently small amplitude may simply perturb the family of shearless phase-space transport barriers observed in the test particle model over different values of \(v_\parallel,v_\perp\) without destroying them.
Given advances in computational methods for detecting invariant tori in higher-dimensional systems, see e.g. \citet{haro_parameterization_2016,gonzalez_efficient_2022,ruth_robust_2025}, it may be within reach to explore the robustness of such structures, which may be a fertile avenue for future work.

\subsection{Extension to Other Turbulence Regimes} \label{subsec:extensions}

Although this work has focused primarily on ITG turbulence within the outer core, we remark that there do not appear to be serious roadblocks to developing extensions of the work here to look for shearless transport barriers in other turbulence regimes.
We discuss some of these regimes and how including the effect of shearless transport barriers could improve modeling for fusion reactors.

Shearless transport barriers may be particularly relevant in trapped electron mode (TEM) dominated turbulence, as the passing electrons remain primarily adiabatic while the radial transport of trapped electrons plays the main role.
For core turbulence, several studies have observed the formation of quasi-coherent trapped electron modes (QC-TEMs) \citep{arnichand_quasi-coherent_2014,hornung_eb_2017,lee_observation_2018} in the low-collisionality L-mode confinement regimes, which also correlate with stronger `staircase'-like behavior observed both in experiments and simulations.
It may be interesting to produce test-particle map models of transport induced by QC-TEMs and see if shearless transport barriers form in these regimes.
For temperature-gradient driven modes, it appears that shearless transport barriers play a larger role in suppressing particle transport.
Thus, including the effect of shearless regions on core turbulence dynamics could thus potentially improve the modeling of density profiles, especially density peaking, in low-collisionality regimes.

Moving on to pedestal turbulence, a wide array of ion-scale modes can be destabilized including microtearing modes (MTMs), ITGs, trapped electron modes (TEMs), kinetic ballooning modes (KBMs), and drift-Alfv\'{e}n waves (DAWs) \citep{kotschenreuther_gyrokinetic_2019,diallo_review_2020}.
Moreover, there are several scenarios where pedestal-localized modes are observed to organize into weakly or quasi-coherent bands, the latter sometimes being referred to as ``washboard modes'' if multiple bands are visible.
Fluctuations are also often observed to radially localize somewhere close to strong gradient region, which again corresponds to the shearless region.
MTMs in H-mode \citep{hatch_microtearing_2016,hatch_microtearing_2021} have been observed to occur as washboard modes, with their radial localization determined by a well in the electron diamagnetic frequency.
The weakly coherent mode (WCM) in I-mode \citep{liu_physics_2016,herschel_experimental_2024}, although its nature and role in regulating transport in I-mode is still unclear, is also a weakly coherent electromagnetic mode which is localized to the pedestal in the presence of a temperature barrier.
In quiescent H-modes (QH-mode) and wide pedestal quiescent H-modes (WPQH-mode), fluctuation measurements also detect fluctuations localized to the \(E_r\) well and linear analyses have shown the existence of modes which match this degree of radial localization \citep{li_numerical_2022}.

Transport due to any of the aforementioned modes may be affected by the presence of shearless transport barriers.
Addressing the question of how turbulent transport can be suppressed in shearless regions may inform the control of turbulence address ELM mitigation and heat exhaust.
For example in QH-mode and WPQH-mode, turbulence in the shearless region has been implicated in broadening the pedestal and the scrape-off-layer heat flux width \citep{chang_role_2024,ernst_broadening_2024,li_role_2024}.
Including the effect of shearless regions on pedestal turbulence may help explain how steep gradient regions are maintained in the presence of microturbulence, and perhaps also shed additional light in the dynamics of turbulence spreading to/from the scrape-off-layer.
\citet{osorio-quiroga_shaping_2023} observed that as the depth of the \(E_r\) well or hill increased, the critical amplitude for the breakup of the shearless torus increased.
However, their study did not include how wave trapping might affect the perturbations, either through changes in the mode frequency or radial structure.
We remark that a combination of wave trapping physics with barrier breakup studies may be a fruitful avenue for future work to understand the role of zonal flow curvature, as the curvature affects both the linear physics of eigenmode structure as well as the nonlinear physics of chaotic transport.

Finally we remark that in low collisionality regimes in the pedestal, the bootstrap current may lead to a weakening of the magnetic shear due to the edge pressure gradients.
This may suggest a role for shearless transport barriers in the passing regions of phase space, rather than just in the trapped regions of phase space.
Improvements in confinement have also been observed in core turbulence regimes with weak or reversed magnetic shear.
It would be interesting to see if shearless transport barriers might play a role in suppressing turbulent transport in the shearless region for these specific scenarios.

\section{Summary} \label{sec:summary}

In summary, we have shown that shearless regions associated with zonal \(E \times B\) jets can act as shearless transport barriers, leading to suppression of transport contrary to expectations from local shear suppression of turbulence.
These shearless regions are non-degenerate, in the sense of having non-zero zonal flow curvature.
This was supported by a concrete example of a shearless transport barrier identified in the outer core region of a global gyrokinetic simulation from XGC with realistic tokamak geometry and profiles.
This barrier demonstrated a significant localized increase in the ion gyrocenter density gradient, and also acted as a barrier to turbulence propagation between the inner core and the edge.

We demonstrated how a map model of the collisionless gyrokinetic test particle dynamics was able to identify shearless tori that corresponded with this shearless transport barrier.
This model was constructed by extracting the dominant fluctuation mode in the shearless region, which was an intermediate-\(n\) (\(n=39\)) electrostatic ballooning mode with a finite radial envelope localized to the zonal jet.
Non-monotonic behavior in the kinetic safety factor profile, \(q_{kin}\), was identified as a necessary ingredient for shearless transport barriers to exist.
In particular, under the usual conditions of magnetic shear, this condition is only satisfied for the trapped particle portion of phase space.
Poincar\'{e} sections were used to visually identify the presence of shearless invariant tori in the model, and direct signatures of shearless tori were identified in the ion gyrocenter distribution function \(F_i\) taken from XGC.
A transmissivity analysis showed that the net effect of the shearless tori was to reduce radial particle transport by a factor of \(\sim 2\) in the trapped portions of phase space, potentially explaining the observed increase in ion gyrocenter density gradients.
Finally, the arrest of turbulence propagation was linked to phase-space barrier eddy detachment events in \(F_i\), and ``blob spin'' was proposed as the physical mechanism for this arrest.

The key conclusion of this work is that shearless transport barriers, rather than being relegated to simple map models, can play an active role in the dynamics of high-fidelity gyrokinetic simulations.
Although turbulence is unlikely to exactly satisfy the conditions necessary for shearless invariant tori to exist in the usual dynamical systems sense, map models of the gyrokinetic test particle dynamics can be constructed by extracting out the dominant fluctuations from the turbulence.
Even despite severe truncations of the electrostatic fluctuations, these models are still able to match qualitative features of the simulations and identify the presence of these shearless transport barriers.
Such models may inform how shearless regions, which are a nearly generic feature of sheared \(E \times B\) zonal flows, impact particle transport and turbulence spreading in both the core and edge.

\section*{Acknowledgements}
The authors thank D.R. Hatch and P.J. Morrison for fruitful comments on this work.
Several anonymous individuals are also thanked for their feedback.
This work was supported by the US Department of Energy, Office of Science (N.M.C. contract number DE-FG02-04ER54742), (H.Z. and T.S. contract number DE-AC02-09CH11466); and the S\~{a}o Paulo Research Foundation (FAPESP), Brasil (G.C.G. Process Number \#2024/02591-0).
This research used resources of the National Energy Research Scientific Computing Center (NERSC), a Department of Energy User Facility using NERSC award FES-ERCAP002927.

\section*{Declaration of Interests}
The authors report no conflict of interest.

\section*{Disclaimer}
This report was prepared as an account of work sponsored by an agency of the United States Government. Neither the United States Government nor any agency thereof, nor any of their employees, makes any warranty, express or implied, or assumes any legal liability or responsibility for the accuracy, completeness, or usefulness of any information, apparatus, product, or process disclosed, or represents that its use would not infringe privately owned rights. Reference herein to any specific commercial product, process, or service by trade name, trademark, manufacturer, or otherwise does not necessarily constitute or imply its endorsement, recommendation, or favoring by the United States Government or any agency thereof. The views and opinions of authors expressed herein do not necessarily state or reflect those of the United States Government or any agency thereof.

\section*{Data Availability Statement}
The code that support the findings of this study are openly available in https://github.com/Maplenormandy/c1lgkt

\appendix

\section{Nontwist Tori in Continuous Flows}\label{app:nontwist_tori}

In this appendix, we prove a brief result showing that whether or not an invariant torus in a 3-dimensional continuous flow is nontwist is invariant to the choice of 2-dimensional Poincar\'{e} section used to determine if it is nontwist.

Let \(E\) be a 3-dimensional manifold with a vector field \(X\) defining a continuous flow on \(E\).
Suppose there exists a (local) smooth foliation of \(E\) into invariant 2-tori \(K_p\) indexed by a single action-like variable \(p\), with the tori parameterized by angle variables \(\vartheta = (\vartheta_1, \vartheta_2)\).
Denote the associated torus frequencies as \(\omega(p) = (\omega_1, \omega_2)\).
For the gyrokinetic particle dynamics studied here, the torus frequencies \(\omega_1,\omega_2\) can be taken as the toroidal and poloidal transit frequencies \(\Omega_{\varphi'},\Omega_{\theta}\), and \(p\) can be taken as the toroidal canonical angular momentum \(P_{\varphi}\).

Suppose \(\Gamma \subset E\) is a Poincar\'{e} section such that the intersection \(\gamma_p = K_p \cap \Gamma\) is a connected curve for each \(p\).
\(\gamma_p\) will be a 1-torus; let \(n = (n_1, n_2)\) be the winding numbers of \(\gamma_p\) around \(K_p\).
The outboard midplane section corresponds to \(n=(1,0)\).
The rotation number \(\iota(p)\) associated with \(\gamma_p\) under the action of the Poincar\'{e} map \(f: \Gamma \to \Gamma\) can be computed
\begin{equation} \label{eq:iota}
	\iota(p) = \frac{1}{n^2} \frac{n \cdot \omega}{n^\perp \cdot \omega}
\end{equation}
where \(n^\perp = (-n_2, n_1)\).
In the following proposition, we show that if a torus is nontwist in any particular choice of section \(\Gamma\), it will be nontwist in all choices of section \(\tilde{\Gamma}\):

\begin{proposition}
	Suppose \(\gamma_0\) is an non-degenerate irrational nontwist torus of the Poincar\'{e} map \(f: \Gamma \to \Gamma\) associated with \(\Gamma\), meaning \(\iota'(0) = 0\), \(\iota''(0) \neq 0\), and \(\iota(0)\) is irrational. Then, if \(\Gamma'\) is another Poincar\'{e} section such that \(\tilde{\gamma}_0 = K_0 \cap \tilde{\Gamma}\) is simply connected, then \(\tilde{\gamma}_0\) will be a non-degenerate irrational nontwist torus of \(f': \tilde{\Gamma} \to \tilde{\Gamma}\)
\end{proposition}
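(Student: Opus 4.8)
\medskip

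The plan is to observe that formula~\eqref{eq:iota} presents $\iota(p)$, for \emph{any} admissible section, as the image of the single intrinsic quantity $[\omega_1(p):\omega_2(p)]$ (the projective class of the frequency vector $\omega(p)$) under a fixed, that is $p$-independent, M\"obius transformation determined only by the winding numbers, and then to transfer the fold condition between the two sections by the chain rule. From \eqref{eq:iota}, $\iota(p)=\phi_n\bigl(\omega_2(p)/\omega_1(p)\bigr)$ with $\phi_n(r)=(n_1^2+n_2^2)^{-1}(n_2 r+n_1)/(n_1 r-n_2)$, understood projectively so that $\omega_1(p)=0$ is harmless. Since the foliation $\{K_p\}$ is smooth, the winding numbers $n$ of $\gamma_p$ in $K_p$ are locally constant in $p$, so the coefficients of $\phi_n$ do not depend on $p$; moreover, up to the overall scalar $(n_1^2+n_2^2)^{-1}$, $\phi_n$ is the M\"obius map with matrix $\bigl(\begin{smallmatrix} n_2 & n_1\\ n_1 & -n_2\end{smallmatrix}\bigr)$, whose determinant $-(n_1^2+n_2^2)$ is nonzero, so $\phi_n$ is an invertible M\"obius transformation with rational coefficients. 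Doing the same for $\tilde\Gamma$ with winding numbers $m$ and eliminating the common projective variable yields $\tilde\iota=\psi\circ\iota$ on a neighbourhood of $p=0$, where $\psi:=\phi_m\circ\phi_n^{-1}$ is again an invertible M\"obius transformation with rational coefficients.

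Next I would check that $\psi$ is a local diffeomorphism at $\iota(0)$. A M\"obius transformation with nonzero determinant has finite, nowhere-vanishing derivative away from its single pole, so it is enough that $\tilde\iota(0)=\psi(\iota(0))$ be finite, i.e.\ that $\iota(0)$ not be that pole. This is where the hypotheses enter. Because $\tilde\Gamma$ is transverse to the flow $X$ and $K_0$ is $X$-invariant, the restriction of the flow to $K_0$ --- linear with frequency vector $\omega(0)$ --- is transverse, within $K_0$, to the curve $\tilde\gamma_0=K_0\cap\tilde\Gamma$; since $\tilde\gamma_0$ is a single embedded circle by hypothesis, with primitive winding numbers $m$, this transversality is exactly $m^\perp\cdot\omega(0)\neq 0$, which is the statement that the denominator in \eqref{eq:iota} does not vanish, so $\tilde\iota(0)$ is finite (the analogous statement for $\Gamma$ gives $\iota(0)$ finite, as it must be since $\gamma_0$ is nontwist for $f$). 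With $\psi'(\iota(0))$ finite and nonzero, the chain rule in $p$ transfers the fold, since a non-degenerate critical point is preserved under post-composition with a local diffeomorphism: $\tilde\iota'(0)=\psi'(\iota(0))\,\iota'(0)=0$ and $\tilde\iota''(0)=\psi''(\iota(0))\,\iota'(0)^2+\psi'(\iota(0))\,\iota''(0)=\psi'(\iota(0))\,\iota''(0)\neq 0$. Irrationality of $\tilde\iota(0)$ follows from that of $\iota(0)$: writing $\psi(x)=(ax+b)/(cx+d)$ with rational $a,b,c,d$ and $ad\neq bc$, a rational value of $\psi(\iota(0))$ would force $\iota(0)$ to be rational (the degenerate subcase $ad=bc$ being excluded), a contradiction. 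Hence $\tilde\gamma_0$ is a non-degenerate irrational nontwist torus of $f'$.

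The step I expect to require the most care is the pole bookkeeping just above: one must confirm that the two hypotheses --- ``$\gamma_0$ nontwist for $\Gamma$'' and ``$\tilde\gamma_0$ a single circle for the Poincar\'{e} section $\tilde\Gamma$'' --- are precisely what keeps both $\iota(0)$ and $\tilde\iota(0)$ finite, so that $\psi$ is genuinely a local diffeomorphism at $\iota(0)$; were $\iota(0)$ to coincide with the pole of $\psi$, the fold structure would not transfer. Everything else is routine once \eqref{eq:iota} is taken as given: the M\"obius identification with $p$-independent coefficients, the two one-line derivative computations, and the elementary irrationality argument.
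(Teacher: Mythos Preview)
Your argument is correct, and it takes a genuinely different route from the paper's own proof. The paper computes $\iota'(p)$ directly and isolates the intrinsic, section-independent quantity $A(p)=\omega'(p)\omega(p)^{T}-\omega(p)\omega'(p)^{T}$; since $A$ is a $2\times2$ antisymmetric matrix, $n^{T}A\,n^{\perp}$ is proportional to $n^{2}$ times a scalar independent of $n$, so $\iota'(0)=0$ reduces to the intrinsic condition $A(0)=0$, and a further differentiation handles non-degeneracy via $A'(0)\neq0$. Your approach instead packages the same content structurally: you observe that \eqref{eq:iota} factors $\iota$ through the projective class $[\omega_{1}:\omega_{2}]$ via a $p$-independent M\"obius map $\phi_{n}$, so any two rotation-number functions differ by post-composition with the M\"obius transformation $\psi=\phi_{m}\circ\phi_{n}^{-1}$, and then the chain rule transfers the fold. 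The paper's computation makes the intrinsic ``Wronskian'' condition $\omega'\parallel\omega$ explicit, which is useful if one wants to characterise shearless tori directly in terms of the frequency map; your M\"obius framing is cleaner for the statement at hand and would extend with no extra work to higher-order degeneracies (cusps, etc.), since post-composition with a local diffeomorphism preserves the full order of a critical point. One minor remark: your transversality justification that $m^{\perp}\cdot\omega(0)\neq0$ is correct but unnecessary here, since irrationality of $\omega(0)$ already forces $m^{\perp}\cdot\omega(0)\neq0$ for every nonzero integer vector $m$ --- this is in fact how the paper argues it.
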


\begin{proof}
	First, note from \eqref{eq:iota} that \(\iota(0)\) being irrational occurs if and only if the frequency vector \(\omega(0)\) is irrational.
	Then, for any other winding numbers \(\tilde{n}\), the rotation number \(\tilde{\iota}(0)\) be irrational if and only if \(\iota(0)\) is irrational.
	
	To show \(\tilde{\gamma}_0\) is nontwist, we will show \(\iota'(0) = 0\) if and only if \(\tilde{\iota}'(0) = 0\).
    We compute
	\begin{equation*}
		\begin{aligned}
			\iota'(p) &= \frac{1}{n^2} \frac{(n \cdot \omega'(p)) (n^\perp \cdot \omega(p)) - (n \cdot \omega(p)) (n^\perp \cdot \omega'(p))}{(n^\perp \cdot \omega(p))^2} \\
			&= \frac{1}{n^2} \frac{n^T (\omega'(p) \omega(p)^T - \omega(p)\omega'(p)^T) n^\perp}{(n^\perp \cdot \omega(p))^2} \\
			&= \frac{1}{n^2} \frac{n^T A(p) n^\perp}{D(p)}
		\end{aligned}
	\end{equation*}

    The denominator contains the expression \(D(p) = (n^{\perp} \cdot \omega(p))^2\).
	Since \(\omega(0)\) is irrational, \(n^\perp \cdot \omega(0)\) is non-zero, so the denominator will be finite and non-zero in a neighborhood around \(p = 0\).
    Thus, \(\iota(0) = 0\) requires that the numerator is equal to zero.
    
	The numerator contains the \(2 \times 2\) real antisymmetric matrix \(A(p) = \omega'(p) \omega(p)^T - \omega(p)\omega'(p)^T\), which must be proportional to the 90-degree rotation matrix.
    Since
    \begin{equation*}
        n^T \begin{bmatrix}
            0 & -1 \\ 1 & 0
        \end{bmatrix} n^{\perp} = -n^2
    \end{equation*}
	\(\iota'(0) = 0\) can hold if and only if \(A(0) = 0\).
	
	Let \(\tilde{n}\) be the winding numbers associated with \(\tilde{\Gamma}\).
	A similar calculation results in
    \begin{equation*}
        \tilde{\iota}(p) = \frac{1}{\tilde{n}^2} \frac{\tilde{n}^T A(p) \tilde{n}^{\perp}}{\tilde{D}(p)}
    \end{equation*}
    where \(\tilde{D}(p) = (\tilde{n}^{\perp} \cdot \omega(p))^2\).
    \(\tilde{D}(p)\) must also be finite and non-zero in a neighborhood around \(p=0\), so \(\tilde{\iota}'(0) = 0\) if and only if \(A(0) = 0\).
	Thus, \(\tilde{\gamma}_0\) will be nontwist if and only if \(\gamma_0\) is nontwist.
	
	To show \(\tilde{\gamma}_0\) is non-degenerate, we compute
	\begin{equation*}
		\iota''(p) = \frac{1}{n^2} \frac{(n^T A'(p) n^\perp) D(p) - (n^T A(p) n^\perp) D'(p)}{D(p)^2}
	\end{equation*}
    Again, the denominator is finite and non-zero in a neighborhood around \(p=0\).
	Since \(A(0) = 0\), \(\iota''(0)\neq 0\) occurs if and only if \(A'(0) \neq 0\).
	A similar calculation shows \(\tilde{\iota}''(0) \neq 0\) if and only if \(A'(0) \neq 0\), hence \(\tilde{\gamma}_0\) is non-degenerate if and only if \(\gamma_0\) is non-degenerate.
\end{proof}

\bibliographystyle{jpp}

\bibliography{main}

@article{marcus2008,
	title = {Reduction of chaotic particle transport driven by drift waves in sheared flows},
	volume = {15},
	issn = {1070-664X, 1089-7674},
	url = {https://pubs.aip.org/pop/article/15/11/112304/261396/Reduction-of-chaotic-particle-transport-driven-by},
	doi = {10.1063/1.3009532},
	
	language = {en},
	number = {11},
	urldate = {2025-09-16},
	journal = {Physics of Plasmas},
	author = {Marcus, F. A. and Caldas, I. L. and Guimarães-Filho, Z. O. and Morrison, P. J. and Horton, W. and Kuznetsov, Yu. K. and Nascimento, I. C.},
	month = nov,
	year = {2008},
	pages = {112304},
	
}

@article{biglari_influence_1990,
    title = {Influence of sheared poloidal rotation on edge turbulence},
    volume = {2},
    issn = {0899-8221},
    url = {https://pubs.aip.org/pfb/article/2/1/1/940992/Influence-of-sheared-poloidal-rotation-on-edge},
    doi = {10.1063/1.859529},
    abstract = {The impact of radially sheared poloidal flows on ambient edge turbulence in tokamaks is investigated analytically. In the regime where poloidal shearing exceeds turbulent radial scattering, a hybrid time scale weighted toward the former is found to govern the decorrelation process. The coupling between radial and poloidal decorrelation results in a suppression of the turbulence below its ambient value. The turbulence quench mechanism is found to be insensitive to the sign of either the radial electric field or its shear.},
    language = {en},
    number = {1},
    urldate = {2025-01-01},
    journal = {Physics of Fluids B: Plasma Physics},
    author = {Biglari, H. and Diamond, P. H. and Terry, P. W.},
    month = jan,
    year = {1990},
    pages = {1--4},
}

@article{waltz_theory_1998,
    title = {Theory and simulation of rotational shear stabilization of turbulence},
    volume = {5},
    issn = {1070-664X, 1089-7674},
    url = {https://pubs.aip.org/pop/article/5/5/1784/1069037/Theory-and-simulation-of-rotational-shear},
    doi = {10.1063/1.872847},
    abstract = {Numerical simulations of ion temperature gradient (ITG) mode transport with gyrofluid flux tube codes first lead to the rule that the turbulence is quenched when the critical E×B rotational shear rate γE−crit exceeds the maximum of ballooning mode growth rates γ0 without E×B shear [Waltz, Kerbel, and Milovich, Phys. Plasmas 1, 2229 (1994)]. The present work revisits the flux tube simulations reformulated in terms of Floquet ballooning modes which convect in the ballooning mode angle. This new formulation avoids linearly unstable “box modes” from discretizing in the ballooning angle and illustrates the true nonlinear nature of the stabilization in toroidal geometry. The linear eigenmodes can be linearly stable at small E×B shear rates, yet Floquet mode convective amplification allows turbulence to persist unless the critical shear rate is exceeded. The flux tube simulations and the γE−crit≈γ0 quench rule are valid only at vanishing relative gyroradius. Modifications and limits of validity on the quench rule are suggested from analyzing the finite relative gyroradius “ballooning-Schrödinger equation” [R. L. Dewar, Plasma Phys. Controlled Fusion 39, 437 (1997)], which treats general “profile shear” (x variation in γ0) and “profile curvature” (x2 profile variation).},
    language = {en},
    number = {5},
    urldate = {2025-01-01},
    journal = {Physics of Plasmas},
    author = {Waltz, R. E. and Dewar, R. L. and Garbet, X.},
    month = may,
    year = {1998},
    pages = {1784--1792},
}

@article{zhu_intrinsic_2024,
    title = {Intrinsic {Toroidal} {Rotation} {Driven} by {Turbulent} and {Neoclassical} {Processes} in {Tokamak} {Plasmas} from {Global} {Gyrokinetic} {Simulations}},
    volume = {133},
    issn = {0031-9007, 1079-7114},
    url = {https://link.aps.org/doi/10.1103/PhysRevLett.133.025101},
    doi = {10.1103/PhysRevLett.133.025101},
    language = {en},
    number = {2},
    urldate = {2025-01-01},
    journal = {Physical Review Letters},
    author = {Zhu, Hongxuan and Stoltzfus-Dueck, T. and Hager, R. and Ku, S. and Chang, C. S.},
    month = jul,
    year = {2024},
    pages = {025101},
}

@article{zhu_theory_2020,
	title = {Theory of the {Tertiary} {Instability} and the {Dimits} {Shift} from {Reduced} {Drift}-{Wave} {Models}},
	volume = {124},
	issn = {0031-9007, 1079-7114},
	url = {https://link.aps.org/doi/10.1103/PhysRevLett.124.055002},
	doi = {10.1103/PhysRevLett.124.055002},
	language = {en},
	number = {5},
	urldate = {2025-01-01},
	journal = {Physical Review Letters},
	author = {Zhu, Hongxuan and Zhou, Yao and Dodin, I. Y.},
	month = feb,
	year = {2020},
	pages = {055002},
	file = {PDF:C\:\\Users\\maple\\OneDrive\\Research\\Papers\\Zhu et al. - 2020 - Theory of the Tertiary Instability and the Dimits Shift from Reduced Drift-Wave Models.pdf:application/pdf},
}

@article{zhu_theory_jpp_2020,
	title = {Theory of the tertiary instability and the {Dimits} shift within a scalar model},
	volume = {86},
	copyright = {https://www.cambridge.org/core/terms},
	issn = {0022-3778, 1469-7807},
	url = {https://www.cambridge.org/core/product/identifier/S0022377820000823/type/journal_article},
	doi = {10.1017/S0022377820000823},
	abstract = {The Dimits shift is the shift between the threshold of the drift-wave primary instability and the actual onset of turbulent transport in a magnetized plasma. It is generally attributed to the suppression of turbulence by zonal ﬂows, but developing a more detailed understanding calls for consideration of speciﬁc reduced models. The modiﬁed Terry–Horton system has been proposed by St-Onge (J. Plasma Phys., vol. 83, 2017, 905830504) as a minimal model capturing the Dimits shift. Here, we use this model to develop an analytic theory of the Dimits shift and a related theory of the tertiary instability of zonal ﬂows. We show that tertiary modes are localized near extrema of the zonal velocity U(x), where x is the radial coordinate. By approximating U(x) with a parabola, we derive the tertiary-instability growth rate using two different methods and show that the tertiary instability is essentially the primary drift-wave instability modiﬁed by the local U′′ .= d2U/dx2. Then, depending on U′′, the tertiary instability can be suppressed or unleashed. The former corresponds to the case when zonal ﬂows are strong enough to suppress turbulence (Dimits regime), while the latter corresponds to the case when zonal ﬂows are unstable and turbulence develops. This understanding is different from the traditional paradigm that turbulence is controlled by the ﬂow shear {\textbar}dU/dx{\textbar}. Our analytic predictions are in agreement with direct numerical simulations of the modiﬁed Terry–Horton system.},
	language = {en},
	number = {4},
	urldate = {2025-01-01},
	journal = {Journal of Plasma Physics},
	author = {Zhu, Hongxuan and Zhou, Yao and Dodin, I. Y.},
	month = aug,
	year = {2020},
	pages = {905860405},
	file = {PDF:C\:\\Users\\maple\\OneDrive\\Research\\Papers\\Zhu et al. - 2020 - Theory of the tertiary instability and the Dimits shift within a scalar model.pdf:application/pdf},
}

@article{yoon_fokker-planck-landau_2014,
	title = {A {Fokker}-{Planck}-{Landau} collision equation solver on two-dimensional velocity grid and its application to particle-in-cell simulation},
	volume = {21},
	issn = {1070-664X, 1089-7674},
	url = {https://pubs.aip.org/pop/article/21/3/032503/1032510/A-Fokker-Planck-Landau-collision-equation-solver},
	doi = {10.1063/1.4867359},
	
	language = {en},
	number = {3},
	urldate = {2026-02-17},
	journal = {Physics of Plasmas},
	publisher = {AIP Publishing},
	author = {Yoon, E. S. and Chang, C. S.},
	month = mar,
	year = {2014},
	pages = {032503},
}

@article{hager_fully_2016,
    title = {A fully non-linear multi-species {Fokker}–{Planck}–{Landau} collision operator for simulation of fusion plasma},
    volume = {315},
    copyright = {https://www.elsevier.com/tdm/userlicense/1.0/},
    issn = {0021-9991},
    url = {https://linkinghub.elsevier.com/retrieve/pii/S0021999116300298},
    doi = {10.1016/j.jcp.2016.03.064},
    language = {en},
    urldate = {2025-07-09},
    journal = {Journal of Computational Physics},
    author = {Hager, Robert and Yoon, E.S. and Ku, S. and D'Azevedo, E.F. and Worley, P.H. and Chang, C.S.},
    month = jun,
    year = {2016},
    note = {Publisher: Elsevier BV},
    pages = {644--660},
}

@article{ku_fast_2018,
	title = {A fast low-to-high confinement mode bifurcation dynamics in the boundary-plasma gyrokinetic code {XGC1}},
	volume = {25},
	issn = {1070-664X, 1089-7674},
	url = {https://pubs.aip.org/pop/article/25/5/056107/1061074/A-fast-low-to-high-confinement-mode-bifurcation},
	doi = {10.1063/1.5020792},
	
	language = {en},
	number = {5},
	urldate = {2026-02-17},
	journal = {Physics of Plasmas},
	publisher = {AIP Publishing},
	author = {Ku, S. and Chang, C. S. and Hager, R. and Churchill, R. M. and Tynan, G. R. and Cziegler, I. and Greenwald, M. and Hughes, J. and Parker, S. E. and Adams, M. F. and D'Azevedo, E. and Worley, P.},
	month = may,
	year = {2018},
	pages = {056107},
	
}

@article{mcdevitt_poloidal_2010,
    title = {Poloidal rotation and its relation to the potential vorticity flux},
    volume = {17},
    issn = {1070-664X, 1089-7674},
    url = {https://pubs.aip.org/pop/article/17/11/112509/108627/Poloidal-rotation-and-its-relation-to-the},
    doi = {10.1063/1.3490253},
    abstract = {A kinetic generalization of a Taylor identity appropriate to a strongly magnetized plasma is derived. This relation provides an explicit link between the radial mixing of a four–dimensional (4D) gyrocenter fluid and the poloidal Reynolds stress. This kinetic analog of a Taylor identity is subsequently utilized to link the turbulent transport of poloidal momentum to the mixing of potential vorticity. A quasilinear calculation of the flux of potential vorticity is carried out, yielding diffusive, turbulent equipartition, and thermoelectric convective components. Self-consistency is enforced via the quasineutrality relation, revealing that for the case of a stationary small amplitude wave population, deviations from neoclassical predictions of poloidal rotation can be closely linked to the growth/damping profiles of the underlying drift wave microturbulence.},
    language = {en},
    number = {11},
    urldate = {2025-01-01},
    journal = {Physics of Plasmas},
    author = {McDevitt, C. J. and Diamond, P. H. and Gürcan, Ö. D. and Hahm, T. S.},
    month = nov,
    year = {2010},
    pages = {112509},
}

@misc{ku_xgc_2018,
	title = {{XGC}},
	url = {https://www.osti.gov/doecode/biblio/12570},
	abstract = {XGC (X-point included gyrokinetic code), Version 3 Primary Author: Seunghoe Ku (PPPL, sku@pppl.gov), Robert Hager (PPPL, rhager@pppl.gov) 31 July 2017 XGC is a gyrokinetic particle-in-cell code, which is sepeciallized in tokamak edge simulatin. The simulation domain can include the magnetic separatrix, magnetic axis and the biased material wall. XGC can run in full-f, total-delta-f, and conventional delta-f mode. The ion is kinetic always except ETG simulation. The electron can be adiabatic, fluid, drift kinetic, or gyrokinetic (for ETG). XGC is written in Fortran 90 and is designed for HPCs utilizing MPI, OpenMP, CUDA (GPU), OpenACC (GPU), and vectorization (Intel MIC-KNL). The weak scaling is roughly linear to the maximal nodes of leading HPCs in US. There several versions of XGCs for different purpose: XGC0 is earlist version of XGC0 which is designed for neoclassical transport in the tokamak edge. It uses full-f method, and 00-mode electrostatic field is solved only. RMP and current response can be calculated with coupling with M3D. XGC1 is for turbulence simulation with low parallel wavenumber. Piecewise field following coordinates are used to handle low k-parallel perturbation with small ({\textasciitilde}64) number of toroidal resolution. In total-delta-f method, it utilizes phase space grid in addition to particles, to handel non-maxwellian distribution in the tokamak edge. XGCa is axisymmetric version of XGC1 to simulate neoclassical transport. References: [1] S. Ku et al., Nuclear Fusion 49, 115021 (2009) [2] S. Ku, R. Hager, C.S. Chang et al., J. Comp. Physics, 315, 467 (2016) https://doi.org/10.1016/j.jcp.2016.03.062 [2] R. Hager. E.S. Yoon. S.Ku et al., J. Comp. Physics, 315, 644 (2016) https://doi.org/10.1016/j.jcp.2016.03.064 [3] R. Hager, J. Lang et al., Phys. Plasmas 24,054508 (2017) http://dx.doi.org/10.1063/1.4983320},
	urldate = {2025-07-09},
	publisher = {Not Available},
	author = {Ku, Seung-Hoe and Hager, Robert and Scheinberg, Aaron and Dominski, Julien and Sharma, Amil and Churchill, Michael and Choi, Jong and Sturdevant, Ben and Mollén, Albert and Wilkie, George and Chang, Choong-Seock and Yoon, Eisung and Adams, Mark and Seo, Janghoon and Koh, Sehoon and D'Azevedo, Eduardo and Abbott, Steve and Worley, Patrick and Ethier, Stephane and Park, Gunyoung and Lang, Jianying and MacKie-Mason, Brian and Germaschewski, Kai and Suchyta, Eric and Carey, Varis and Cole, Michael and Trivedi, Pallavi and Chowdhury, Jugal},
	year = {2018},
	doi = {10.11578/DC.20180627.11},
	note = {Language: en},
}

@article{hager_electromagnetic_2022,
	title = {Electromagnetic total-\textit{f} algorithm for gyrokinetic particle-in-cell simulations of boundary plasma in {XGC}},
	volume = {29},
	copyright = {https://publishing.aip.org/authors/rights-and-permissions},
	issn = {1070-664X, 1089-7674},
	url = {https://pubs.aip.org/pop/article/29/11/112308/2843979/Electromagnetic-total-f-algorithm-for-gyrokinetic},
	doi = {10.1063/5.0097855},
	
	language = {en},
	number = {11},
	urldate = {2025-07-09},
	journal = {Physics of Plasmas},
	publisher = {AIP Publishing},
	author = {Hager, Robert and Ku, S. and Sharma, A. Y. and Chang, C. S. and Churchill, R. M. and Scheinberg, A.},
	month = nov,
	year = {2022},
	pages = {112308},
	
}

@article{anastassiou_role_2024,
	title = {Role of the edge electric field in the resonant mode-particle interactions and the formation of transport barriers in toroidal plasmas},
	volume = {90},
	issn = {0022-3778, 1469-7807},
	url = {https://www.cambridge.org/core/product/identifier/S0022377824000047/type/journal_article},
	doi = {10.1017/S0022377824000047},
	abstract = {The impact of an edge radial electric ﬁeld on the particle orbits and the orbital spectrum in an axisymmetric toroidal magnetic equilibrium is investigated using a guiding centre canonical formalism. Poloidal and bounce/transit-averaged toroidal precession frequencies are calculated, highlighting the role of the radial electric ﬁeld. The radial electric ﬁeld is shown to drastically modify the resonance conditions between particles with certain kinetic characteristics and speciﬁc perturbative non-axisymmetric modes, and to enable the formation of transport barriers. The locations of the resonances and the transport barriers that determine the particle, energy and momentum transport are shown to be accurately pinpointed in the phase space by employing the calculated orbital frequencies.},
	language = {en},
	number = {1},
	urldate = {2025-01-01},
	journal = {Journal of Plasma Physics},
	author = {Anastassiou, Giorgos and Zestanakis, Panagiotis and Antonenas, Yiannis and Viezzer, Eleonora and Kominis, Yannis},
	month = feb,
	year = {2024},
	pages = {905900110},
	file = {PDF:C\:\\Users\\maple\\OneDrive\\Research\\Papers\\Anastassiou et al. - 2024 - Role of the edge electric field in the resonant mode-particle interactions and the formation of tran.pdf:application/pdf},
}

@article{marcus_influence_2019,
	title = {Influence of the radial electric field on the shearless transport barriers in tokamaks},
	volume = {26},
	issn = {1070-664X, 1089-7674},
	url = {https://pubs.aip.org/pop/article/26/2/022302/253604/Influence-of-the-radial-electric-field-on-the},
	doi = {10.1063/1.5071437},
	abstract = {In tokamaks, internal transport barriers, produced by modiﬁcations of the plasma current proﬁle, reduce particle transport and improve plasma conﬁnement. The triggering of the internal transport barriers and their dependence on the plasma proﬁles is a key nonlinear dynamics problem which is still under investigation. We consider the onset of shearless invariant curves inside the plasma which create internal transport barriers. A non-integrable drift-kinetic model is used to describe the particle transport driven by drift waves and to investigate the shearless barrier onset in tokamaks. We show that for some currently observed plasma proﬁles, shearless particle transport barriers can be triggered by properly modifying the electric ﬁeld proﬁle and the inﬂuence of non-resonant modes in the barrier onset. In particular, we show that a broken barrier can be restored by enhancing non-resonant modes.},
	language = {en},
	number = {2},
	urldate = {2025-01-01},
	journal = {Physics of Plasmas},
	author = {Marcus, F. A. and Roberto, M. and Caldas, I. L. and Rosalem, K. C. and Elskens, Y.},
	month = feb,
	year = {2019},
	pages = {022302},
	file = {PDF:C\:\\Users\\maple\\OneDrive\\Research\\Papers\\Marcus et al. - 2019 - Influence of the radial electric field on the shearless transport barriers in tokamaks.pdf:application/pdf},
}

@article{osorio-quiroga_shaping_2023,
	title = {Shaping the edge radial electric field to create shearless transport barriers in tokamaks},
	volume = {6},
	issn = {27728285},
	url = {https://linkinghub.elsevier.com/retrieve/pii/S277282852300016X},
	doi = {10.1016/j.fpp.2023.100023},
	abstract = {In tokamak-conﬁned plasmas, particle transport can be reduced by modifying the radial electric ﬁeld. In this paper, we investigate the inﬂuence of both a well-like and a hill-like shaped radial electric ﬁeld proﬁle on the creation of shearless transport barriers (STBs) at the plasma edge, which are a type of barrier that can prevent chaotic transport and are related to the presence of extreme values in the rotation number proﬁle. For that, we apply an �� × �� drift model to describe test particle orbits in large aspect-ratio tokamaks. We show how these barriers depend on the electrostatic ﬂuctuation amplitudes and on the width and depth (height) of the radial electric ﬁeld well-like (hill-like) proﬁle. We ﬁnd that, as the depth (height) increases, the STB at the plasma edge becomes more resistant to ﬂuctuations, enabling access to an improved conﬁnement regime that prevents chaotic transport. We also present parameter spaces with the radial electric ﬁeld parameters, indicating the STB existence for several electric ﬁeld conﬁgurations at the plasma edge, for which we obtain a fractal structure at the barrier/non-barrier frontier, typical of quasi-integrable Hamiltonian systems.},
	language = {en},
	urldate = {2025-01-01},
	journal = {Fundamental Plasma Physics},
	author = {Osorio-Quiroga, L.A. and Roberto, M. and Caldas, I.L. and Viana, R.L. and Elskens, Y.},
	month = aug,
	year = {2023},
	pages = {100023},
	file = {PDF:C\:\\Users\\maple\\OneDrive\\Research\\Papers\\Osorio-Quiroga et al. - 2023 - Shaping the edge radial electric field to create shearless transport barriers in tokamaks.pdf:application/pdf},
}

@article{da_fonseca_area-preserving_2014,
	title = {Area-preserving maps models of gyroaveraged {E}×{B} chaotic transport},
	volume = {21},
	issn = {1070-664X, 1089-7674},
	url = {https://pubs.aip.org/pop/article/21/9/092310/624849/Area-preserving-maps-models-of-gyroaveraged-E-B},
	doi = {10.1063/1.4896344},
	abstract = {Discrete maps have been extensively used to model 2-dimensional chaotic transport in plasmas and fluids. Here we focus on area-preserving maps describing finite Larmor radius (FLR) effects on E × B chaotic transport in magnetized plasmas with zonal flows perturbed by electrostatic drift waves. FLR effects are included by gyro-averaging the Hamiltonians of the maps which, depending on the zonal flow profile, can have monotonic or non-monotonic frequencies. In the limit of zero Larmor radius, the monotonic frequency map reduces to the standard Chirikov-Taylor map, and in the case of non-monotonic frequency, the map reduces to the standard nontwist map. We show that in both cases FLR leads to chaos suppression, changes in the stability of fixed points, and robustness of transport barriers. FLR effects are also responsible for changes in the phase space topology and zonal flow bifurcations. Dynamical systems methods based on the counting of recurrences times are used to quantify the dependence on the Larmor radius of the threshold for the destruction of transport barriers.},
	language = {en},
	number = {9},
	urldate = {2025-06-25},
	journal = {Physics of Plasmas},
	author = {Da Fonseca, J. D. and del-Castillo-Negrete, D. and Caldas, I. L.},
	month = sep,
	year = {2014},
	pages = {092310},
	file = {Da Fonseca et al. - 2014 - Area-preserving maps models of gyroaveraged E×B chaotic transport:C\:\\Users\\maple\\OneDrive\\Research\\Papers\\Da Fonseca et al. - 2014 - Area-preserving maps models of gyroaveraged E×B chaotic transport.pdf:application/pdf},
}

@article{del-castillo-negrete_chaotic_1993,
    title = {Chaotic transport by {Rossby} waves in shear flow},
    volume = {5},
    issn = {0899-8213},
    url = {https://pubs.aip.org/pof/article/5/4/948/940399/Chaotic-transport-by-Rossby-waves-in-shear-flow},
    doi = {10.1063/1.858639},
    abstract = {Transport and mixing properties of Rossby waves in shear flow are studied using tools from Hamiltonian chaos theory. The destruction of barriers to transport is studied analytically, by using the resonance overlap criterion and the concept of separatrix reconnection, and numerically by using Poincaré sections. Attention is restricted to the case of symmetric velocity profiles with a single maximum; the Bickley jet with velocity profile sech2 is considered in detail. Motivated by linear stability analysis and experimental results, a simple Hamiltonian model is proposed to study transport by waves in these shear flows. Chaotic transport, both for the general case and for the sech2 profile, is investigated. The resonance overlap criterion and the concept of separatrix reconnection are used to obtain an estimate for the destruction of barriers to transport and the notion of banded chaos is introduced to characterize the transport that typically occurs in symmetric shear flows. Comparison between the analytical estimates for barrier destruction and the numerical results is given. The role of potential vorticity conservation in chaotic transport is discussed. An area preserving map, termed standard nontwist map, is obtained from the Hamiltonian model. It is shown that the map reproduces the transport properties and the separatrix reconnection observed in the Hamiltonian model. The conclusions reached are used to explain experimental results on transport and mixing by Rossby waves in rotating fluids.},
    language = {en},
    number = {4},
    urldate = {2025-01-01},
    journal = {Physics of Fluids A: Fluid Dynamics},
    author = {del-Castillo-Negrete, Diego and Morrison, P. J.},
    month = apr,
    year = {1993},
    pages = {948--965},
}

@article{schmitz_role_2012,
    title = {Role of {Zonal} {Flow} {Predator}-{Prey} {Oscillations} in {Triggering} the {Transition} to {H}-{Mode} {Confinement}},
    volume = {108},
    copyright = {http://link.aps.org/licenses/aps-default-license},
    issn = {0031-9007, 1079-7114},
    url = {https://link.aps.org/doi/10.1103/PhysRevLett.108.155002},
    doi = {10.1103/PhysRevLett.108.155002},
    abstract = {Direct evidence of Zonal Flow (ZF) predator-prey oscillations and the synergistic roles of ZF- and equilibrium E × B flow shear in triggering the low- to high confinement (L- to H-mode) transition in the DIII-D tokamak is presented. Periodic turbulence suppression is first observed in a narrow layer at/inside the separatrix when the shearing rate transiently exceeds the turbulence decorrelation rate. The final transition to H-mode with sustained turbulence/transport reduction is controlled by equilibrium E × B shear due to the increasing ion pressure gradient.},
    language = {en},
    number = {15},
    urldate = {2025-01-01},
    journal = {Physical Review Letters},
    author = {Schmitz, L. and Zeng, L. and Rhodes, T. L. and Hillesheim, J. C. and Doyle, E. J. and Groebner, R. J. and Peebles, W. A. and Burrell, K. H. and Wang, G.},
    month = apr,
    year = {2012},
    pages = {155002},
}

@article{kim_zonal_2003,
    title = {Zonal {Flows} and {Transient} {Dynamics} of the {L} - {H} {Transition}},
    volume = {90},
    copyright = {http://link.aps.org/licenses/aps-default-license},
    issn = {0031-9007, 1079-7114},
    url = {https://link.aps.org/doi/10.1103/PhysRevLett.90.185006},
    doi = {10.1103/PhysRevLett.90.185006},
    language = {en},
    number = {18},
    urldate = {2025-01-01},
    journal = {Physical Review Letters},
    author = {Kim, Eun-jin and Diamond, P. H.},
    month = may,
    year = {2003},
    pages = {185006},
}

@article{diamond_how_2023,
	title = {How the birth and death of shear layers determine confinement evolution: from the {L} → {H} transition to the density limit},
	volume = {381},
	issn = {1364-503X, 1471-2962},
	shorttitle = {How the birth and death of shear layers determine confinement evolution},
	url = {https://royalsocietypublishing.org/doi/10.1098/rsta.2021.0227},
	doi = {10.1098/rsta.2021.0227},
	abstract = {Electric field profile structure—especially its shear—is a natural order parameter for the edge plasma, and characterizes confinement regimes ranging from the H-mode (Wagner
              et al.
              1982
              Phys. Rev. Lett.
              49
              , 1408–1412 (
              doi:10.1103/PhysRevLett.49.1408
              )) to the density limit (DL) (Greenwald
              et al.
              1988
              Nucl. Fusion
              28
              , 2199–2207 (
              doi:10.1088/0029-5515/28/12/009
              )). The theoretical developments and lessons learned during 40 years of H-mode studies (Connor \& Wilson 1999
              Plasma Phys. Control. Fusion
              42
              , R1–R74 (
              doi:10.1088/0741-3335/42/1/201
              ); Wagner 2007
              Plasma Phys. Control. Fusion
              49
              , B1–B33 (
              doi:10.1088/0741-3335/49/12b/s01
              )) are applied to the shear layer collapse paradigm (Hong
              et al.
              2017
              Nucl. Fusion
              58
              , 016041 (
              doi:10.1088/1741-4326/aa9626
              )) for the onset of DL phenomena. Results from recent experiments on edge shear layers and DL phenomenology are summarized and discussed in the light of L
              
                
                  →
                
              
              H transition physics. The theory of shear layer collapse is then developed. We demonstrate that shear layer physics captures both the well known current (Greenwald) scaling of the DL (Greenwald 2002
              Plasma Phys. Control. Fusion
              44
              , R27–R53 (
              doi:10.1088/0741-3335/44/8/201
              ); Greenwald
              et al.
              2014
              Phys. Plasmas
              21
              , 110501 (
              doi:10.1063/1.4901920
              )), as well as the emerging power scaling (Zanca, Sattin, JET Contributors 2019
              Nucl. Fusion
              59
              , 126011 (
              doi:10.1088/1741-4326/ab3b31
              )). The derivation of the power scaling theory exploits an existing model, originally developed for the L
              
                
                  →
                
              
              H transition (Diamond, Liang, Carreras, Terry 1994
              Phys. Rev. Lett.
              72
              , 2565–2568 (
              doi:10.1103/PhysRevLett.72.2565
              ); Kim \& Diamond 2003
              Phys. Rev. Lett.
              90
              , 185006 (
              doi:10.1103/PhysRevLett.90.185006
              )). We describe the enhanced particle transport events that occur following shear layer collapse. Open problems and future directions are discussed.
            
            This article is part of a discussion meeting issue ‘H-mode transition and pedestal studies in fusion plasmas’.},
	language = {en},
	number = {2242},
	urldate = {2025-01-01},
	journal = {Philosophical Transactions of the Royal Society A: Mathematical, Physical and Engineering Sciences},
	author = {Diamond, Patrick H. and Singh, Rameswar and Long, Ting and Hong, Rongjie and Ke, Rui and Yan, Zheng and Cao, Mingyun and Tynan, George R.},
	month = feb,
	year = {2023},
	pages = {20210227},
	file = {PDF:C\:\\Users\\maple\\OneDrive\\Research\\Papers\\Diamond et al. - 2023 - How the birth and death of shear layers determine confinement evolution from the L → H transition t.pdf:application/pdf},
}

@article{singh_bounds_2021,
	title = {Bounds on edge shear layer persistence while approaching the density limit},
	volume = {61},
	copyright = {https://iopscience.iop.org/page/copyright},
	issn = {0029-5515, 1741-4326},
	url = {https://iopscience.iop.org/article/10.1088/1741-4326/abfadb},
	doi = {10.1088/1741-4326/abfadb},
	abstract = {Abstract               This paper details the theory of edge shear layer collapse as the density approaches the Greenwald density limit. It significantly extends earlier work, which was restricted in applicability. The zonal shear flow screening length is calculated for banana, plateau and Pfirsch–Schluter regimes. Poloidal field scaling persists in the plateau regime. Neoclassical screening and drift wave–zonal flow dynamics are combined in a theory, which is then reduced to a predator–prey model. Zonal noise, due to incoherent mode coupling, is retained. The threshold condition for edge shear layer collapse is computed, and linked to a critical value of the dimensionless parameter                                                                                                                            ρ                                                                                    s                                                                           /                                                                                                                                                      ρ                                                                                                      s                                    c                                                                                                                                                                  L                                                                                                      n                                                                                                                                                                              . Here                                                                                              ρ                           s                                                                                     is the ion sound radius,                                                                                              ρ                                                         s                              c                                                                                                                is the zonal flow screening length and                                                                                              L                           n                                                                                     is the equilibrium density scale length. The limiting initial edge density for shear layer collapse is derived and shown to scale favorably with the plasma current. The results are discussed in light of the density limit and Ohmic phenomenology.},
	number = {7},
	urldate = {2025-07-09},
	journal = {Nuclear Fusion},
	author = {Singh, Rameswar and Diamond, P.H.},
	month = jul,
	year = {2021},
	note = {Publisher: IOP Publishing},
	pages = {076009},
	file = {Full Text:C\:\\Users\\maple\\Zotero\\storage\\QCMQAWK9\\Singh and Diamond - 2021 - Bounds on edge shear layer persistence while approaching the density limit.pdf:application/pdf},
}

@article{ku_new_2016,
	title = {A new hybrid-{Lagrangian} numerical scheme for gyrokinetic simulation of tokamak edge plasma},
	volume = {315},
	copyright = {https://www.elsevier.com/tdm/userlicense/1.0/},
	issn = {0021-9991},
	url = {https://linkinghub.elsevier.com/retrieve/pii/S0021999116300274},
	doi = {10.1016/j.jcp.2016.03.062},
	language = {en},
	urldate = {2025-07-09},
	journal = {Journal of Computational Physics},
	author = {Ku, S. and Hager, R. and Chang, C.S. and Kwon, J.M. and Parker, S.E.},
	month = jun,
	year = {2016},
	note = {Publisher: Elsevier BV},
	pages = {467--475},
	file = {Full Text PDF:C\:\\Users\\maple\\Zotero\\storage\\7FYBXKFU\\Ku et al. - 2016 - A new hybrid-Lagrangian numerical scheme for gyrokinetic simulation of tokamak edge plasma.pdf:application/pdf},
}

@inproceedings{chazal_stochastic_2014,
	address = {Kyoto Japan},
	title = {Stochastic {Convergence} of {Persistence} {Landscapes} and {Silhouettes}},
	copyright = {https://www.acm.org/publications/policies/copyright\_policy\#Background},
	url = {https://dl.acm.org/doi/10.1145/2582112.2582128},
	doi = {10.1145/2582112.2582128},
	urldate = {2025-07-09},
	booktitle = {Proceedings of the thirtieth annual symposium on {Computational} geometry},
	publisher = {ACM},
	author = {Chazal, Frédéric and Fasy, Brittany Terese and Lecci, Fabrizio and Rinaldo, Alessandro and Wasserman, Larry},
	month = jun,
	year = {2014},
	pages = {474--483},
	file = {Full Text:C\:\\Users\\maple\\Zotero\\storage\\LDXVY9MR\\Chazal et al. - 2014 - Stochastic Convergence of Persistence Landscapes and Silhouettes.pdf:application/pdf},
}

@article{horton_drift_1998,
    title = {Drift wave test particle transport in reversed shear profile},
    volume = {5},
    issn = {1070-664X, 1089-7674},
    url = {https://pubs.aip.org/pop/article/5/11/3910/454174/Drift-wave-test-particle-transport-in-reversed},
    doi = {10.1063/1.873110},
    abstract = {Drift wave maps, area preserving maps that describe the motion of charged particles in drift waves, are derived. The maps allow the integration of particle orbits on the long time scale needed to describe transport. Calculations using the drift wave maps show that dramatic improvement in the particle confinement, in the presence of a given level and spectrum of E×B turbulence, can occur for q(r) profiles with reversed shear. A similar reduction in the transport, i.e., one that is independent of the turbulence, is observed in the presence of an equilibrium radial electric field with shear. The transport reduction, caused by the combined effects of radial electric field shear and both monotonic and reversed shear magnetic q profiles, is also investigated.},
    language = {en},
    number = {11},
    urldate = {2025-07-14},
    journal = {Physics of Plasmas},
    author = {Horton, Wendell and Park, Hyoung-Bin and Kwon, Jae-Min and Strozzi, D. and Morrison, P. J. and Choi, Duk-In},
    month = nov,
    year = {1998},
    note = {Publisher: AIP Publishing},
    pages = {3910--3917},
}

@article{todo_introduction_2019,
    title = {Introduction to the interaction between energetic particles and {Alfven} eigenmodes in toroidal plasmas},
    volume = {3},
    copyright = {https://creativecommons.org/licenses/by/4.0},
    issn = {2367-3192},
    url = {http://link.springer.com/10.1007/s41614-018-0022-9},
    doi = {10.1007/s41614-018-0022-9},
    language = {en},
    number = {1},
    urldate = {2025-07-14},
    journal = {Reviews of Modern Plasma Physics},
    author = {Todo, Y.},
    month = dec,
    year = {2019},
    note = {Publisher: Springer Science and Business Media LLC},
}

@article{hsu_alpha-particle_1992,
    title = {Alpha-particle losses from toroidicity-induced {Alfvén} eigenmodes. {Part} {I}: {Phase}-space topology of energetic particle orbits in tokamak plasma},
    volume = {4},
    issn = {0899-8221},
    shorttitle = {Alpha-particle losses from toroidicity-induced {Alfvén} eigenmodes. {Part} {I}},
    url = {https://pubs.aip.org/pfb/article/4/6/1492/941534/Alpha-particle-losses-from-toroidicity-induced},
    doi = {10.1063/1.860060},
    abstract = {Phase-space topology of energetic particles in tokamak plasma with arbitrary shape of cross section is studied based upon the guiding center theory. Important phase-space boundaries such as prompt loss boundary, trapped passing boundary, and other boundaries between classes of nonstandard orbits (e.g., pinch and stagnation orbits) are studied. This phase-space topology information is applied to the study of anomalous phase-space diffusion due to finite amplitude Alfvén wave fluctuations of energetic particles. The separatrix between trapped and circulating particles contributes dominantly to the losses.},
    language = {en},
    number = {6},
    urldate = {2025-07-14},
    journal = {Physics of Fluids B: Plasma Physics},
    author = {Hsu, C. T. and Sigmar, D. J.},
    month = jun,
    year = {1992},
    note = {Publisher: AIP Publishing},
    pages = {1492--1505},
}

@article{dif-pradalier_e_2017,
    title = {The {E} × {B} staircase of magnetised plasmas},
    volume = {57},
    issn = {0029-5515, 1741-4326},
    url = {https://iopscience.iop.org/article/10.1088/1741-4326/aa6873},
    doi = {10.1088/1741-4326/aa6873},
    abstract = {We show that a class of near-marginal and low-collisional magnetised plasmas naturally evolve into a globally organised state dubbed the ‘E × B staircase’. A key signature of this self-organised state is the containment of avalanche activity, detrimental for confinement, by a series of microbarriers for transport, the staircase steps, beneficial for confinement. Conditions for existence of the E × B staircase, impact on transport and confinement and robustness in parameter space are discussed.},
    language = {en},
    number = {6},
    urldate = {2025-01-01},
    journal = {Nuclear Fusion},
    author = {Dif-Pradalier, G. and Hornung, G. and Garbet, X. and Ghendrih, Ph. and Grandgirard, V. and Latu, G. and Sarazin, Y.},
    month = jun,
    year = {2017},
    pages = {066026},
}

@article{cao_rossby_2023,
    title = {Rossby waves past the breaking point in zonally dominated turbulence},
    volume = {958},
    issn = {0022-1120, 1469-7645},
    url = {https://www.cambridge.org/core/product/identifier/S0022112023000903/type/journal_article},
    doi = {10.1017/jfm.2023.90},
    abstract = {The spontaneous emergence of structure is a ubiquitous process observed in ﬂuid and plasma turbulence. These structures typically manifest as ﬂows which remain coherent over a range of spatial and temporal scales, resisting statistically homogeneous description. This work conducts a computational and theoretical study of coherence in turbulent ﬂows in the stochastically forced barotropic β-plane quasi-geostrophic equations. These equations serve as a prototypical two-dimensional model for turbulent ﬂows in Jovian atmospheres, and can also be extended to study ﬂows in magnetically conﬁned fusion plasmas. First, analysis of direct numerical simulations demonstrates that a signiﬁcant fraction of the ﬂow energy is organized into coherent large-scale Rossby wave eigenmodes, comparable with the total energy in the zonal ﬂows. A characterization is given for Rossby wave eigenmodes as nearly integrable perturbations to zonal ﬂow Lagrangian trajectories, linking ﬁnite-dimensional deterministic Hamiltonian chaos in the plane to a laminar-to-turbulent ﬂow transition. Poincaré section analysis reveals that Lagrangian ﬂows induced by the zonal ﬂows plus large-scale waves exhibit localized chaotic regions bounded by invariant tori, manifesting as Rossby wave breaking in the absence of critical layers. It is argued that the surviving invariant tori organize the large-scale ﬂows into a single temporally and zonally varying laminar ﬂow, suggesting a form of self-organization and wave stability that can account for the resilience of the observed large-amplitude Rossby waves.},
    language = {en},
    urldate = {2025-01-01},
    journal = {Journal of Fluid Mechanics},
    author = {Cao, Norman M.},
    month = mar,
    year = {2023},
    pages = {A28},
}

@article{hahm_flow_1995,
	title = {Flow shear induced fluctuation suppression in finite aspect ratio shaped tokamak plasma},
	volume = {2},
	issn = {1070-664X, 1089-7674},
	url = {https://pubs.aip.org/pop/article/2/5/1648/262079/Flow-shear-induced-fluctuation-suppression-in},
	doi = {10.1063/1.871313},
	
	language = {en},
	number = {5},
	urldate = {2026-02-17},
	journal = {Physics of Plasmas},
	author = {Hahm, T. S. and Burrell, K. H.},
	month = may,
	year = {1995},
	pages = {1648--1651},
	
}

@article{ruiz_zonal-flow_2016,
    title = {Zonal-flow dynamics from a phase-space perspective},
    volume = {23},
    issn = {1070-664X, 1089-7674},
    url = {https://pubs.aip.org/pop/article/23/12/122304/320578/Zonal-flow-dynamics-from-a-phase-space-perspective},
    doi = {10.1063/1.4971813},
    abstract = {The wave kinetic equation (WKE) describing drift-wave (DW) turbulence is widely used in the studies of zonal flows (ZFs) emerging from DW turbulence. However, this formulation neglects the exchange of enstrophy between DWs and ZFs and also ignores effects beyond the geometrical-optics limit. We derive a modified theory that takes both of these effects into account, while still treating DW quanta (“driftons”) as particles in phase space. The drifton dynamics is described by an equation of the Wigner–Moyal type, which is commonly known in the phase-space formulation of quantum mechanics. In the geometrical-optics limit, this formulation features additional terms missing in the traditional WKE that ensure exact conservation of the total enstrophy of the system, in addition to the total energy, which is the only conserved invariant in previous theories based on the WKE. Numerical simulations are presented to illustrate the importance of these additional terms. The proposed formulation can be considered as a phase-space representation of the second-order cumulant expansion, or CE2.},
    language = {en},
    number = {12},
    urldate = {2025-01-01},
    journal = {Physics of Plasmas},
    author = {Ruiz, D. E. and Parker, J. B. and Shi, E. L. and Dodin, I. Y.},
    month = dec,
    year = {2016},
    pages = {122304},
}

@article{smolyakov_coherent_2000,
	title = {Coherent {Structure} {Phenomena} in {Drift} {Wave}–{Zonal} {Flow} {Turbulence}},
	volume = {84},
	copyright = {http://link.aps.org/licenses/aps-default-license},
	issn = {0031-9007, 1079-7114},
	url = {https://link.aps.org/doi/10.1103/PhysRevLett.84.491},
	doi = {10.1103/physrevlett.84.491},
	language = {en},
	number = {3},
	urldate = {2025-07-16},
	journal = {Physical Review Letters},
	author = {Smolyakov, A. I. and Diamond, P. H. and Malkov, M.},
	month = jan,
	year = {2000},
	note = {Publisher: American Physical Society (APS)},
	pages = {491--494},
	file = {PDF:C\:\\Users\\maple\\OneDrive\\Research\\Papers\\Smolyakov et al. - 2000 - Coherent Structure Phenomena in Drift Wave–Zonal Flow Turbulence.pdf:application/pdf},
}

@article{kobayashi_quench_2012,
    title = {The quench rule, {Dimits} shift, and eigenmode localization by small-scale zonal flows},
    volume = {19},
    issn = {1070-664X, 1089-7674},
    url = {https://pubs.aip.org/pop/article/19/1/012315/106963/The-quench-rule-Dimits-shift-and-eigenmode},
    doi = {10.1063/1.3677355},
    abstract = {We perform gyrokinetic simulations in a simple Z-pinch geometry to study the physics of small scale, entropy-mode-driven zonal flows. The entropy-modes create radial E×B streamers, which become unstable to the Kelvin-Helmholz (KH) instability at the point of nonlinear saturation. Sufficiently close to marginal entropy-mode stability, the break-up of the streamers by the KH mode generates zonal flows that produce a nearly static, low transport state (the Dimits shift). The flows in this state have a preferred, automatically maintained level, typically several times stronger than the quench-rule threshold that sits at a critical point of the linear mode-structure: the radial streamers of the entropy-modes become, at about the preferred shearing rate, radially localized to the regions where the shearing rate of the zonal flows passes through zero. Coincident with the localization, the linear growth rates drop to smaller but usually finite levels.},
    language = {en},
    number = {1},
    urldate = {2025-01-01},
    journal = {Physics of Plasmas},
    author = {Kobayashi, Sumire and Rogers, Barrett N.},
    month = jan,
    year = {2012},
    pages = {012315},
}

@article{behringer_chaos_1991,
    title = {Chaos and mixing in a geostrophic flow},
    volume = {3},
    issn = {0899-8213},
    url = {https://pubs.aip.org/pof/article/3/5/1243/402614/Chaos-and-mixing-in-a-geostrophic-flow},
    doi = {10.1063/1.858052},
    abstract = {Experiments on Rossby waves on an azimuthal jet in a rapidly rotating annular tank reveal a striking barrier to mixing across the jet. A model based on the experiments assumes a two-dimensional incompressible flow described by a time-dependent streamfunction consisting of azimuthally propagating waves on a narrow jet. When there is only one wave, all Lagrangian particle trajectories are closed in the appropriate reference frame. When two independent waves are present, some trajectories are chaotic, and the size of the chaotic sea grows as the amplitude of the second wave is increased; however, at least one barrier to global transport—an invariant surface—prohibits trajectories from crossing the jet. The addition of a third wave is found to break the barrier only if the wave amplitudes exceed the width of the jet. In the experiment, the wave amplitude is typically about one-half the jet width, and the barrier to mixing persists even at the highest accessible Reynolds numbers.},
    language = {en},
    number = {5},
    urldate = {2025-07-17},
    journal = {Physics of Fluids A: Fluid Dynamics},
    author = {Behringer, Robert P. and Meyers, Steven D. and Swinney, Harry L.},
    month = may,
    year = {1991},
    note = {Publisher: AIP Publishing},
    pages = {1243--1249},
}

@article{farazmand_shearless_2014,
    title = {Shearless transport barriers in unsteady two-dimensional flows and maps},
    volume = {278-279},
    copyright = {https://www.elsevier.com/tdm/userlicense/1.0/},
    issn = {0167-2789},
    url = {https://linkinghub.elsevier.com/retrieve/pii/S0167278914000712},
    doi = {10.1016/j.physd.2014.03.008},
    language = {en},
    urldate = {2025-07-09},
    journal = {Physica D: Nonlinear Phenomena},
    author = {Farazmand, Mohammad and Blazevski, Daniel and Haller, George},
    month = jun,
    year = {2014},
    note = {Publisher: Elsevier BV},
    pages = {44--57},
}

@article{morrison_magnetic_2000,
    title = {Magnetic field lines, {Hamiltonian} dynamics, and nontwist systems},
    volume = {7},
    issn = {1070-664X, 1089-7674},
    url = {https://pubs.aip.org/pop/article/7/6/2279/104156/Magnetic-field-lines-Hamiltonian-dynamics-and},
    doi = {10.1063/1.874062},
    abstract = {Magnetic field lines typically do not behave as described in the symmetrical situations treated in conventional physics textbooks. Instead, they behave in a chaotic manner; in fact, magnetic field lines are trajectories of Hamiltonian systems. Consequently the quest for fusion energy has interwoven, for 50 years, the study of magnetic field configurations and Hamiltonian systems theory. The manner in which invariant tori breakup in symplectic twist maps, maps that embody one and a half degree-of-freedom Hamiltonian systems in general and describe magnetic field lines in tokamaks in particular, will be reviewed, including symmetry methods for finding periodic orbits and Greene’s residue criterion. In nontwist maps, which describe, e.g., reverse shear tokamaks and zonal flows in geophysical fluid dynamics, a new theory is required for describing tori breakup. The new theory is discussed and comments about renormalization are made.},
    language = {en},
    number = {6},
    urldate = {2025-01-01},
    journal = {Physics of Plasmas},
    author = {Morrison, P. J.},
    month = jun,
    year = {2000},
    pages = {2279--2289},
}

@article{caldas_shearless_2012,
    title = {Shearless transport barriers in magnetically confined plasmas},
    volume = {54},
    issn = {0741-3335, 1361-6587},
    url = {https://iopscience.iop.org/article/10.1088/0741-3335/54/12/124035},
    doi = {10.1088/0741-3335/54/12/124035},
    abstract = {Shearless transport barriers appear in conﬁned plasmas due to non-monotonic radial proﬁles and cause localized reduction of transport even after they have been broken. In this paper we summarize our recent theoretical and experimental research on shearless transport barriers in plasmas conﬁned in toroidal devices. In particular, we discuss shearless barriers in Lagrangian magnetic ﬁeld line transport caused by non-monotonic safety factor proﬁles. We also discuss evidence of particle transport barriers found in the TCABR Tokamak (University of S˜ao Paulo) and the Texas Helimak (University of Texas at Austin) in biased discharges with non-monotonic plasma ﬂows.},
    language = {en},
    number = {12},
    urldate = {2025-01-01},
    journal = {Plasma Physics and Controlled Fusion},
    author = {Caldas, I L and Viana, R L and Abud, C V and Fonseca, J C D and Guimarães Filho, Z O and Kroetz, T and Marcus, F A and Schelin, A B and Szezech, J D and Toufen, D L and Benkadda, S and Lopes, S R and Morrison, P J and Roberto, M and Gentle, K and Kuznetsov, Yu and Nascimento, I C},
    month = dec,
    year = {2012},
    pages = {124035},
}

@article{grime_shearless_2023,
    title = {Shearless bifurcations in particle transport for reversed-shear tokamaks},
    volume = {89},
    issn = {0022-3778, 1469-7807},
    url = {https://www.cambridge.org/core/product/identifier/S0022377822001295/type/journal_article},
    doi = {10.1017/S0022377822001295},
    abstract = {Some internal transport barriers in tokamaks have been related to the vicinity of extrema of the plasma equilibrium proﬁles. This effect is numerically investigated by considering the guiding-centre trajectories of plasma particles undergoing E × B drift motion, considering that the electric ﬁeld has a stationary non-monotonic radial proﬁle and an electrostatic ﬂuctuation. In addition, the equilibrium conﬁguration has a non-monotonic safety factor proﬁle. The numerical integration of the equations of motion yields a symplectic map with shearless barriers. By changing the safety factor proﬁle parameters, the appearance and breakup of these shearless curves are observed. The shearless curve’s successive breakup and recovery are explained using concepts from bifurcation theory. We also present bifurcation sequences associated with the creation of multiple shearless curves. Physical consequences of scenarios with multiple shearless curves are discussed.},
    language = {en},
    number = {1},
    urldate = {2025-01-01},
    journal = {Journal of Plasma Physics},
    author = {Grime, G.C. and Roberto, M. and Viana, R.L. and Elskens, Y. and Caldas, I.L.},
    month = feb,
    year = {2023},
    pages = {835890101},
}

@article{beron-vera_invariant-tori-like_2010,
    title = {Invariant-tori-like {Lagrangian} coherent structures in geophysical flows},
    volume = {20},
    issn = {1054-1500, 1089-7682},
    url = {https://pubs.aip.org/cha/article/20/1/017514/280598/Invariant-tori-like-Lagrangian-coherent-structures},
    doi = {10.1063/1.3271342},
    abstract = {The term “Lagrangian coherent structure” (LCS) is normally used to describe numerically detected structures whose properties are similar to those of stable and unstable manifolds of hyperbolic trajectories. The latter structures are invariant curves, i.e., material curves of fluid that serve as transport barriers. In this paper we use the term LCS to describe a different type of structure whose properties are similar to those of invariant tori in certain classes of two-dimensional incompressible flows. Like stable and unstable manifolds, invariant tori are invariant curves that serve as transport barriers. There are many differences, however, between traditional LCSs and invariant-tori-like LCSs. These differences are discussed with an emphasis on numerical techniques that can be used to identify invariant-tori-like LCSs. Structures of this type are often present in geophysical flows where zonal jets are present. A prime example of an invariant-torus-like LCS is the transport barrier near the core of the polar night jet in the Earth’s lower and middle stratospheres in the austral winter and early spring; this is the barrier that traps ozone-depleted air inside the ozone hole. This example is investigated using both a simple analytically prescribed flow and a velocity field produced by a general circulation model of the Earth’s atmosphere.},
    language = {en},
    number = {1},
    urldate = {2025-01-01},
    journal = {Chaos: An Interdisciplinary Journal of Nonlinear Science},
    author = {Beron-Vera, Francisco J. and Olascoaga, María J. and Brown, Michael G. and Koçak, Huseyin and Rypina, Irina I.},
    month = mar,
    year = {2010},
    pages = {017514},
}

@article{viezzer_high-accuracy_2013,
    title = {High-accuracy characterization of the edge radial electric field at {ASDEX} {Upgrade}},
    volume = {53},
    copyright = {http://iopscience.iop.org/info/page/text-and-data-mining},
    issn = {0029-5515, 1741-4326},
    url = {https://iopscience.iop.org/article/10.1088/0029-5515/53/5/053005},
    doi = {10.1088/0029-5515/53/5/053005},
    abstract = {The installation of a new poloidal charge exchange recombination spectroscopy (CXRS) diagnostic at ASDEX Upgrade (AUG) has enabled the determination of the radial electric ﬁeld, Er , using the radial force balance of impurity ions. Er has been derived from charge exchange (CX) spectra measured on different impurity species, such as He2+, B5+, C6+ and Ne10+. The resulting Er proﬁles are found to be identical within the uncertainties regardless of the impurity species used, thus, demonstrating the validity of the diagnostic technique. The Er proﬁle has been compared to the main ion pressure gradient term, which is found to be the dominant contribution at the plasma edge, thus, supporting that the Er well is created by the main ion species. The Er proﬁle has been measured in different conﬁnement regimes including L-, I- and H-mode. The depth of the Er well and the magnitude of the Er shear are correlated with the ion pressure at the pedestal top. The temporal evolution of the measured CX proﬁles and the resulting Er have been studied during an edge-localized mode (ELM) cycle. At the ELM crash, the Er minimum is less deep resulting in a reduction of the E × B shear. Within 2 ms after the ELM crash, the edge kinetic proﬁles have nearly recovered and the Er well is observed to recover simultaneously. In high density type-I ELM mitigated H-mode plasmas, obtained via externally applied magnetic perturbations (MPs) with toroidal mode number n = 2, no clear effect on Er due to the MPs has been observed.},
    language = {en},
    number = {5},
    urldate = {2025-01-01},
    journal = {Nuclear Fusion},
    author = {Viezzer, E. and Pütterich, T. and Conway, G.D. and Dux, R. and Happel, T. and Fuchs, J.C. and McDermott, R.M. and Ryter, F. and Sieglin, B. and Suttrop, W. and Willensdorfer, M. and Wolfrum, E. and {the ASDEX Upgrade Team}},
    month = may,
    year = {2013},
    pages = {053005},
}

@article{pratt_meandering_1988,
    title = {Meandering and {Eddy} {Detachment} {According} to a {Simple} ({Looking}) {Path} {Equation}},
    volume = {18},
    issn = {0022-3670, 1520-0485},
    url = {http://journals.ametsoc.org/doi/10.1175/1520-0485(1988)018<1627:MAEDAT>2.0.CO;2},
    doi = {10.1175/1520-0485(1988)018<1627:MAEDAT>2.0.CO;2},
    language = {en},
    number = {11},
    urldate = {2025-05-03},
    journal = {Journal of Physical Oceanography},
    author = {Pratt, L. J.},
    month = nov,
    year = {1988},
    pages = {1627--1640},
}

@article{zhu_structure_2018,
    title = {On the structure of the drifton phase space and its relation to the {Rayleigh}–{Kuo} criterion of the zonal-flow stability},
    volume = {25},
    issn = {1070-664X, 1089-7674},
    url = {https://pubs.aip.org/pop/article/25/7/072121/319461/On-the-structure-of-the-drifton-phase-space-and},
    doi = {10.1063/1.5039652},
    abstract = {The phase space of driftons (drift-wave quanta) is studied within the generalized Hasegawa–Mima collisionless-plasma model in the presence of zonal flows. This phase space is made intricate by the corrections to the drifton ray equations that were recently proposed by Parker [J. Plasma Phys. 82, 595820602 (2016)] and Ruiz et al. [Phys. Plasmas 23, 122304 (2016)]. Contrary to the traditional geometrical-optics (GO) model of the drifton dynamics, it is found that driftons can not only be trapped or passing but also accumulate spatially while experiencing indefinite growth of their momenta. In particular, it is found that the Rayleigh–Kuo threshold known from geophysics corresponds to the regime when such “runaway” trajectories are the only ones possible. On one hand, this analysis helps to visualize the development of the zonostrophic instability, particularly its nonlinear stage, which is studied here both analytically and through wave-kinetic simulations. On the other hand, the GO theory predicts that zonal flows above the Rayleigh–Kuo threshold can only grow; hence, the deterioration of intense zonal flows cannot be captured within a GO model. In particular, this means that the so-called tertiary instability of intense zonal flows cannot be adequately described within the quasilinear wave kinetic equation, contrary to some previous studies.},
    language = {en},
    number = {7},
    urldate = {2025-01-01},
    journal = {Physics of Plasmas},
    author = {Zhu, Hongxuan and Zhou, Yao and Dodin, I. Y.},
    month = jul,
    year = {2018},
    pages = {072121},
}

@article{cao_maintenance_2024,
    title = {The maintenance of coherent vortex topology by {Lagrangian} chaos in drift-{Rossby} wave turbulence},
    volume = {36},
    issn = {1070-6631, 1089-7666},
    url = {https://pubs.aip.org/pof/article/36/6/061701/3295788/The-maintenance-of-coherent-vortex-topology-by},
    doi = {10.1063/5.0207687},
    abstract = {This work introduces the “potential vorticity bucket brigade,” a mechanism for explaining the resilience of vortex structures in magnetically confined fusion plasmas and geophysical flows. Drawing parallels with zonal jet formation, we show how inhomogeneous patterns of mixing can reinforce, rather than destroy non-zonal flow structure. We accomplish this through an exact stochastic Lagrangian representation of vorticity transport, together with a near-integrability property, which relates coherent flow topology to fluid relabeling symmetries. We demonstrate these ideas in the context of gradient-driven magnetized plasma turbulence, though the tools we develop here are model-agnostic and applicable beyond the system studied here.},
    language = {en},
    number = {6},
    urldate = {2025-01-01},
    journal = {Physics of Fluids},
    author = {Cao, Norman M. and Qi, Di},
    month = jun,
    year = {2024},
    pages = {061701},
}

@article{hatch_microtearing_2021,
    title = {Microtearing modes as the source of magnetic fluctuations in the {JET} pedestal},
    volume = {61},
    issn = {0029-5515, 1741-4326},
    url = {https://iopscience.iop.org/article/10.1088/1741-4326/abd21a},
    doi = {10.1088/1741-4326/abd21a},
    abstract = {We report on a detailed study of magnetic fluctuations in the JET pedestal, employing basic theoretical considerations, gyrokinetic simulations, and experimental fluctuation data to establish the physical basis for their origin, role, and distinctive characteristics. We demonstrate quantitative agreement between gyrokinetic simulations of microtearing modes (MTMs) and two magnetic frequency bands with corresponding toroidal mode numbers n = 4 and 8. Such disparate fluctuation scales, with substantial gaps between toroidal mode numbers, are commonly observed in pedestal fluctuations. Here we provide a clear explanation, namely the alignment of the relevant rational surfaces (and not others) with the peak in the ω∗ profile, which is localized in the steep gradient region of the pedestal. We demonstrate that a global treatment is required to capture this effect. Nonlinear simulations suggest that the MTM fluctuations produce experimentally-relevant transport levels and saturate by relaxing the background electron temperature gradient, slightly downshifting the fluctuation frequencies from the linear predictions. Scans in collisionality are compared with a simple MTM dispersion relation. At the experimental points considered, MTM growth rates can either increase or decrease with collision frequency depending on the parameters thus defying any simple characterization of collisionality dependence.},
    language = {en},
    number = {3},
    urldate = {2025-01-01},
    journal = {Nuclear Fusion},
    author = {Hatch, D.R. and Kotschenreuther, M. and Mahajan, S.M. and Pueschel, M.J. and Michoski, C. and Merlo, G. and Hassan, E. and Field, A.R. and Frassinetti, L. and Giroud, C. and Hillesheim, J.C. and Maggi, C.F. and Perez Von Thun, C. and Roach, C.M. and Saarelma, S. and Jarema, D. and Jenko, F. and Contributors, Jet},
    month = mar,
    year = {2021},
    pages = {036015},
}

@article{li_numerical_2022,
    title = {Numerical modeling of pedestal stability and broadband turbulence of wide-pedestal {QH}-mode plasmas on {DIII}-{D}},
    volume = {62},
    issn = {0029-5515},
    url = {https://dx.doi.org/10.1088/1741-4326/ac4acf},
    doi = {10.1088/1741-4326/ac4acf},
    abstract = {The wide-pedestal quiescent high confinement mode discovered on DIII-D in recent years is a stationary and quiescent H-mode (QH-mode) with the pedestal width exceeding EPED prediction by at least 25\%. Its characteristics, such as low rotation, high energy confinement and edge localized mode-free operation, make it an attractive operation mode for future reactors. Linear and nonlinear simulations using BOUT++ reduced two fluid MHD models and awere carried out to investigate the bursty broadband turbulence often observed in the edge of wide-pedestal QH-mode plasmas. Two kinds of MHD-scale instabilities in different spatial locations within the pedestal were found in the simulations: one mild peeling–ballooning (PB) mode (γ PB {\textless} 0.04ω A) located near the minimum in E r well propagating in ion diamagnetic drift direction; and one drift-Alfvén wave locates at smaller radius compared to E r well propagating in the electron diamagnetic drift direction and unstable only when the parallel electron dynamics is included in the simulation. The coupling between drift wave and shear Alfvén wave provides a possible cause of the experimentally observed local profile flattening in the upper-pedestal. The rotation direction, mode location, as well as the wavenumber of these two modes from BOUT++ simulations agree reasonably well with the experimental measurements, while the lack of quantitative agreement is likely due to the lack of trapped electron physics in current fluid model. This work presents improved physics understanding of the pedestal stability and turbulence dynamics for wide-pedestal QH-mode.},
    language = {en},
    number = {7},
    urldate = {2025-05-05},
    journal = {Nuclear Fusion},
    author = {Li, Zeyu and Chen, Xi and Muscatello, C. M. and Burrell, K.H. and Xu, Xueqiao and Zhu, Ben and Hong, Rongjie and Osborne, T.H. and Grierson, B. A. and Rhodes, T.L. and Xia, T.Y. and McKee, G.R. and Yan, Z. and Yu, Guanying and Team, DIII-D.},
    month = may,
    year = {2022},
    note = {Publisher: IOP Publishing},
    pages = {076033},
}

@article{rypina_robust_2007,
	title = {Robust {Transport} {Barriers} {Resulting} from {Strong} {Kolmogorov}-{Arnold}-{Moser} {Stability}},
	volume = {98},
	copyright = {http://link.aps.org/licenses/aps-default-license},
	issn = {0031-9007, 1079-7114},
	url = {https://link.aps.org/doi/10.1103/PhysRevLett.98.104102},
	doi = {10.1103/PhysRevLett.98.104102},
	language = {en},
	number = {10},
	urldate = {2025-10-10},
	journal = {Physical Review Letters},
	author = {Rypina, I. I. and Brown, M. G. and Beron-Vera, F. J. and Koçak, H. and Olascoaga, M. J. and Udovydchenkov, I. A.},
	month = mar,
	year = {2007},
	pages = {104102},
	file = {PDF:C\:\\Users\\maple\\OneDrive\\Research\\Papers\\Rypina et al. - 2007 - Robust Transport Barriers Resulting from Strong Kolmogorov-Arnold-Moser Stability.pdf:application/pdf},
}

@misc{zhu_intrinsic_2024_dataset,
	title = {{Intrinsic toroidal rotation driven by turbulent and neoclassical processes in tokamak plasmas from global gyrokinetic simulations [Data set]. Version 1. Princeton University, Princeton Plasma Physics Lab. https://doi.org/10.34770/n4n1-1g10}},
	copyright = {Creative Commons Attribution 4.0 International},
	url = {https://datacommons.princeton.edu/discovery/doi/10.34770/n4n1-1g10},
	doi = {10.34770/N4N1-1G10},
	urldate = {2025-08-21},
	publisher = {Princeton University, Princeton Plasma Physics Lab},
	author = {Zhu, Hongxuan and Stoltzfus-Dueck, T. and Hager, R. and Ku, S. and Chang, C. S.},
	year = {2024},
	annote = {Other
Digital data for the paper "Intrinsic toroidal rotation driven by
 turbulent and neoclassical processes in tokamak plasmas from global
 gyrokinetic simulations". More details can be found in README.txt.},
}

@article{muller_experimental_2011,
	title = {Experimental {Investigation} of the {Role} of {Fluid} {Turbulent} {Stresses} and {Edge} {Plasma} {Flows} for {Intrinsic} {Rotation} {Generation} in {DIII}-{D} {H} -{Mode} {Plasmas}},
	volume = {106},
	copyright = {http://link.aps.org/licenses/aps-default-license},
	issn = {0031-9007, 1079-7114},
	url = {https://link.aps.org/doi/10.1103/PhysRevLett.106.115001},
	doi = {10.1103/PhysRevLett.106.115001},
	language = {en},
	number = {11},
	urldate = {2025-06-25},
	journal = {Physical Review Letters},
	author = {Müller, S. H. and Boedo, J. A. and Burrell, K. H. and deGrassie, J. S. and Moyer, R. A. and Rudakov, D. L. and Solomon, W. M.},
	month = mar,
	year = {2011},
	pages = {115001},
	file = {Accepted Version:C\:\\Users\\maple\\Zotero\\storage\\CX8C7JE7\\Müller et al. - 2011 - Experimental Investigation of the Role of Fluid Turbulent Stresses and Edge Plasma Flows for Intrins.pdf:application/pdf;Müller et al. - 2011 - Experimental Investigation of the Role of Fluid Turbulent Stresses and Edge Plasma Flows for Intrins 2:C\:\\Users\\maple\\OneDrive\\Research\\Papers\\Müller et al. - 2011 - Experimental Investigation of the Role of Fluid Turbulent Stresses and Edge Plasma Flows for Intrins 2.pdf:application/pdf},
}

@article{muller_intrinsic_2011,
	title = {Intrinsic rotation generation in {ELM}-free {H}-mode plasmas in the {DIII}-{D} tokamak—{Experimental} observations},
	volume = {18},
	issn = {1070-664X, 1089-7674},
	url = {https://pubs.aip.org/pop/article/18/7/072504/262251/Intrinsic-rotation-generation-in-ELM-free-H-mode},
	doi = {10.1063/1.3605041},
	abstract = {A detailed description is presented of the experiment reported in [S. H. Müller et al., Phys. Rev. Lett. 106, 115001 (2011)], which reported the first measurements of fluid turbulent stresses in a tokamak H-mode pedestal. Mach probe measurements disclosed a narrow co-current rotation layer at the separatrix, which is also seen in some L-modes [J. A. Boedo et al., Phys. Plasmas 18, 032510 (2011)]. Independent evidence for the existence of the edge co-rotation layer is presented from main-ion rotation measurements by charge-exchange-recombination spectroscopy in comparable helium plasmas. The probe measurements are validated against density and electron temperature profiles from Thomson scattering and in terms of the measured turbulent particle transport, which is consistent with the global density rise. Non-diffusive non-convective angular momentum transport is required by two independent experimental observations: (1) A persistent dip in the rotation profile separates the edge layer from the evolving core region during intrinsic rotation development. (2) The rotation profiles with co- and counter-current neutral beam injection appear well described as the simple sum of a constant intrinsic part and the beam-driven part, also demonstrating the profile-independence of the intrinsic torque. Characteristics of the turbulent fluctuations composing the fluid turbulent stresses are discussed: Up to 0.5 cm inside the separatrix, the low amplitude of the Reynolds stress (\&lt;0.05 Nm of torque) is due to both a reduction of the fluctuation amplitudes at the peak of the edge co-rotation layer and weak correlations between the toroidal and radial velocity fluctuations. Further into the core, the correlations increase significantly up to a value of +0.75, resulting in an almost unidirectional character of the turbulent Reynolds stress, generating substantial counter-current torques up to −2 Nm. Additional mechanisms must be present to balance these torques and explain the co-current core-plasma spin-up at a rate of +0.3 Nm.},
	language = {en},
	number = {7},
	urldate = {2025-06-25},
	journal = {Physics of Plasmas},
	author = {Müller, S. H. and Boedo, J. A. and Burrell, K. H. and deGrassie, J. S. and Moyer, R. A. and Rudakov, D. L. and Solomon, W. M. and Tynan, G. R.},
	month = jul,
	year = {2011},
	pages = {072504},
	file = {PDF:C\:\\Users\\maple\\OneDrive\\Research\\Papers\\Müller et al. - 2011 - Intrinsic rotation generation in ELM-free H-mode plasmas in the DIII-D tokamak—Experimental observat 3.pdf:application/pdf},
}

@article{gobbin_resonance_2008,
    title = {Resonance between passing fast ions and {MHD} instabilities both in the tokamak and the {RFP} configurations},
    volume = {48},
    copyright = {https://iopscience.iop.org/page/copyright},
    issn = {0029-5515, 1741-4326},
    url = {https://iopscience.iop.org/article/10.1088/0029-5515/48/7/075002},
    doi = {10.1088/0029-5515/48/7/075002},
    abstract = {Suprathermal ions in fusion plasmas have different confinement properties in tokamak and reversed field pinch (RFP) configurations. These differences are explained in terms of guiding centre drifts by an analytical approach. Magnetic field perturbations and passing fast ions can satisfy resonance conditions which do not hold for thermal or low energy particles. Such a resonance is controlled by a generalization for passing fast ions of the safety factor parameter. Applications to experimental situations show good agreement in predicting the radial position of the resonances in the fast ion phase space. Numerical simulations in RFP devices are qualitatively consistent with the longer confinement times of fast ions, especially in helical regimes.},
    number = {7},
    urldate = {2025-07-14},
    journal = {Nuclear Fusion},
    author = {Gobbin, M. and White, R.B. and Marrelli, L. and Martin, P.},
    month = jul,
    year = {2008},
    note = {Publisher: IOP Publishing},
    pages = {075002},
}

@article{del-castillo-negrete_chaotic_2000,
    title = {Chaotic transport in zonal flows in analogous geophysical and plasma systems},
    volume = {7},
    issn = {1070-664X, 1089-7674},
    url = {https://pubs.aip.org/pop/article/7/5/1702/1069417/Chaotic-transport-in-zonal-flows-in-analogous},
    doi = {10.1063/1.873988},
    abstract = {Zonal flows occur naturally in geophysical fluids. Important examples include Jupiter’s zonal flows, large scale jets in the earth’s stratosphere, and oceanic jets like the Gulf Stream. These zonal flows create transport barriers that have a crucial influence on mixing and confinement. On the other hand, zonal flows have also been observed in fusion plasmas and their role in the reduction of transport has been widely recognized. Based on the analogy between Rossby waves in quasigeostrophic flows and drift waves in magnetically confined plasmas, recent models and laboratory experiments developed for studying transport in geophysical fluid dynamics are discussed in the context of plasma physics. The flows considered are not turbulent and are dominated by large scale coherent structures which we describe with simple deterministic Hamiltonian models that exhibit chaotic transport. Two transport problems are studied: the role of drift/Rossby waves in the destruction of transport barriers, and the statistics of test particle motion. It is shown that non-monotonic zonal flows close to marginal stability typically exhibit robust transport barriers at the peak velocity where the shear locally vanishes. Also, it is shown that the trapping effect of vortices combined with the zonal flows gives rise to anomalous diffusion and Lévy (non-Gaussian) statistics. The models are compared with fluid laboratory experiment.},
    language = {en},
    number = {5},
    urldate = {2025-01-01},
    journal = {Physics of Plasmas},
    author = {del-Castillo-Negrete, Diego},
    month = may,
    year = {2000},
    pages = {1702--1711},
}

@article{delshams_kam_2000,
    title = {{KAM} {Theory} and a {Partial} {Justification} of {Greene}'s {Criterion} for {Nontwist} {Maps}},
    volume = {31},
    issn = {0036-1410, 1095-7154},
    url = {http://epubs.siam.org/doi/10.1137/S003614109834908X},
    doi = {10.1137/S003614109834908X},
    abstract = {We consider perturbations of integrable, area preserving nontwist maps of the annulus (those are maps in which the twist condition changes sign). These maps appear in a variety of applications, notably transport in atmospheric Rossby waves.},
    language = {en},
    number = {6},
    urldate = {2025-01-01},
    journal = {SIAM Journal on Mathematical Analysis},
    author = {Delshams, Amadeu and De La Llave, Rafael},
    month = jan,
    year = {2000},
    pages = {1235--1269},
}

@article{martinell_gyroaverage_2013,
    title = {Gyroaverage effects on chaotic transport by drift waves in zonal flows},
    volume = {20},
    issn = {1070-664X, 1089-7674},
    url = {https://pubs.aip.org/pop/article/20/2/022303/106971/Gyroaverage-effects-on-chaotic-transport-by-drift},
    doi = {10.1063/1.4790639},
    abstract = {Finite Larmor radius (FLR) effects on E × B test particle chaotic transport in the presence of zonal flows is studied. The FLR effects are introduced by the gyro-average of a simplified E × B guiding center model consisting of the linear superposition of a non-monotonic zonal flow and drift waves. Non-monotonic zonal flows play a critical role on transport because they exhibit robust barriers to chaotic transport in the region(s) where the shear vanishes. In addition, the non-monotonicity gives rise to nontrivial changes in the topology of the orbits of the E × B Hamiltonian due to separatrix reconnection. The present study focuses on the role of FLR effects on these two signatures of non-monotonic zonal flows: shearless transport barriers and separatrix reconnection. It is shown that, as the Larmor radius increases, the effective zonal flow profile bifurcates and multiple shearless regions are created. As a result, the topology of the gyro-averaged Hamiltonian exhibits very complex separatrix reconnection bifurcations. It is also shown that FLR effects tend to reduce chaotic transport. In particular, the restoration of destroyed transport barriers is observed as the Larmor radius increases. A detailed numerical study is presented on the onset of global chaotic transport as function of the amplitude of the drift waves and the Larmor radius. For a given amplitude, the threshold for the destruction of the shearless transport barrier, as function of the Larmor radius, exhibits a fractal-like structure. The FLR effects on a thermal distribution of test particles are also studied. In particular, the fraction of confined particles with a Maxwellian distribution of gyroradii is computed, and an effective transport suppression is found for high enough temperatures.},
    language = {en},
    number = {2},
    urldate = {2025-08-26},
    journal = {Physics of Plasmas},
    author = {Martinell, Julio J. and del-Castillo-Negrete, Diego},
    month = feb,
    year = {2013},
    pages = {022303},
}

@article{szezech_transport_2009,
    title = {Transport properties in nontwist area-preserving maps},
    volume = {19},
    issn = {1054-1500, 1089-7682},
    url = {https://pubs.aip.org/cha/article/19/4/043108/135766/Transport-properties-in-nontwist-area-preserving},
    doi = {10.1063/1.3247349},
    abstract = {Nontwist systems, common in the dynamical descriptions of fluids and plasmas, possess a shearless curve with a concomitant transport barrier that eliminates or reduces chaotic transport, even after its breakdown. In order to investigate the transport properties of nontwist systems, we analyze the barrier escape time and barrier transmissivity for the standard nontwist map, a paradigm of such systems. We interpret the sensitive dependence of these quantities upon map parameters by investigating chaotic orbit stickiness and the associated role played by the dominant crossing of stable and unstable manifolds.},
    language = {en},
    number = {4},
    urldate = {2025-08-26},
    journal = {Chaos: An Interdisciplinary Journal of Nonlinear Science},
    author = {Szezech, J. D. and Caldas, I. L. and Lopes, S. R. and Viana, R. L. and Morrison, P. J.},
    month = dec,
    year = {2009},
    pages = {043108},
}

@book{gonzalez-enriquez_singularity_2014,
	address = {Providence, Rhode Island},
	edition = {Online-Ausg},
	series = {Memoirs of the {American} {Mathematical} {Society}},
	title = {Singularity theory for non-twist {KAM} tori},
	isbn = {978-1-4704-1428-3},
	language = {eng},
	number = {volume 227, number 1067},
	publisher = {American Mathematical Society},
	author = {González-Enríquez, Alejandra and Haro, Alex and De La Llave, Rafael},
	year = {2014},
	annote = {"January 2014, volume 227, number 1067 (third of 4 numbers)"--Title page. - Includes bibliographical references (pages 111-115). - Electronic reproduction; Providence, Rhode Island; American Mathematical Society; 2014. - Description based on print version record},
}

@article{boissonnat_simplex_2014,
	title = {The {Simplex} {Tree}: {An} {Efficient} {Data} {Structure} for {General} {Simplicial} {Complexes}},
	volume = {70},
	issn = {0178-4617, 1432-0541},
	shorttitle = {The {Simplex} {Tree}},
	url = {http://link.springer.com/10.1007/s00453-014-9887-3},
	doi = {10.1007/s00453-014-9887-3},
	language = {en},
	number = {3},
	urldate = {2025-09-16},
	journal = {Algorithmica},
	author = {Boissonnat, Jean-Daniel and Maria, Clément},
	month = nov,
	year = {2014},
	pages = {406--427},
}

@incollection{gudhi:FilteredComplexes
, author    = {Cl{\'{e}}ment Maria}
, title     = {Filtered Complexes}
, publisher = {GUDHI Editorial Board}
, edition   = {3.11.0}
, booktitle = {GUDHI User and Reference Manual}
, url       = {https://gudhi.inria.fr/doc/3.11.0/group__simplex__tree.html}
, year      = {2025}
}

@incollection{gudhi:PersistenceRepresentations
, author    = {Pawel Dlotko}
, title     = {Persistence representations}
, publisher = {GUDHI Editorial Board}
, edition   = {3.11.0}
, booktitle = {GUDHI User and Reference Manual}
, url       = {https://gudhi.inria.fr/doc/3.11.0/group___persistence__representations.html}
, year      = {2025}
}

@article{cohen-steiner_stability_2007,
    title = {Stability of {Persistence} {Diagrams}},
    volume = {37},
    copyright = {http://www.springer.com/tdm},
    issn = {0179-5376, 1432-0444},
    url = {http://link.springer.com/10.1007/s00454-006-1276-5},
    doi = {10.1007/s00454-006-1276-5},
    language = {en},
    number = {1},
    urldate = {2025-09-16},
    journal = {Discrete \& Computational Geometry},
    author = {Cohen-Steiner, David and Edelsbrunner, Herbert and Harer, John},
    month = jan,
    year = {2007},
    pages = {103--120},
}

@article{mirth_representations_2021,
    title = {Representations of energy landscapes by sublevelset persistent homology: {An} example with \textit{n} -alkanes},
    volume = {154},
    issn = {0021-9606, 1089-7690},
    shorttitle = {Representations of energy landscapes by sublevelset persistent homology},
    url = {https://pubs.aip.org/jcp/article/154/11/114114/315565/Representations-of-energy-landscapes-by},
    doi = {10.1063/5.0036747},
    abstract = {Encoding the complex features of an energy landscape is a challenging task, and often, chemists pursue the most salient features (minima and barriers) along a highly reduced space, i.e., two- or three-dimensions. Even though disconnectivity graphs or merge trees summarize the connectivity of the local minima of an energy landscape via the lowest-barrier pathways, there is much information to be gained by also considering the topology of each connected component at different energy thresholds (or sublevelsets). We propose sublevelset persistent homology as an appropriate tool for this purpose. Our computations on the configuration phase space of n-alkanes from butane to octane allow us to conjecture, and then prove, a complete characterization of the sublevelset persistent homology of the alkane CmH2m+2 Potential Energy Landscapes (PELs), for all m, in all homological dimensions. We further compare both the analytical configurational PELs and sampled data from molecular dynamics simulation using the united and all-atom descriptions of the intramolecular interactions. In turn, this supports the application of distance metrics to quantify sampling fidelity and lays the foundation for future work regarding new metrics that quantify differences between the topological features of high-dimensional energy landscapes.},
    language = {en},
    number = {11},
    urldate = {2025-09-16},
    journal = {The Journal of Chemical Physics},
    author = {Mirth, Joshua and Zhai, Yanqin and Bush, Johnathan and Alvarado, Enrique G. and Jordan, Howie and Heim, Mark and Krishnamoorthy, Bala and Pflaum, Markus and Clark, Aurora and Z, Y and Adams, Henry},
    month = mar,
    year = {2021},
    pages = {114114},
}

@article{diamond_zonal_2005,
	title = {Zonal flows in plasma—a review},
	volume = {47},
	issn = {0741-3335, 1361-6587},
	url = {https://iopscience.iop.org/article/10.1088/0741-3335/47/5/R01},
	doi = {10.1088/0741-3335/47/5/R01},
	abstract = {A comprehensive review of zonal ﬂow phenomena in plasmas is presented. While the emphasis is on zonal ﬂows in laboratory plasmas, planetary zonal ﬂows are discussed as well. The review presents the status of theory, numerical simulation and experiments relevant to zonal ﬂows. The emphasis is on developing an integrated understanding of the dynamics of drift wave–zonal ﬂow turbulence by combining detailed studies of the generation of zonal ﬂows by drift waves, the back-interaction of zonal ﬂows on the drift waves, and the various feedback loops by which the system regulates and organizes itself. The implications of zonal ﬂow phenomena for conﬁnement in, and the phenomena of fusion devices are discussed. Special attention is given to the comparison of experiment with theory and to identifying directions for progress in future research.},
	language = {en},
	number = {5},
	urldate = {2025-01-01},
	journal = {Plasma Physics and Controlled Fusion},
	author = {Diamond, P H and Itoh, S-I and Itoh, K and Hahm, T S},
	month = may,
	year = {2005},
	pages = {R35--R161},
	file = {PDF:C\:\\Users\\maple\\OneDrive\\Research\\Papers\\Diamond et al. - 2005 - Zonal flows in plasma—a review.pdf:application/pdf},
}

@article{osorio-quiroga_larmor_2024,
    title = {Larmor radius effect on the control of chaotic transport in tokamaks},
    volume = {31},
    issn = {1070-664X, 1089-7674},
    url = {https://pubs.aip.org/pop/article/31/12/122302/3327092/Larmor-radius-effect-on-the-control-of-chaotic},
    doi = {10.1063/5.0230207},
    abstract = {We investigate the influence of the finite Larmor radius on the dynamics of guiding-center test particles subjected to an E×B drift in a large aspect-ratio tokamak. For that, we adopt the drift-wave test particle transport model presented by Horton et al. [Phys. Plasmas 5, 3910 (1998)] and introduce a second-order gyro-averaged extension, which accounts for the finite Larmor radius effect that arises from a spatially varying electric field. Using this extended model, we numerically examine the influence of the finite Larmor radius on chaotic transport and the formation of transport barriers. For non-monotonic plasma profiles, we show that the twist condition of the dynamical system, i.e., KAM theorem's non-degeneracy condition for the Hamiltonian, is violated along a special curve, which, under non-equilibrium conditions, exhibits significant resilience to destruction, thereby inhibiting chaotic transport. This curve acts as a robust barrier to transport and is usually called shearless transport barrier. While varying the amplitude of the electrostatic perturbations, we analyze bifurcation diagrams of the shearless barriers and escape rates of orbits to explore the impact of the finite Larmor radius on controlling chaotic transport. Our findings show that increasing the Larmor radius enhances the robustness of transport barriers, as larger electrostatic perturbation amplitudes are required to disrupt them. Additionally, as the Larmor radius increases, even in the absence of transport barriers, we observe a reduction in the escape rates, indicating a decrease in chaotic transport.},
    language = {en},
    number = {12},
    urldate = {2025-08-12},
    journal = {Physics of Plasmas},
    author = {Osorio-Quiroga, L. A. and Roberto, M. and Viana, R. L. and Elskens, Y. and Caldas, I. L.},
    month = dec,
    year = {2024},
    pages = {122302},
}

@article{viana_transport_2021,
    title = {Transport {Barriers} in {Symplectic} {Maps}},
    volume = {51},
    issn = {0103-9733, 1678-4448},
    url = {https://link.springer.com/10.1007/s13538-021-00894-8},
    doi = {10.1007/s13538-021-00894-8},
    language = {en},
    number = {3},
    urldate = {2025-09-16},
    journal = {Brazilian Journal of Physics},
    author = {Viana, R. L. and Caldas, I. L. and Szezech, J. D. and Batista, A. M. and Abud, C. V. and Schelin, A. B. and Mugnaine, M. and Santos, M. S. and Leal, B. B. and Bartoloni, B. and Mathias, A. C. and Gomes, J. V. and Morrison, P. J.},
    month = jun,
    year = {2021},
    pages = {899--909},
}

@article{grime_effective_2025,
    title = {Effective transport barriers in the biquadratic nontwist map},
    volume = {111},
    issn = {2470-0045, 2470-0053},
    url = {https://link.aps.org/doi/10.1103/PhysRevE.111.014219},
    doi = {10.1103/PhysRevE.111.014219},
    language = {en},
    number = {1},
    urldate = {2025-09-16},
    journal = {Physical Review E},
    author = {Grime, Gabriel C. and Caldas, Iberê L. and Viana, Ricardo L. and Elskens, Yves},
    month = jan,
    year = {2025},
    pages = {014219},
}

@article{falessi_lagrangian_2015,
    title = {Lagrangian coherent structures and plasma transport processes},
    volume = {81},
    copyright = {https://www.cambridge.org/core/terms},
    issn = {0022-3778, 1469-7807},
    url = {https://www.cambridge.org/core/product/identifier/S0022377815000690/type/journal_article},
    doi = {10.1017/s0022377815000690},
    abstract = {A dynamical system framework is used to describe transport processes in plasmas embedded in a magnetic field. For periodic systems with one degree of freedom, the Poincaré map provides a splitting of the phase space into regions where particles have different kinds of motion: periodic, quasi-periodic or chaotic. The boundaries of these regions are transport barriers, i.e. a trajectory cannot cross such boundaries throughout the evolution of the system. Lagrangian coherent structures generalize this method to systems with the most general time dependence, splitting the phase space into regions with different qualitative behaviours. This leads to the definition of finite-time transport barriers, i.e. trajectories cannot cross the barrier for a finite amount of time. This methodology can be used to identify fast recirculating regions in the dynamical system and to characterize the transport between them.},
    language = {en},
    number = {5},
    urldate = {2025-07-14},
    journal = {Journal of Plasma Physics},
    author = {Falessi, M. V. and Pegoraro, F. and Schep, T. J.},
    month = oct,
    year = {2015},
    note = {Publisher: Cambridge University Press (CUP)},
}

@article{grenfell_h-mode_2018,
    title = {H-mode access and the role of spectral shift with electrode biasing in the {TCABR} tokamak},
    volume = {25},
    issn = {1070-664X, 1089-7674},
    url = {https://pubs.aip.org/pop/article/25/7/072301/319645/H-mode-access-and-the-role-of-spectral-shift-with},
    doi = {10.1063/1.5029561},
    abstract = {The transition from the ohmic heating L-mode to the H-mode in the tokamak TCABR induced by an external electrode biasing is addressed in this work. A gradual improvement in the plasma confinement with the biasing voltage was observed as a result of the enhancement of the E × B shear flow at the edge. This result was supported by the measurement of plasma parameters in different radial locations by two electrostatic probes. Radial profiles of the electron temperature (from the standard sweeping voltage technique), electron density, and radial electric field, as well as the local turbulence features, were measured. We observed that the electron density profile became steeper as the voltage on the electrode increased, while no significant change in the temperature profile was noticed. In addition, the data were compared with the spectral shift model [G. M. Staebler et al., Phys. Rev. Lett. 110, 055003 (2013)] at different bias voltages, i.e., different E × B shear flows, showing a good agreement. Particularly, the gradual improvement was held: the higher the biasing voltage, the larger the radial wavenumber shift and so the turbulence suppression. Moreover, by studying the edge radial profile of the mean radial wavenumber, we show that the large shift occurs in the same position where the transport barrier is created.},
    language = {en},
    number = {7},
    urldate = {2025-09-16},
    journal = {Physics of Plasmas},
    author = {Grenfell, G. G. and Nascimento, I. C. and Oliveira, D. S. and Guimarães-Filho, Z. O. and Elizondo, J. I. and Reis, A. P. and Galvão, R. M. O. and Baquero, W. A. H. and Oliveira, A. M. and Ronchi, G. and De Sá, W. P. and Severo, J. H. F. and {T CABR Team}},
    month = jul,
    year = {2018},
    pages = {072301},
}

@article{balescu_hamiltonian_1998,
    title = {Hamiltonian nontwist map for magnetic field lines with locally reversed shear in toroidal geometry},
    volume = {58},
    copyright = {http://link.aps.org/licenses/aps-default-license},
    issn = {1063-651X, 1095-3787},
    url = {https://link.aps.org/doi/10.1103/PhysRevE.58.3781},
    doi = {10.1103/PhysRevE.58.3781},
    language = {en},
    number = {3},
    urldate = {2025-09-16},
    journal = {Physical Review E},
    author = {Balescu, R.},
    month = sep,
    year = {1998},
    pages = {3781--3792},
}

@article{rypina_lagrangian_2007,
    title = {On the {Lagrangian} {Dynamics} of {Atmospheric} {Zonal} {Jets} and the {Permeability} of the {Stratospheric} {Polar} {Vortex}},
    volume = {64},
    issn = {1520-0469, 0022-4928},
    url = {https://journals.ametsoc.org/doi/10.1175/JAS4036.1},
    doi = {10.1175/JAS4036.1},
    abstract = {Abstract
            The Lagrangian dynamics of zonal jets in the atmosphere are considered, with particular attention paid to explaining why, under commonly encountered conditions, zonal jets serve as barriers to meridional transport. The velocity field is assumed to be two-dimensional and incompressible, and composed of a steady zonal flow with an isolated maximum (a zonal jet) on which two or more traveling Rossby waves are superimposed. The associated Lagrangian motion is studied with the aid of the Kolmogorov–Arnold–Moser (KAM) theory, including nontrivial extensions of well-known results. These extensions include applicability of the theory when the usual statements of nondegeneracy are violated, and applicability of the theory to multiply periodic systems, including the absence of Arnold diffusion in such systems. These results, together with numerical simulations based on a model system, provide an explanation of the mechanism by which zonal jets serve as barriers to the meridional transport of passive tracers under commonly encountered conditions. Causes for the breakdown of such a barrier are discussed. It is argued that a barrier of this type accounts for the sharp boundary of the Antarctic ozone hole at the perimeter of the stratospheric polar vortex in the austral spring.},
    language = {en},
    number = {10},
    urldate = {2025-09-16},
    journal = {Journal of the Atmospheric Sciences},
    author = {Rypina, I. I. and Brown, M. G. and Beron-Vera, F. J. and Koçak, H. and Olascoaga, M. J. and Udovydchenkov, I. A.},
    month = oct,
    year = {2007},
    pages = {3595--3610},
}

@article{singh_when_2020,
	title = {When does turbulence spreading matter?},
	volume = {27},
	issn = {1070-664X, 1089-7674},
	url = {https://pubs.aip.org/pop/article/27/4/042308/318944/When-does-turbulence-spreading-matter},
	doi = {10.1063/1.5117835},
	abstract = {Few, if any, of the many papers on turbulence spreading address the key question of how turbulence spreading actually affects the proﬁle structure. Here, we are using a reduced model to answer that question. Turbulence spreading is most relevant near regions where the proﬁles support a strong intensity gradient rI. One such case is at the edge of an L mode discharge, near a source of turbulence [i.e., either a localized source of edge turbulence or an inﬂux of turbulence from the scrape-off layer (SOL)]. Another is in “No Man’s Land” (NML), which connects the pedestal to the stiff core in H mode. In the case of L mode, without an edge intensity source, the turbulence intensity proﬁle is nearly ﬂat and spreading has a weak effect. An edge localized source increases the edge rI, which then drives inward spreading. Invasion of turbulence from the SOL to the edge softens the edge pressure gradient. In H mode, the strong shear suppression of pedestal turbulence necessarily forces a sharp rI in NML. This sharp rI drives a signiﬁcant ﬂux of turbulence from the core to the pedestal, where it is ultimately dissipated by shearing. Counter-intuitively, the results indicate that spreading actually increases the pedestal height and width and hence the energy content in H mode. This suggests that models of the pedestal structure should include NML turbulence spreading effects. The relation of avalanches to spreading is studied. Spreading weakly affects the avalanche distribution, but the spatiotemporal correlation of intensity increases with spreading.},
	language = {en},
	number = {4},
	urldate = {2025-01-01},
	journal = {Physics of Plasmas},
	author = {Singh, Rameswar and Diamond, P. H.},
	month = apr,
	year = {2020},
	pages = {042308},
	file = {PDF:C\:\\Users\\maple\\OneDrive\\Research\\Papers\\Singh and Diamond - 2020 - When does turbulence spreading matter.pdf:application/pdf},
}

@article{dippolito_convective_2011,
	title = {Convective transport by intermittent blob-filaments: {Comparison} of theory and experiment},
	volume = {18},
	issn = {1070-664X, 1089-7674},
	shorttitle = {Convective transport by intermittent blob-filaments},
	url = {https://pubs.aip.org/pop/article/18/6/060501/387832/Convective-transport-by-intermittent-blob},
	doi = {10.1063/1.3594609},
	abstract = {A blob-filament (or simply “blob”) is a magnetic-field-aligned plasma structure which is considerably denser than the surrounding background plasma and highly localized in the directions perpendicular to the equilibrium magnetic field B. In experiments and simulations, these intermittent filaments are often formed near the boundary between open and closed field lines, and seem to arise in theory from the saturation process for the dominant edge instabilities and turbulence. Blobs become charge-polarized under the action of an external force which causes unequal drifts on ions and electrons; the resulting polarization-induced E × B drift moves the blobs radially outwards across the scrape-off-layer (SOL). Since confined plasmas generally are subject to radial or outwards expansion forces (e.g., curvature and ∇B forces in toroidal plasmas), blob transport is a general phenomenon occurring in nearly all plasmas. This paper reviews the relationship between the experimental and theoretical results on blob formation, dynamics and transport and assesses the degree to which blob theory and simulations can be compared and validated against experiments.},
	language = {en},
	number = {6},
	urldate = {2025-09-16},
	journal = {Physics of Plasmas},
	author = {D’Ippolito, D. A. and Myra, J. R. and Zweben, S. J.},
	month = jun,
	year = {2011},
	pages = {060501},
	file = {PDF:C\:\\Users\\maple\\OneDrive\\Research\\Papers\\D’Ippolito et al. - 2011 - Convective transport by intermittent blob-filaments Comparison of theory and experiment.pdf:application/pdf},
}

@article{burrell_role_2020,
	title = {Role of sheared \textit{{E} × {B}} flow in self-organized, improved confinement states in magnetized plasmas},
	volume = {27},
	issn = {1070-664X, 1089-7674},
	url = {https://pubs.aip.org/pop/article/27/6/060501/153617/Role-of-sheared-E-B-flow-in-self-organized},
	doi = {10.1063/1.5142734},
	abstract = {A major scientific success story of magnetic fusion research in the past several decades has been the theoretical development and experimental testing of the process of turbulence decorrelation and stabilization by sheared E × B flow, which shows that E × B shear effects are ubiquitous in magnetized plasmas. This concept of turbulence decorrelation and stabilization has the universality needed to explain the H-mode edge transport barriers seen in limiter and divertor tokamaks, stellarators, and mirror machines; the broader edge transport barrier seen in VH-mode plasmas; and the core transport barriers formed in tokamaks. Similar effects are seen in linear devices. These examples of confinement improvement are of considerable physical interest; it is not often that a system self-organizes to reduce transport when an additional source of free energy is applied to it. The transport decrease associated with E × B velocity shear is also of great practical benefit to fusion research, since it contributed to substantially increased fusion yield in all DT magnetic fusion experiments conducted to date. The fundamental physics involved in transport reduction is the effect of E × B shear on the growth, radial extent, and phase correlation of turbulent eddies in the plasma. The same basic transport reduction process can be operational in various portions of the plasma because there are a number of ways to change the radial electric field Er. An important secondary theme in this area is the synergistic effect of E × B velocity shear and magnetic shear. Although the E × B velocity shear appears to have an effect on broader classes of microturbulence, magnetic shear can mitigate some potentially harmful effects of E × B velocity shear and facilitate turbulence stabilization. Our present understanding in this area is the result of a multi-decade, intertwined effort in theory, modeling, and diagnostic development combined with continuing experimental investigations. These experiments have clearly demonstrated that increased E × B shear causes reductions in turbulence and transport. The experimental results are generally consistent with the basic theoretical models although considerable work remains to be done before we have a fully predictive theory of transport in magnetized plasmas including E × B shear effects.},
	language = {en},
	number = {6},
	urldate = {2025-09-16},
	journal = {Physics of Plasmas},
	author = {Burrell, K. H.},
	month = jun,
	year = {2020},
	pages = {060501},
	file = {PDF:C\:\\Users\\maple\\OneDrive\\Research\\Papers\\Burrell - 2020 - Role of sheared E × B flow in self-organized, improved confinement states in magnetized plasm.pdf:application/pdf},
}

@book{diamond_modern_2010,
	address = {Cambridge},
	title = {Modern plasma physics: {Volume} 1: {Physical} kinetics of turbulent plasmas},
	isbn = {978-1-107-42456-2 978-0-511-78087-5},
	shorttitle = {Modern plasma physics},
	abstract = {This three-volume series presents the ideas, models and approaches essential to understanding plasma dynamics and self-organization for researchers and graduate students in plasma physics, controlled fusion and related fields such as plasma astrophysics. Volume I develops the physical kinetics of plasma turbulence through a focus on quasi-particle models and dynamics. It discusses the essential physics concepts and theoretical methods for describing weak and strong fluid and phase space turbulence in plasma systems far from equilibrium. The book connects the traditionally 'plasma' topic of weak or wave turbulence theory to more familiar fluid turbulence theory, and extends both to the realm of collisionless phase space turbulence. This gives readers a deeper understanding of these related fields, and builds a foundation for future applications to multi-scale processes of self-organization in tokamaks and other confined plasmas. This book emphasizes the conceptual foundations and physical intuition underpinnings of plasma turbulence theory},
	language = {eng},
	publisher = {Cambridge University Press},
	author = {Diamond, Patrick H. and Itō, Kimitaka and Itō, Sanae-I.},
	year = {2010},
	doi = {10.1017/CBO9780511780875},
	annote = {Title from publisher's bibliographic system (viewed on 05 Oct 2015)},
}

@article{kotschenreuther_gyrokinetic_2019,
    title = {Gyrokinetic analysis and simulation of pedestals to identify the culprits for energy losses using ‘fingerprints’},
    volume = {59},
    issn = {0029-5515, 1741-4326},
    url = {https://iopscience.iop.org/article/10.1088/1741-4326/ab1fa2},
    doi = {10.1088/1741-4326/ab1fa2},
    abstract = {Fusion performance in tokamaks hinges critically on the efficacy of the edge transport barrier (ETB) in suppressing energy losses. The new concept of ‘fingerprints’ is introduced to identify the instabilities that cause transport losses in the ETBs of many of today’s experiments, from among widely posited candidates. Analysis of the gyrokinetic-Maxwell equations and gyrokinetic simulations of experiments reveals that each mode type produces characteristic ratios of transport in the various channels: density, heat, and impurities. This, together with experimental observations of transport in some channel or of the relative size of the driving sources of channels, can identify or determine the dominant modes causing energy transport. In multiple H-mode cases with edge-localized modes that are examined, these fingerprints indicate that magnetohydrodynamic (MHD)-like modes are apparently not the dominant agent of energy transport; rather, this role is played by micro-tearing modes (MTMs) and electron temperature gradient (ETG) modes, and in addition, possibly by ion temperature gradient/ trapped electron modes (ITG/TEM) on JET (Joint European Torus). MHD-like modes may dominate the electron particle losses. Fluctuation frequency can also be an important means of identification, and is often closely related to the transport fingerprint. The analytical arguments unify and explain previously disparate experimental observations on multiple devices, including DIII-D, JET, and ASDEX-U. Detailed simulations of two DIII-D ETBs also demonstrate and corroborate this.},
    language = {en},
    number = {9},
    urldate = {2025-01-01},
    journal = {Nuclear Fusion},
    author = {Kotschenreuther, M. and Liu, X. and Hatch, D.R. and Mahajan, S. and Zheng, L. and Diallo, A. and Groebner, R. and {the DIII-D TEAM} and Hillesheim, J.C. and Maggi, C.F. and Giroud, C. and Koechl, F. and Parail, V. and Saarelma, S. and Solano, E.R. and Chankin, A. and {JET Contributors}},
    month = sep,
    year = {2019},
    pages = {096001},
}

@article{diallo_review_2020,
    title = {Review: {Turbulence} dynamics during the pedestal evolution between edge localized modes in magnetic fusion devices},
    volume = {63},
    issn = {0741-3335},
    shorttitle = {Review},
    url = {https://dx.doi.org/10.1088/1361-6587/abbf85},
    doi = {10.1088/1361-6587/abbf85},
    abstract = {Fusion performance has been shown to be related to the H-mode pedestal structure. The pedestal is associated with steep gradients that are the source of free energy for microinstabilities. A variety of instabilities have been shown to co-exist in the pedestal. This paper reviews the experimentally observed signatures of instabilities during the pedestal parameters’ evolutions, with a focus on turbulence observations, made in between edge localized modes (ELMs), in multiple tokamaks. ELMs are cyclic events associated with bursty relaxations of the pedestal. The multiple machine results point to very similar pedestal localized modes suggesting the existence of a unifying mechanism governing the modes’ onset and dynamics. Modeling efforts to identify the generation mechanism of these instabilities are introduced. Several remaining challenges include the study of the instability saturation mechanisms and impact on the pedestal structure formation. Potential future research avenues will require multiscale-non-linear-gyrokinetic analyses to study the non-linear interaction between these instabilities, and which pedestal conditions facilitate such non-linear interactions, as well as the relationship with the formation of the pedestal through transport.},
    language = {en},
    number = {1},
    urldate = {2025-01-29},
    journal = {Plasma Physics and Controlled Fusion},
    author = {Diallo, A and Laggner, F M},
    month = nov,
    year = {2020},
    note = {Publisher: IOP Publishing},
    pages = {013001},
}

@article{arnichand_quasi-coherent_2014,
    title = {Quasi-coherent modes and electron-driven turbulence},
    volume = {54},
    copyright = {http://iopscience.iop.org/info/page/text-and-data-mining},
    issn = {0029-5515, 1741-4326},
    url = {https://iopscience.iop.org/article/10.1088/0029-5515/54/12/123017},
    doi = {10.1088/0029-5515/54/12/123017},
    abstract = {This letter reports on quasi-coherent (QC) modes observed in ﬂuctuation spectra from Tore Supra and TEXTOR reﬂectometers. QC modes have characteristics in between coherent and broad-band ﬂuctuations as they oscillate around a given frequency but have a wide spectrum. They are ballooned at the LFS midplane and appear usually on a frequency ranging from 30 to 120 kHz. In ohmic plasmas from both tokamaks, QC modes are detected only in linear ohmic conﬁnement (LOC) regime and disappear in saturated ohmic conﬁnement (SOC) regime. Linear simulations from Tore Supra predict that the LOC and SOC regimes are dominated by electron and ion modes respectively. Measurements of the perpendicular velocity of density ﬂuctuations have been made from the top of TEXTOR by poloidal correlation reﬂectometry. They suggest that QC modes have a phase velocity ∼400 m s−1 higher in the electron diamagnetic direction than lower frequency ﬂuctuations. Additionally, the onset of QC modes during electron cyclotron resonance heating has been observed in a Tore Supra region where turbulence is suspected to be driven by electron modes. These experimental results and instability calculations show a correlation between onsets of QC modes and predictions of trapped electron modes.},
    language = {en},
    number = {12},
    urldate = {2025-01-01},
    journal = {Nuclear Fusion},
    author = {Arnichand, H. and Sabot, R. and Hacquin, S. and Krämer-Flecken, A. and Garbet, X. and Citrin, J. and Bourdelle, C. and Hornung, G. and Bernardo, J. and Bottereau, C. and Clairet, F. and Falchetto, G. and Giacalone, J.C.},
    month = dec,
    year = {2014},
    pages = {123017},
}

@article{hornung_eb_2017,
    title = {E×{B} staircases and barrier permeability in magnetised plasmas},
    volume = {57},
    issn = {0029-5515, 1741-4326},
    url = {https://iopscience.iop.org/article/10.1088/0029-5515/57/1/014006},
    doi = {10.1088/0029-5515/57/1/014006},
    abstract = {In-depth experimental characterisation of spontaneous shear flow patterning into a so-called E × B staircase—named after its planetary analogue—is shown in magnetised plasma turbulence, using ultrafast-sweeping reflectometry in the Tore Supra tokamak. Staircase signatures are found in a large variety of L-mode plasma conditions. Sensitivity to the dominant source of free energy is highlighted for the first time. A connection between staircase shear layer permeability and deviation from gyro-Bohm confinement scaling is strongly suggested, opening new routes to understanding confinement in drift-wave turbulence.},
    language = {en},
    number = {1},
    urldate = {2025-01-01},
    journal = {Nuclear Fusion},
    author = {Hornung, G. and Dif-Pradalier, G. and Clairet, F. and Sarazin, Y. and Sabot, R. and Hennequin, P. and Verdoolaege, G.},
    month = jan,
    year = {2017},
    pages = {014006},
}

@article{lee_observation_2018,
	title = {Observation of electron driven quasi-coherent modes and their connection with core intrinsic rotation in {KSTAR} {ECH} and ohmic {L}-mode plasmas},
	volume = {25},
	issn = {1070-664X, 1089-7674},
	url = {https://pubs.aip.org/pop/article/25/2/022513/729128/Observation-of-electron-driven-quasi-coherent},
	doi = {10.1063/1.5008468},
	abstract = {Quasi-coherent (QC) modes, known as a type of the trapped electron mode (TEM) turbulence, have been measured in the outboard core region of low-density electron cyclotron resonant heating (ECH) injected and ohmically heated L-mode plasmas on the Korea Superconducting Tokamak Advanced Research (KSTAR) device. It appears that QC modes of 20–60 kHz occur or strengthen with an increase of the electron temperature to ion temperature ratio, but weaken or are fully suppressed by increased density/collisionality. Toroidal rotation shear, which is strongly related to the density/collisionality, also seems to stabilize the QC modes. Linear gyrokinetic simulations indicate that TEM is the most unstable mode at low densities where the QC modes are observed for both the ECH and ohmic plasmas. At high densities where the QC modes are suppressed, the most unstable mode is the ion temperature gradient (ITG) mode for the ECH plasmas but still TEM for the ohmic plasmas. In the ECH plasmas, it is found that the direction of the core toroidal intrinsic rotation is gradually reversed from the co-current to counter-current direction and the QC mode is suppressed as the line-averaged density increases, which can be explained by the transition of dominant micro-turbulence (TEM to ITG). However, in the ohmically heated plasmas, the acceleration of the core toroidal rotation is not fully explained by the TEM-ITG transition since the direction of the core toroidal rotation is counter-current direction when the QC mode is observed. Furthermore, the core toroidal rotation is accelerated to the counter-current direction even the line-averaged density decreases.},
	language = {en},
	number = {2},
	urldate = {2025-09-16},
	journal = {Physics of Plasmas},
	author = {Lee, J. A. and Lee, W. and Kwon, J. M. and Ko, S. H. and Leem, J. and Yun, G. S. and Park, H. K. and Park, Y. S. and Kim, K. W. and Luhmann, N. C. and {KSTAR Team}},
	month = feb,
	year = {2018},
	pages = {022513},
}

@article{liu_physics_2016,
    title = {The physics mechanisms of the weakly coherent mode in the {Alcator} {C}-{Mod} {Tokamak}},
    volume = {23},
    issn = {1070-664X, 1089-7674},
    url = {https://pubs.aip.org/pop/article/23/12/120703/320657/The-physics-mechanisms-of-the-weakly-coherent-mode},
    doi = {10.1063/1.4972088},
    abstract = {The weakly coherent mode (WCM) in I-mode has been studied by a six-field two-fluid model based on the Braginskii equations under the BOUT++ framework for the first time. The calculations indicate that a tokamak pedestal exhibiting a WCM is linearly unstable to drift Alfven wave (DAW) instabilities and the resistive ballooning mode. The nonlinear simulation shows promising agreement with the experimental measurements of the WCM. The shape of the density spectral and location of the spectral peak of the dominant toroidal number mode n = 20 agrees with the experimental data from reflectometry. The simulated mode propagates in electron diamagnetic direction is consistent with the results from the magnetic probes in the laboratory frame, a large ratio of particle to heat diffusivity is consistent with the distinctive experimental feature of I-mode, and the value of the simulated χe at the edge is in the range of experimental errors of χeff from the experiment. The prediction of the WCM shows that free energy is mainly provided by the electron pressure gradient, which gives guidance for pursuing future I-mode studies.},
    language = {en},
    number = {12},
    urldate = {2025-09-16},
    journal = {Physics of Plasmas},
    author = {Liu, Z. X. and Xu, X. Q. and Gao, X. and Hubbard, A. E. and Hughes, J. W. and Walk, J. R. and Theiler, C. and Xia, T. Y. and Baek, S. G. and Golfinopoulos, T. and Whyte, D. and Zhang, T. and Li, J. G.},
    month = dec,
    year = {2016},
    pages = {120703},
}

@article{herschel_experimental_2024,
    title = {Experimental evidence for the drift wave nature of the weakly coherent mode in {ASDEX} {Upgrade} {I}-mode plasmas},
    volume = {64},
    issn = {0029-5515, 1741-4326},
    url = {https://iopscience.iop.org/article/10.1088/1741-4326/ad4b3b},
    doi = {10.1088/1741-4326/ad4b3b},
    abstract = {Abstract
              The improved energy confinement mode (I-mode) is a potential candidate for future fusion power plants, as it combines ELM-free operation with good confinement. The unusual edge transport and turbulence in this regime is still not fully understood. This study analyzes the turbulent structure of the weakly coherent mode (WCM) in ASDEX Upgrade. Measurements from Doppler back-scattering and a thermal helium beam diagnostic are used to determine velocities of the background plasma and the WCM over multiple discharges. A phase velocity of the WCM of the order of 2–5 km s
              −1
              in the electron diamagnetic drift direction is found, quantitatively close to a drift wave assuming negligible temperature fluctuations. A good agreement with a previously proposed mechanism behind the I-mode regime is observed. This marks the first experimental verification of a specific understanding of the WCM and the I-mode regime.},
    number = {7},
    urldate = {2025-09-16},
    journal = {Nuclear Fusion},
    author = {Herschel, M. and Happel, T. and Wendler, D. and Griener, M. and Kalis, J. and Fischer, R. and Manz, P. and Stroth, U. and {the ASDEX Upgrade Team}},
    month = jul,
    year = {2024},
    pages = {076052},
}

@article{hatch_microtearing_2016,
    title = {Microtearing turbulence limiting the {JET}-{ILW} pedestal},
    volume = {56},
    issn = {0029-5515, 1741-4326},
    url = {https://iopscience.iop.org/article/10.1088/0029-5515/56/10/104003},
    doi = {10.1088/0029-5515/56/10/104003},
    abstract = {The first nonlinear gyrokinetic turbulence simulations that quantitatively reproduce experimental transport levels in an H-mode pedestal are reported. In the JET-ILW (ITER-like wall) pedestal, the bulk of the transport in the steep gradient region is caused by the turbulence driven by the microtearing mode (MTM). Kinetic ballooning modes are found to be in a second-stability regime. With contributions from the neoclassical and electron temperature gradient driven transport, the MTM mechanism reproduces, quantitatively, the experimental power balance across most of the pedestal.},
    language = {en},
    number = {10},
    urldate = {2025-01-01},
    journal = {Nuclear Fusion},
    author = {Hatch, D.R. and Kotschenreuther, M. and Mahajan, S. and Valanju, P. and Jenko, F. and Told, D. and Görler, T. and Saarelma, S.},
    month = oct,
    year = {2016},
    pages = {104003},
}

@article{sasaki_enhancement_2017,
	title = {Enhancement and suppression of turbulence by energetic-particle-driven geodesic acoustic modes},
	volume = {7},
	issn = {2045-2322},
	url = {https://www.nature.com/articles/s41598-017-17011-y},
	doi = {10.1038/s41598-017-17011-y},
	abstract = {Abstract
            We propose a novel mechanism of enhancement of turbulence by energetic-particle-driven geodesic acoustic modes (EGAMs). The dynamics of drift-wave-type turbulence in the phase space is investigated by wave-kinetic equation. Spatially inhomogeneous turbulence in the presence of a transport barrier is considered. We discovered that trapping of turbulence clumps by the EGAMs is the key parameter that determines either suppress or enhance turbulence. In regions where turbulence is unstable, EGAM suppresses the turbulence. In contrast, in the stable region, EGAM traps clumps of turbulence and carries them across the transport barrier, so that the turbulence can be enhanced. The turbulence trapped by EGAMs can propagate independent of the gradients of density and temperature, which leads to non-Fickian transport. Hence, there appear a new global characteristic velocity, the phase velocity of GAMs, for turbulence dynamics, in addition to the local group velocity and that of the turbulence spreading. With these effect, EGAMs can deteriorate transport barriers and affect turbulence substantially. This manuscript provides a basis to consider whether a coherent wave breaks or strengthen transport barriers.},
	language = {en},
	number = {1},
	urldate = {2025-09-16},
	journal = {Scientific Reports},
	author = {Sasaki, M. and Itoh, K. and Hallatschek, K. and Kasuya, N. and Lesur, M. and Kosuga, Y. and Itoh, S.-I.},
	month = dec,
	year = {2017},
	pages = {16767},
}

@article{ernst_broadening_2024,
	title = {Broadening of the {Divertor} {Heat} {Flux} {Profile} in {High} {Confinement} {Tokamak} {Fusion} {Plasmas} with {Edge} {Pedestals} {Limited} by {Turbulence} in {DIII}-{D}},
	volume = {132},
	issn = {0031-9007, 1079-7114},
	url = {https://link.aps.org/doi/10.1103/PhysRevLett.132.235102},
	doi = {10.1103/PhysRevLett.132.235102},
	language = {en},
	number = {23},
	urldate = {2025-01-01},
	journal = {Physical Review Letters},
	author = {Ernst, D. R. and Bortolon, A. and Chang, C. S. and Ku, S. and Scotti, F. and Wang, H. Q. and Yan, Z. and Chen, Jie and Chrystal, C. and Glass, F. and Haskey, S. and Hood, R. and Khabanov, F. and Laggner, F. and Lasnier, C. and McKee, G. R. and Rhodes, T. L. and Truong, D. and Watkins, J.},
	month = jun,
	year = {2024},
	pages = {235102},
	file = {PDF:C\:\\Users\\maple\\OneDrive\\Research\\Papers\\Ernst et al. - 2024 - Broadening of the Divertor Heat Flux Profile in High Confinement Tokamak Fusion Plasmas with Edge Pe.pdf:application/pdf},
}

@article{li_role_2024,
	title = {The role of ion-scale micro-turbulence in pedestal width of the {DIII}-{D} wide-pedestal {QH} mode},
	volume = {65},
	issn = {0029-5515},
	url = {https://dx.doi.org/10.1088/1741-4326/ad91c6},
	doi = {10.1088/1741-4326/ad91c6},
	abstract = {The low-edge rotation, intrinsically ELM-free, and improved confinement wide-pedestal quiescent H-mode (QH-mode), discovered in DIII-D tokamak, has pedestal widths exceeding the EPED-kinetic-ballooning mode (KBM) model scaling typically by at least 25\%. Ion-scale ( {\textless} 1) microturbulence and its role in setting the pedestal structure is investigated using the radially local gyrokinetic code CGYRO. The electromagnetic trapped electron mode (TEM) is unstable at the pedestal top, while plasma beta ( is ∼60\% below the KBM onset threshold and the electron temperature gradient mode is found to be unstable in the peak gradient region. Nonlinear simulation reveals that the ion-scale turbulence could produce electron energy flux consistent with the flux inferred from power balance at the pedestal top, with a reasonable variation of the local shearing rate; and the local neoclassical transport from NEO is dominant over the simulated turbulent transport in the ion energy flux channel. The simulated ion-scale turbulence produces much lower electron energy flux than inferred from experiment in the pedestal peak gradient region. A correction to the EPED-KBM pedestal width scaling is obtained based on the two-dimensional scan of pedestal top plasma beta ( and normalized electron density and temperature scale lengths, , using CGYRO linear simulations. Mode transitions among TEM, micro-tearing mode, ion-temperature gradient mode and KBM, are observed in the 2D scan at the pedestal top. A fixed normalized growth rate for these drift-type modes is taken to determine the pedestal width scaling, which shows good consistency with the QH experimental database on pedestal heights and widths. The onset of KBM instabilities and the local E × B shear suppression criterion set the lower and upper limit for the pedestal width of standard QH-mode, wide-pedestal QH-mode and type-I ELMy H mode. A potentially higher and wider pedestal is expected from the new scaling of pedestal width. This work presents an improved understanding of the ion-scale micro-turbulence of wide-pedestal QH-mode and sheds light on a promising scenario for future reactors, including ITER and beyond.},
	language = {en},
	number = {1},
	urldate = {2025-05-05},
	journal = {Nuclear Fusion},
	author = {Li, Zeyu and Chen, Xi and Jian, Xiang and Ernst, Darin and Xu, Xueqiao and Groebner, R.J. and Wang, Huiqian and Osborne, T.H. and Burrell, K.H. and Team, the DIII-D.},
	month = nov,
	year = {2024},
	note = {Publisher: IOP Publishing},
	pages = {016030},
	file = {IOP Full Text PDF:C\:\\Users\\maple\\Zotero\\storage\\AW3JJMKG\\Li et al. - 2024 - The role of ion-scale micro-turbulence in pedestal width of the DIII-D wide-pedestal QH mode.pdf:application/pdf;Li et al. - 2024 - The role of ion-scale micro-turbulence in pedestal width of the DIII-D wide-pedestal QH mode:C\:\\Users\\maple\\OneDrive\\Research\\Papers\\Li et al. - 2024 - The role of ion-scale micro-turbulence in pedestal width of the DIII-D wide-pedestal QH mode.pdf:application/pdf},
}

@article{chang_role_2024,
	title = {Role of turbulent separatrix tangle in the improvement of the integrated pedestal and heat exhaust issue for stationary-operation tokamak fusion reactors},
	volume = {64},
	copyright = {http://creativecommons.org/licenses/by/4.0},
	issn = {0029-5515, 1741-4326},
	url = {https://iopscience.iop.org/article/10.1088/1741-4326/ad3b1e},
	doi = {10.1088/1741-4326/ad3b1e},
	abstract = {Abstract               The magnetic separatrix surface is designed to provide the final and critical confinement to the hot stationary-operation core plasma in modern tokamak reactors in the absence of an external magnetic perturbation (MP) or transient magneto-hydrodynamic perturbation, while diverting the exhaust heat to divertor plates. All the stationary operational boundary plasma studies and reactor designs have been performed under this assumption. However, there has been a long-standing suspicion that a stationary-operation tokamak plasma even without external MPs or edge localized modes (ELMs) activities may not have a stable closed separatrix surface, especially near the magnetic X-point. Here, the first gyrokinetic numerical observation is reported that the divertor separatrix surface, due to homoclinic tangles caused by intrinsic electromagnetic turbulence, is not a stable closed surface in a stationary operation phase even without MPs or ELMs. Unlike the MP- or ELM-driven homoclinic tangles that could cause deleterious effects to core confinement or divertor plates, it is found that the micro-turbulence driven homoclinic tangles could connect the divertor plasma to the pedestal plasma in a constructive way by broadening the divertor heat-exhaust footprint and weakening the pedestal slope to the ELM-safe direction. Micro-turbulent homoclinic tangles can open a new research direction in understanding and controlling these two most troublesome and non-locally connected edge-plasma issues in a tokamak fusion reactor.},
	number = {5},
	urldate = {2025-07-11},
	journal = {Nuclear Fusion},
	author = {Chang, C.S. and Ku, S. and Hager, R. and Choi, J. and Pugmire, D. and Klasky, S. and Loarte, A. and Pitts, R.A},
	month = may,
	year = {2024},
	note = {Publisher: IOP Publishing},
	pages = {056041},
	file = {Full Text:C\:\\Users\\maple\\Zotero\\storage\\2T4TF7VD\\Chang et al. - 2024 - Role of turbulent separatrix tangle in the improvement of the integrated pedestal and heat exhaust i.pdf:application/pdf},
}

@article{the_ring_group_gulf_1981,
	title = {Gulf {Stream} {Cold}-{Core} {Rings}: {Their} {Physics}, {Chemistry}, and {Biology}},
	volume = {212},
	issn = {0036-8075, 1095-9203},
	shorttitle = {Gulf {Stream} {Cold}-{Core} {Rings}},
	url = {https://www.science.org/doi/10.1126/science.212.4499.1091},
	doi = {10.1126/science.212.4499.1091},
	abstract = {Cyclonic Gulf Stream rings are energetic eddies in the warm Sargasso Sea consisting of a ring of Gulf Stream water surrounding a core of cold Slope Water. Initially a ring core has the characteristics of the Slope Water; it is rich in plants, animals, and nutrients. As a ring decays the Slope Water properties of its core are gradually replaced by those of the Sargasso Sea, where standing crops of plants, animals, and nutrients generally are low. Although the decay rate suggests a rather long lifetime (2 to 4 years), the usual death of a ring comes when it rejoins the Gulf Stream after 6 to 12 months.},
	language = {en},
	number = {4499},
	urldate = {2025-09-16},
	journal = {Science},
	author = {{The Ring Group}},
	month = jun,
	year = {1981},
	pages = {1091--1100},
}

@article{del-castillo-negrete_gyroaverage_2012,
	title = {Gyroaverage effects on nontwist {Hamiltonians}: {Separatrix} reconnection and chaos suppression},
	volume = {17},
	copyright = {https://www.elsevier.com/tdm/userlicense/1.0/},
	issn = {10075704},
	shorttitle = {Gyroaverage effects on nontwist {Hamiltonians}},
	url = {https://linkinghub.elsevier.com/retrieve/pii/S1007570411003881},
	doi = {10.1016/j.cnsns.2011.07.020},
	language = {en},
	number = {5},
	urldate = {2025-09-17},
	journal = {Communications in Nonlinear Science and Numerical Simulation},
	author = {del-Castillo-Negrete, D. and Martinell, J.J.},
	month = may,
	year = {2012},
	pages = {2031--2044},
}

@article{kryukov_finite_2018,
	title = {Finite {Larmor} radius effects on weak turbulence transport},
	volume = {84},
	copyright = {https://www.cambridge.org/core/terms},
	issn = {0022-3778, 1469-7807},
	url = {https://www.cambridge.org/core/product/identifier/S0022377818000351/type/journal_article},
	doi = {10.1017/S0022377818000351},
	abstract = {Transport of test particles in two-dimensional weak turbulence with waves propagating along the poloidal direction is studied using a reduced model. Finite Larmor radius (FLR) effects are included by gyroaveraging over one particle orbit. For low wave amplitudes the motion is mostly regular with particles trapped in the potential wells. As the amplitude increases the trajectories become chaotic and the Larmor radius modifies the orbits. For a thermal distribution of Finite Larmor radii the particle distribution function (PDF) is Gaussian for small
              
                
                  
                  \${\textbackslash}unicode[STIX]\{x1D70C\}\_\{th\}\$
                
              
              (thermal gyroradius) but becomes non-Gaussian for large
              
                
                  
                  \${\textbackslash}unicode[STIX]\{x1D70C\}\_\{th\}\$
                
              
              . However, the time scaling of transport is diffusive, as characterized by a linear dependence of the variance of the PDF with time. An explanation for this behaviour is presented that provides an expression for an effective diffusion coefficient and reproduces the numerical results for large wave amplitudes which implies generalized chaos. When a shear flow is added in the direction of wave propagation, a modified model is obtained that produces free-streaming particle trajectories in addition to trapped ones; these contribute to ballistic transport for low wave amplitude but produce super-ballistic transport in the chaotic regime. As in the previous case, the PDF is Gaussian for low
              
                
                  
                  \${\textbackslash}unicode[STIX]\{x1D70C\}\_\{th\}\$
                
              
              becoming non-Gaussian as it increases. The perpendicular transport presents the same behaviour as in the case with no flow but the diffusion is faster in the presence of the flow.},
	language = {en},
	number = {3},
	urldate = {2025-09-17},
	journal = {Journal of Plasma Physics},
	author = {Kryukov, N. and Martinell, J. J. and del-Castillo-Negrete, D.},
	month = jun,
	year = {2018},
	pages = {905840301},
}

@article{del-castillo-negrete_area_1996,
	title = {Area preserving nontwist maps: periodic orbits and transition to chaos},
	volume = {91},
	copyright = {https://www.elsevier.com/tdm/userlicense/1.0/},
	issn = {01672789},
	shorttitle = {Area preserving nontwist maps},
	url = {https://linkinghub.elsevier.com/retrieve/pii/016727899500257X},
	doi = {10.1016/0167-2789(95)00257-X},
	abstract = {Area preserving nontwist maps, i.e. maps that violate the twist condition, are considered. A representative example, the standard nontwist map that violates the twist condition along a curve called the shearless curve, is studied in detail. Using symmetry lines and involutions, periodic orbits are computed and two bifurcations analyzed: periodic orbit collisions and separatrix reconnection. The transition to chaos due to the destruction of the shearless curve is studied. This problem is outside the applicability of the standard KAM (Kolmogorov-Arnold-Moser) theory. Using the residue criterion we compute the critical parameter values for the destruction of the shearless curve with rotation number equal to the inverse golden mean. The results indicate that the destruction of this curve is fundamentally different from the destruction of the inverse golden mean curve in twist maps. It is shown that the residues converge to a six-cycle at criticality.},
	language = {en},
	number = {1-2},
	urldate = {2025-01-01},
	journal = {Physica D: Nonlinear Phenomena},
	author = {del-Castillo-Negrete, D. and Greene, J.M. and Morrison, P.J.},
	month = mar,
	year = {1996},
	pages = {1--23},
	file = {PDF:C\:\\Users\\maple\\OneDrive\\Research\\Papers\\del-Castillo-Negrete et al. - 1996 - Area preserving nontwist maps periodic orbits and transition to chaos.pdf:application/pdf},
}

@article{del-castillo-negrete_renormalization_1997,
	title = {Renormalization and transition to chaos in area preserving nontwist maps},
	volume = {100},
	copyright = {https://www.elsevier.com/tdm/userlicense/1.0/},
	issn = {01672789},
	url = {https://linkinghub.elsevier.com/retrieve/pii/S016727899600200X},
	doi = {10.1016/S0167-2789(96)00200-X},
	abstract = {The problem of transition to chaos, i.e. the destruction of invariant circles or KAM (Kolmogorov-Amold-Moser) curves, in area preserving nontwist maps is studied within the renormalization group framework. Nontwist maps are maps for which the twist condition is violated along a curve known as the shearless curve. In renormalization language this problem is that of finding and studying the fixed points of the renormalization group operator 7{\textasciitilde} that acts on the space of maps, A simple period-two fixed point of 7{\textasciitilde}, whose basin of attraction contains the nontwist maps for which the shearless curve exists, is found. Also, a critical period-12 fixed point of 7{\textasciitilde}, with two unstable eigenvalues, is found. The basin of attraction of this critical fixed point contains the nontwist maps for which the shearless curve is at the threshold of destruction. This basin defines a new universality class for the transition to chaos in area preserving maps.},
	language = {en},
	number = {3-4},
	urldate = {2025-01-01},
	journal = {Physica D: Nonlinear Phenomena},
	author = {del-Castillo-Negrete, D. and Greene, J.M. and Morrison, P.J.},
	month = feb,
	year = {1997},
	pages = {311--329},
	file = {PDF:C\:\\Users\\maple\\OneDrive\\Research\\Papers\\del-Castillo-Negrete et al. - 1997 - Renormalization and transition to chaos in area preserving nontwist maps.pdf:application/pdf},
}

@article{olson_rings_1991,
    title = {Rings in the {Ocean}},
    volume = {19},
    issn = {0084-6597, 1545-4495},
    url = {https://www.annualreviews.org/doi/10.1146/annurev.ea.19.050191.001435},
    doi = {10.1146/annurev.ea.19.050191.001435},
    language = {en},
    number = {1},
    urldate = {2025-09-18},
    journal = {Annual Review of Earth and Planetary Sciences},
    author = {Olson, Donald B.},
    month = may,
    year = {1991},
    pages = {283--311},
}

@article{pratt_dynamics_1986,
	title = {Dynamics of {Potential} {Vorticity} {Fronts} and {Eddy} {Detachment}},
	volume = {16},
	issn = {0022-3670, 1520-0485},
	url = {http://journals.ametsoc.org/doi/10.1175/1520-0485(1986)016<1101:DOPVFA>2.0.CO;2},
	doi = {10.1175/1520-0485(1986)016<1101:DOPVFA>2.0.CO;2},
	language = {en},
	number = {6},
	urldate = {2025-09-18},
	journal = {Journal of Physical Oceanography},
	author = {Pratt, L. J. and Stern, Melvin E.},
	month = jun,
	year = {1986},
	pages = {1101--1120},
}

@article{gurcan_zonal_2015,
    title = {Zonal flows and pattern formation},
    volume = {48},
    issn = {1751-8113, 1751-8121},
    url = {https://iopscience.iop.org/article/10.1088/1751-8113/48/29/293001},
    doi = {10.1088/1751-8113/48/29/293001},
    abstract = {The general aspects of zonal ﬂow physics, their formation, damping and interplay with quasi two dimensional turbulence are explained in the context of magnetized plasmas and quasi-geostrophic ﬂuids with an emphasis on formation and selection of spatial patterns. General features of zonal ﬂows as they appear in planetary atmospheres, rotating convection experiments and fusion plasmas are reviewed. Detailed mechanisms for excitation and damping of zonal ﬂows, and their effect on turbulence via shear decorrelation is discussed. Recent results on nonlocality and staircase formation are outlined.},
    language = {en},
    number = {29},
    urldate = {2025-01-01},
    journal = {Journal of Physics A: Mathematical and Theoretical},
    author = {Gürcan, Ö D and Diamond, P H},
    month = jul,
    year = {2015},
    pages = {293001},
}

@article{myra_convective_2004,
    title = {Convective transport in the scrape-off-layer by nonthermalized spinning blobs},
    volume = {11},
    issn = {1070-664X, 1089-7674},
    url = {https://pubs.aip.org/pop/article/11/9/4267/438272/Convective-transport-in-the-scrape-off-layer-by},
    doi = {10.1063/1.1774168},
    abstract = {In this paper, two-dimensional blob models of convective transport in the scrape-off-layer (SOL) are generalized to include the internal temperature profile of the blob. This generalization provides a mechanism for blob internal spin and enables consideration of SOL energy transport. Solutions with aligned density and temperature contours satisfy the resulting “hot blob” equations and are considered here. It is shown that spin increases blob coherence, prevents the formation of extended radial streamers or fingers, reduces the radial convection velocity due to mixing and mitigation of the curvature-induced charge polarization, and provides a new mechanism for poloidal motion of the blob. Additionally, spinning blobs are shown to survive as coherent objects in the presence of weak externally sheared flows, and have blob speeds that depend on the sign of the spin relative to the external sheared flow. The work provides strong motivation for investigating the physics of parallel disconnected blobs, and the relationship of spin and disconnection physics to edge localized mode propagation and the density limit.},
    language = {en},
    number = {9},
    urldate = {2025-01-01},
    journal = {Physics of Plasmas},
    author = {Myra, J. R. and D’Ippolito, D. A. and Krasheninnikov, S. I. and Yu, G. Q.},
    month = sep,
    year = {2004},
    pages = {4267--4274},
}

@article{cheng_transport_2023,
	title = {Transport barrier and spinning blob dynamics in the tokamak edge},
	volume = {63},
	issn = {0029-5515, 1741-4326},
	url = {https://iopscience.iop.org/article/10.1088/1741-4326/acdf01},
	doi = {10.1088/1741-4326/acdf01},
	abstract = {Abstract
            
              In this work, we investigate the dynamics of plasma blobs in the edge of magnetic confinement devices using a full-f gyrokinetic particle-in-cell code with X-point geometry. In simulations, the evolution of a seeded blob is followed as it approaches a naturally-forming zonal shear layer near the separatrix, where the blob is stabilized by a large spin induced by the self-consistent adiabatic electron response, and blob bifurcation and trapping are observed during the cross-field propagation of blobs. A new theoretical explanation in both the zonal free and zonal shear layer is constructed, where the dominant
              
                
                  
                
                
                  
                    E
                  
                  ×
                  
                    B
                  
                
                
              
              spin motion is included. A theoretical condition for a transport barrier induced by the interaction between spinning blobs and the zonal shear layer is obtained, and its scaling is verified with simulations. The new theoretical framework, especially the transport barrier, is applicable to explain and predict various experimental phenomena. In particular, the transport barrier condition calculated with experimental parameters demonstrates that the blob radial transport for H mode is smaller than L mode in experiments.},
	number = {8},
	urldate = {2025-09-18},
	journal = {Nuclear Fusion},
	author = {Cheng, Junyi and Myra, James and Ku, Seung-Hoe and Hager, Robert and Chang, Choong-Seock and Parker, Scott},
	month = aug,
	year = {2023},
	pages = {086015},
}

@article{kwon_global_2000,
    title = {Global drift wave map test particle simulations},
    volume = {7},
    issn = {1070-664X, 1089-7674},
    url = {https://pubs.aip.org/pop/article/7/4/1169/263870/Global-drift-wave-map-test-particle-simulations},
    doi = {10.1063/1.873926},
    abstract = {Global drift wave map equations that allow the integration of particle orbits on long time scales are implemented to describe transport. Ensembles of test particles are tracked to simulate the low-confinement mode/reversed shear/enhanced reversed shear plasmas in the Tokamak Fusion Test Reactor (TFTR) tokamak and the Optimized Shear plasma in the Joint European Torus (JET) tokamak. The simulations incorporate a radial electric field, Ēr, obtained from a neoclassical calculation [Zhu et al., Phys. Plasmas 6, 2503 (1999)] and a model for drift wave fluctuations that takes into account change in the mode structure due to Ēr [Taylor et al., Plasma Phys. Controlled Fusion 38, 1999 (1996)]. Steady state particle density profiles along with two different measures of transport, the diffusion coefficient based on a running time average of the particle displacement and that calculated from the mean exit time, are obtained. For either weak or reversed magnetic shear and highly sheared Ēr, particle transport barriers are observed to be established. In the presence of such a transport barrier, it is shown that there is, in general, a difference between the two measures of transport. The difference is explained by a simple model of the transport barrier.},
    language = {en},
    number = {4},
    urldate = {2025-07-18},
    journal = {Physics of Plasmas},
    author = {Kwon, J.-M. and Horton, W. and Zhu, P. and Morrison, P. J. and Park, H.-B. and Choi, D. I.},
    month = apr,
    year = {2000},
    note = {Publisher: AIP Publishing},
    pages = {1169--1180},
}

@article{cao_nearly_2023,
    title = {Nearly integrable flows and chaotic tangles in the {Dimits} shift regime of plasma edge turbulence},
    volume = {30},
    issn = {1070-664X, 1089-7674},
    url = {https://pubs.aip.org/pop/article/30/9/092307/2912708/Nearly-integrable-flows-and-chaotic-tangles-in-the},
    doi = {10.1063/5.0158013},
    abstract = {Transitionally turbulent ﬂows frequently exhibit spatiotemporal intermittency, reﬂecting a complex interplay between driving forces, dissipation, and transport present in these systems. When this intermittency manifests as observable structures and patterns in the ﬂow, the characterization of turbulence in these systems becomes challenging due to the nontrivial correlations introduced into the statistics of the turbulence by these structures. In this work, we use tools from dynamical systems theory to study intermittency in the Dimits shift regime of the ﬂux-balanced Hasegawa–Wakatani (BHW) equations, which models a transitional regime of resistive drift-wave turbulence relevant to magnetically conﬁned fusion plasmas. First, we show in direct numerical simulations that turbulence in this regime is dominated by strong zonal ﬂows and coherent drift-wave vortex structures, which maintain a strong linear character despite their large amplitude. Using the framework of generalized Liouville integrability, we develop a theory of integrable Lagrangian ﬂows in generic ﬂuid and plasma systems and discuss how the observed zonal ﬂows plus drift waves in the BHW system exhibit a form of “near-integrability” originating from a ﬂuid element relabeling symmetry. We further demonstrate that the BHW ﬂows transition from integrability to chaos via the formation of chaotic tangles in the aperiodic Lagrangian ﬂow, and establish a direct link between the “lobes” associated with these tangles and intermittency in the observed turbulent dissipation. This illustrates how utilizing tools from deterministic dynamical systems theory to study convective nonlinearities can explain aspects of the intermittent spatiotemporal structure exhibited by the statistics of turbulent ﬁelds.},
    language = {en},
    number = {9},
    urldate = {2025-01-01},
    journal = {Physics of Plasmas},
    author = {Cao, Norman M. and Qi, Di},
    month = sep,
    year = {2023},
    pages = {092307},
}

@article{staebler_quasilinear_2024,
    title = {Quasilinear theory and modelling of gyrokinetic turbulent transport in tokamaks},
    volume = {64},
    issn = {0029-5515, 1741-4326},
    url = {https://iopscience.iop.org/article/10.1088/1741-4326/ad6ba5},
    doi = {10.1088/1741-4326/ad6ba5},
    abstract = {Abstract
            The theory, development, and validation of reduced quasilinear models of gyrokinetic turbulent transport in the closed flux surface core of tokamaks is reviewed. In combination with neoclassical collisional transport, these models are successful in accurately predicting core tokamak plasma temperature, density, rotation, and impurity profiles in a variety of confinement regimes. Refined experimental tests have been performed to validate the predictions of the quasilinear models, probing changes in the dominant gyrokinetic instabilities, as reflected in fluctuation measurements, cross-phases, and transport properties. These tests continue to produce a deeper understanding of the complex mix of instabilities at both electron and ion gyroradius scales.},
    number = {10},
    urldate = {2025-09-22},
    journal = {Nuclear Fusion},
    author = {Staebler, G. and Bourdelle, C. and Citrin, J. and Waltz, R.},
    month = oct,
    year = {2024},
    pages = {103001},
}

@article{dif-pradalier_validity_2010,
	title = {On the validity of the local diffusive paradigm in turbulent plasma transport},
	volume = {82},
	copyright = {http://link.aps.org/licenses/aps-default-license},
	issn = {1539-3755, 1550-2376},
	url = {https://link.aps.org/doi/10.1103/PhysRevE.82.025401},
	doi = {10.1103/PhysRevE.82.025401},
	language = {en},
	number = {2},
	urldate = {2025-01-01},
	journal = {Physical Review E},
	author = {Dif-Pradalier, G. and Diamond, P. H. and Grandgirard, V. and Sarazin, Y. and Abiteboul, J. and Garbet, X. and Ghendrih, Ph. and Strugarek, A. and Ku, S. and Chang, C. S.},
	month = aug,
	year = {2010},
	pages = {025401},
	file = {PDF:C\:\\Users\\maple\\OneDrive\\Research\\Papers\\Dif-Pradalier et al. - 2010 - On the validity of the local diffusive paradigm in turbulent plasma transport.pdf:application/pdf},
}

@article{dimits_simulation_2000,
    title = {Simulation of ion temperature gradient turbulence in tokamaks},
    volume = {40},
    issn = {0029-5515},
    url = {https://iopscience.iop.org/article/10.1088/0029-5515/40/3Y/329},
    doi = {10.1088/0029-5515/40/3Y/329},
    abstract = {Results are presented from non-linear gyrokinetic simulations of toroidal ion temperature gradient turbulence and transport. The ion thermal ﬂuxes are found to have an oﬀset linear dependence on the temperature gradient and are signiﬁcantly lower than gyroﬂuid or IFS–PPPL model predictions. A new phenomenon of non-linear eﬀective critical gradients larger than the linear instability threshold gradients is observed and is associated with undamped ﬂux surface averaged shear ﬂows. The nonlinear gyrokinetic codes have passed extensive tests, including comparison against independent linear calculations, a series of non-linear convergence tests and a comparison between two independent nonlinear gyrokinetic codes. The most realistic simulations to date used actual reconstructed equilibria from experiments and a model for dilution by impurity and beam ions. These simulations highlight the importance of both self-generated and external E × B ﬂow shear as well as the need for still more physics to be included.},
    language = {en},
    number = {3Y},
    urldate = {2025-01-01},
    journal = {Nuclear Fusion},
    author = {Dimits, A.M and Cohen, B.I and Mattor, N and Nevins, W.M and Shumaker, D.E and Parker, S.E and Kim, C},
    month = mar,
    year = {2000},
    pages = {661--666},
}

@article{ashourvan_formation_2019,
    title = {Formation of a {High} {Pressure} {Staircase} {Pedestal} with {Suppressed} {Edge} {Localized} {Modes} in the {DIII}-{D} {Tokamak}},
    volume = {123},
    url = {https://link.aps.org/doi/10.1103/PhysRevLett.123.115001},
    doi = {10.1103/PhysRevLett.123.115001},
    abstract = {We observe the formation of a high-pressure staircase pedestal (≈16–20 kPa) in the DIII-D tokamak when large amplitude edge localized modes are suppressed using resonant magnetic perturbations. The staircase pedestal is characterized by a flattening of the density and temperature profiles in midpedestal creating a two-step staircase pedestal structure correlated with the appearance of midpedestal broadband fluctuations. The pedestal oscillates between the staircase and single-step structure every 40–60 ms, correlated with oscillations in the heat and particle flux to the divertor. Gyrokinetic analysis using the cgyro code shows that when the heat and particle flux to the divertor decreases, the pedestal broadens and the �� ×�� shear at the midpedestal decreases, triggering a transport bifurcation from the kinetic ballooning mode (KBM) to trapped electron mode (TEM) limited transport that flattens the density and temperature profiles at midpedestal and results in the formation of the staircase pedestal. As the heat flux to the divertor increases, the pedestal narrows and the �� ×�� shear at the midpedestal increases, triggering a back transition from TEM to KBM limited transport. The pedestal pressure increases during the staircase phase, indicating that enhanced midpedestal turbulence can be beneficial for confinement.},
    number = {11},
    urldate = {2025-02-26},
    journal = {Physical Review Letters},
    author = {Ashourvan, Arash and Nazikian, R. and Belli, E. and Candy, J. and Eldon, D. and Grierson, B. A. and Guttenfelder, W. and Haskey, S. R. and Lasnier, C. and McKee, G. R. and Petty, C. C.},
    month = sep,
    year = {2019},
    note = {Publisher: American Physical Society},
    pages = {115001},
}

@article{escande1985,
  title={Stochasticity in classical Hamiltonian systems: universal aspects},
  author={Escande, Dominique F},
  journal={Physics Reports},
  volume={121},
  number={3-4},
  pages={165},
  year={1985},
  publisher={Elsevier}
}

@article{antonenas_analytical_2024,
	title = {Analytical calculation of the kinetic \textit{q} factor and resonant response of toroidally confined plasmas},
	volume = {31},
	issn = {1070-664X, 1089-7674},
	url = {https://pubs.aip.org/pop/article/31/10/102302/3317547/Analytical-calculation-of-the-kinetic-q-factor-and},
	doi = {10.1063/5.0222886},
	abstract = {Symmetry-breaking perturbations in axisymmetric toroidal plasma configurations have a drastic impact on particle, energy, and momentum transport in fusion devices, thereby affecting their confinement properties. The perturbative modes strongly affect particles with specific kinetic characteristics through resonant mode–particle interactions. In this work, we present an analytical calculation of the kinetic q factor, enabling the identification of particles with kinetic properties that meet the resonant conditions. This allows us to predict the locations and structures of the corresponding resonant island chains, as well as the existence of transport barriers in the particle phase space. The analytical results, derived for the case of a large aspect ratio configuration, are systematically compared to numerical simulations, and their domain of validity is thoroughly investigated and explained. Our findings demonstrate that calculating the kinetic q factor and its dependence on both particle and magnetic field characteristics provides a valuable tool for understanding and predicting the resonant plasma response to non-axisymmetric perturbations. Moreover, this approach can be semi-analytically applied to generic realistic experimental equilibria, offering a low-computational-cost method for scenario investigations under various multi-scale perturbative modes.},
	language = {en},
	number = {10},
	urldate = {2025-10-29},
	journal = {Physics of Plasmas},
	author = {Antonenas, Y. and Anastassiou, G. and Kominis, Y.},
	month = oct,
	year = {2024},
	pages = {102302},
}

@book{haro_parameterization_2016,
	address = {Cham},
	series = {Applied {Mathematical} {Sciences}},
	title = {The {Parameterization} {Method} for {Invariant} {Manifolds}: {From} {Rigorous} {Results} to {Effective} {Computations}},
	volume = {195},
	copyright = {http://www.springer.com/tdm},
	isbn = {978-3-319-29660-9 978-3-319-29662-3},
	shorttitle = {The {Parameterization} {Method} for {Invariant} {Manifolds}},
	url = {http://link.springer.com/10.1007/978-3-319-29662-3},
	language = {en},
	urldate = {2025-10-30},
	publisher = {Springer International Publishing},
	author = {Haro, Àlex and Canadell, Marta and Figueras, Jordi-Lluis and Luque, Alejandro and Mondelo, Josep Maria},
	year = {2016},
	doi = {10.1007/978-3-319-29662-3},
}

@misc{ruth_robust_2025,
	title = {Robust computation of higher-dimensional invariant tori from individual trajectories},
	url = {http://arxiv.org/abs/2505.08715},
	doi = {10.48550/arXiv.2505.08715},
	abstract = {We present a method for computing invariant tori of dimension greater than one. The method uses a single short trajectory of a dynamical system without any continuation or initial guesses. No preferred coordinate system is required, meaning the method is practical for physical systems where the user does not have much {\textbackslash}textit\{a priori\} knowledge. Three main tools are used to obtain the rotation vector of the invariant torus: the reduced rank extrapolation method, Bayesian maximum a posteriori estimation, and a Korkine-Zolatarev lattice basis reduction. The parameterization of the torus is found via a least-squares approach. The robustness of the algorithm is demonstrated by accurately computing many two-dimensional invariant tori of a standard map example. Examples of islands and three-dimensional invariant tori are shown as well.},
	urldate = {2025-10-30},
	publisher = {arXiv},
	author = {Ruth, Maximilian and Kulik, Jackson and Burby, Joshua},
	month = may,
	year = {2025},
	note = {arXiv:2505.08715 [math]},
	keywords = {Mathematics - Dynamical Systems, Physics - Computational Physics},
}

@article{sander_birkhoff_2020,
	title = {Birkhoff averages and rotational invariant circles for area-preserving maps},
	volume = {411},
	issn = {01672789},
	url = {https://linkinghub.elsevier.com/retrieve/pii/S016727891930836X},
	doi = {10.1016/j.physd.2020.132569},
	language = {en},
	urldate = {2025-12-22},
	journal = {Physica D: Nonlinear Phenomena},
	author = {Sander, E. and Meiss, J.D.},
	month = oct,
	year = {2020},
	pages = {132569},
}

@article{angioni_particle_2009,
	title = {Particle pinch and collisionality in gyrokinetic simulations of tokamak plasma turbulence},
	volume = {16},
	issn = {1070-664X, 1089-7674},
	url = {https://pubs.aip.org/pop/article/16/6/060702/262630/Particle-pinch-and-collisionality-in-gyrokinetic},
	doi = {10.1063/1.3155498},
	abstract = {The generic problem of how, in a turbulent plasma, the experimentally relevant conditions of a particle flux very close to the null are achieved, despite the presence of strong heat fluxes, is addressed. Nonlinear gyrokinetic simulations of plasma turbulence in tokamaks reveal a complex dependence of the particle flux as a function of the turbulent spatial scale and of the velocity space as collisionality is increased. At experimental values of collisionality, the particle flux is found close to the null, in agreement with the experiment, due to the balance between inward and outward contributions at small and large scales, respectively. These simulations provide full theoretical support to the prediction of a peaked density profile in a future nuclear fusion reactor.},
	language = {en},
	number = {6},
	urldate = {2025-01-01},
	journal = {Physics of Plasmas},
	author = {Angioni, C. and Candy, J. and Fable, E. and Maslov, M. and Peeters, A. G. and Waltz, R. E. and Weisen, H.},
	month = jun,
	year = {2009},
	pages = {060702},
	file = {PDF:C\:\\Users\\maple\\OneDrive\\Research\\Papers\\Angioni et al. - 2009 - Particle pinch and collisionality in gyrokinetic simulations of tokamak plasma turbulence.pdf:application/pdf},
}

@article{gonzalez_efficient_2022,
	title = {Efficient and {Reliable} {Algorithms} for the {Computation} of {Non}-{Twist} {Invariant} {Circles}},
	volume = {22},
	issn = {1615-3375, 1615-3383},
	url = {https://link.springer.com/10.1007/s10208-021-09517-9},
	doi = {10.1007/s10208-021-09517-9},
	language = {en},
	number = {3},
	urldate = {2025-10-02},
	journal = {Foundations of Computational Mathematics},
	author = {González, Alejandra and Haro, Àlex and De La Llave, Rafael},
	month = jun,
	year = {2022},
	pages = {791--847},
	file = {PDF:C\:\\Users\\maple\\OneDrive\\Research\\Papers\\González et al. - 2022 - Efficient and Reliable Algorithms for the Computation of Non-Twist Invariant Circles.pdf:application/pdf},
}

@article{wurm_meanders_2005,
    title = {Meanders and reconnection–collision sequences in the standard nontwist map},
    volume = {15},
    issn = {1054-1500, 1089-7682},
    url = {https://pubs.aip.org/cha/article/15/2/023108/922845/Meanders-and-reconnection-collision-sequences-in},
    doi = {10.1063/1.1915960},
    abstract = {New global periodic orbit collision and separatrix reconnection scenarios exhibited by the standard nontwist map are described in detail, including exact methods for determining reconnection thresholds, methods that are implemented numerically. Results are compared to a parameter space breakup diagram for shearless invariant curves. The existence of meanders, invariant tori that are not graphs, is demonstrated numerically for both odd and even period reconnection for certain regions in parameter space. Implications for transport are discussed.},
    language = {en},
    number = {2},
    urldate = {2025-06-19},
    journal = {Chaos: An Interdisciplinary Journal of Nonlinear Science},
    author = {Wurm, A. and Apte, A. and Fuchss, K. and Morrison, P. J.},
    month = jun,
    year = {2005},
    pages = {023108},
}

\end{document}